\documentclass[a4paper,11pt]{article}
\usepackage{typearea}\typearea{12}
\usepackage{amsmath,amsfonts,amssymb}
\usepackage{graphicx}
\usepackage{hyperref}
\hypersetup{hidelinks,
 pdftitle={On the uniqueness of translation-invariant ground states of the S=1 Heisenberg and J-D antiferromagnetic chains},
 pdfauthor={Hal Tasaki}}

\newtheorem{theorem}{Theorem}[section]
\newtheorem{lemma}[theorem]{Lemma}
\newtheorem{proposition}[theorem]{Proposition}

\makeatletter
\@addtoreset{equation}{section}
\makeatother

\def\eq#1\en{\begin{equation}#1\end{equation}}
\def\eqa#1\ena{\begin{align}#1\end{align}}
\def\eqg#1\eng{\begin{gather}#1\end{gather}}
\newcommand{\lb}[1]{\label{e:#1}}
\newcommand{\rlb}[1]{\eqref{e:#1}}
\newcommand{\nl}{\notag\\}
\newcommand{\qedm}{\ensuremath{\Box}}
\newenvironment{proof}{\par\medskip\noindent{\em Proof.}\ }{\hfill\qedm\par\medskip}

\newcommand{\abs}[1]{\left|#1\right|}
\newcommand{\norm}[1]{\left\Vert#1\right\Vert}

\newcommand{\bbC}{\mathbb C}
\newcommand{\bbR}{\mathbb R}
\newcommand{\bbZ}{\mathbb Z}

\newcommand{\calA}{\mathfrak A}
\newcommand{\Aloc}{\calA_{\rm loc}}
\newcommand{\calK}{\mathcal K}
\newcommand{\Tr}{\operatorname{Tr}}
\newcommand{\tr}{\operatorname{tr}}
\newcommand{\per}{\mathrm{per}}
\newcommand{\opp}{\mathrm{opp}}
\newcommand{\fld}{\mathrm{fld}}
\newcommand{\free}{\mathrm{free}}
\newcommand{\spn}{\boldsymbol S}
\newcommand{\Sx}{S^{\mathrm x}}
\newcommand{\Sy}{S^{\mathrm y}}
\newcommand{\Sz}{S^{\mathrm z}}
\newcommand{\one}{\mathbf 1}
\newcommand{\Ux}{U_{\mathrm x}}
\makeatletter
\long\def\@makecaption#1#2{{\small
\advance\leftskip1cm
\advance\rightskip1cm
\vskip\abovecaptionskip
\sbox\@tempboxa{#1: #2}%
\ifdim \wd\@tempboxa >\hsize
 #1: #2\par
\else
 \global\@minipagefalse
 \hb@xt@\hsize{\hfil\box\@tempboxa\hfil}%
\fi
\vskip\belowcaptionskip}}
\makeatother

\begin{document}
\noindent
{\Large\bfseries\raggedright On the uniqueness of translation-invariant ground states of the $S=1$ Heisenberg and $J$--$D$ antiferromagnetic chains\par}

\renewcommand{\thefootnote}{\fnsymbol{footnote}}
\medskip\noindent
Hal Tasaki\footnote{Department of Physics, Gakushuin University,
Mejiro, Toshima-ku, Tokyo 171-8588, Japan.
E-mail: \texttt{hal.tasaki@gmail.com}.}
\renewcommand{\thefootnote}{\arabic{footnote}}
\setcounter{footnote}{0}

\begin{quotation}
\small\noindent
Using the analytic machinery and numerical inputs of OpenAI's proofs
of the Haldane gap, we prove uniqueness of the translation-invariant
ground state of the infinite $S=1$ Heisenberg antiferromagnetic chain
and of the $J$-$D$ model for sufficiently small $|D|$. Uniqueness is
among all translation-invariant ground states in the operator-algebraic
sense, without assuming a gap or spin-rotation symmetry for a competing
state. The proof combines concentration of the spatial transfer
spectrum, an exponentially small low-temperature free-energy correction,
and a Golden--Thompson comparison for separated local observables.
The unique translation-invariant state is pure, locally unique and
gapped, and continuous in $D$. The ground states of even periodic
chains and odd open chains with equal positive endpoint fields converge
to this state along their full sequences, uniformly in $D$, with a
stretched-exponential local rate. Every choice of ground states of
open chains without endpoint magnetic fields has the same local limit. These conclusions do not
establish uniqueness among arbitrary nontranslation-invariant ground
states of the infinite chain.

The author formulated the problems and proposed the core strategies.
ChatGPT Astra made substantial contributions to developing the proofs
and preparing the manuscript.
\end{quotation}

{\small\tableofcontents}

\section{Introduction}\label{S:introduction}

On October 6, 2026, OpenAI released a collection of mathematical
manuscripts that included two papers on the Haldane gap
\cite{OpenAIperiodic,OpenAIboundary}; see also the account in \cite{Tasaki_Haldane}.
These papers establish uniform gaps above unique
ground states for the $S=1$ antiferromagnetic Heisenberg chain on
sufficiently large even periodic rings and on sufficiently large odd
open chains with a fixed magnetic field at each endpoint.
The periodic paper \cite{OpenAIperiodic} settles the standard even-periodic
finite-chain formulation of the conjecture posed by Haldane more than
forty years earlier \cite{Haldane1981,Haldane1983a,Haldane1983b}.
See, e.g., \cite{TasakiBook} for the history and background.
The open-chain paper \cite{OpenAIboundary} establishes the boundary-field
version of the gap hypothesis formulated in \cite{Tasaki2025} to connect
a spectral gap to nontrivial topological properties of the ground state.
Its implications for the operator-algebraic index of the $S=1$
antiferromagnetic Heisenberg chain have already been discussed in
\cite{TasakiEquivalence}.

In studying the proofs in \cite{OpenAIperiodic,OpenAIboundary}, we
came to realize that the underlying methods may be useful for
investigating aspects of quantum many-body systems quite different
from the existence of a spectral gap. The particularly useful ingredients are a
self-adjoint spatial transfer operator, the probability distribution
formed from its even spectral moments, and the coupled improvement of
spatial and temporal purities. The temporal distribution describes the
physical Gibbs state, whereas the spatial distribution describes
transfer in the auxiliary space of bond histories. Concentration in
both directions contains information beyond the existence of a
finite-chain energy gap.

In this paper we extract such information to prove uniqueness among
all translation-invariant ground states of the infinite chain.
A competing translation-invariant ground state need not be pure,
gapped, or invariant under spin rotations. The conclusion persists
under a sufficiently small single-ion anisotropy.
We then show that the
periodic and boundary-field constructions converge to this state along
their full sequences, and that the state is continuous in the
anisotropy parameter. The boundary-field conclusion follows by proving
that every subsequential limit is translation invariant and applying
the translation-invariant uniqueness theorem. We also obtain the same local limit for
arbitrary choices of ground states of open chains without endpoint
magnetic fields.

The presentation in this paper follows the language and notation of our
continuous-time expository note \cite{TasakiCT}, referred to below as
the \emph{CT note}. We recall the transfer construction and the
periodic purity argument needed here, so the analytic proof does not
presuppose familiarity with that note.
The numerical thermal and variational certificates of
\cite{OpenAIperiodic,OpenAIboundary} remain imported inputs.

\medskip
A key observation for the infinite-chain limit is that, at the low
temperatures used below and sufficiently far from the endpoints,
local Gibbs expectations are governed by the dominant eigenvector
of the spatial transfer operator $K_{\beta,D}$.
Finite-chain Hamiltonians depend on the boundary conditions, whereas
this operator for the interior of the chain does not. Domination by
its leading eigenvector therefore makes local Gibbs expectations
asymptotically independent of the endpoints and translation
invariant. Quantitative estimates at suitable joint length and
temperature scales allow these properties to pass to ground states
of chains with endpoint fields. For open chains without endpoint
fields, the Gibbs purity need only remain bounded below by a positive
constant. Purity of the infinite-chain state then identifies the
limit of every choice of ground vectors, without requiring the
finite-chain Gibbs state to approach a rank-one projection.

The main new observation for the uniqueness proof is that the same
concentration estimates
control more than one local insertion. By placing copies of a local
observable in widely separated blocks, we bound the logarithm of their
exponential moment per block. The comparison of leading transfer
eigenvalues at consecutive temperatures also shows that the
free-energy density approaches the ground-state energy density
with an exponentially small error.
The minimum-energy characterization of translation-invariant ground
states \cite{BKR,BR2}, in an elementary form proved below, and the
Golden--Thompson inequality then force every such state to have the
same local expectations. Importantly, this argument does not assume
that the fixed-temperature comparison states already converge as the
temperature tends to zero.

\medskip 
There is a useful precedent in the AKLT paper
\cite[Section~2.4, Theorem~2.7 and the following remark]{AKLT1988}.
There, uniqueness among translation-invariant ground states in the
operator-algebraic sense follows by combining the minimum-energy
characterization with the exact structure of the finite-chain VBS
ground-state spaces. Frustration-freeness forces every restriction
of a competing state to be supported on the corresponding
zero-energy space, and the exact overlap formulas determine its
local expectations. Here we use the same variational starting point,
but replace those exact local constraints by a sufficiently sharp
low-temperature free-energy estimate and concentration of the
spatial transfer operator.

Our argument also has structural parallels with two works of Ogata.
Her large-deviation analysis of quantum spin chains
\cite{OgataLD} uses transfer operators to study logarithmic
exponential moments of sums of local observables. A variational
comparison between macroscopic expectations and exponential traces
appears in her work on approximating macroscopic observables by
commuting matrices \cite[Lemma~2.3]{OgataCommuting}.
We use an elementary form of this comparison together with the
low-temperature estimates to identify an arbitrary
translation-invariant ground state. We do not require a large-deviation
principle or a commuting approximation.

\medskip

We first derive the transfer representations of local expectations,
then prove the main theorem using only periodic information. The
boundary arguments follow separately. All analytical details are in
the main text; the appendix verifies the finite numerical criteria.
Applications to topological indices and phase transitions are left
to a forthcoming revision of the companion paper
\cite{TasakiEquivalence}.

\medskip\noindent
{\em Use of AI.}\/ 
After studying the analytic gap proofs and publishing the expository
notes listed in \cite{Tasaki_Haldane}, I realized that the same
machinery might control the infinite-chain limit of ground-state
expectation values. I asked ChatGPT Astra (OpenAI) to compare the
two finite-chain constructions and suggested investigating
convergence along the full sequences. ChatGPT developed proofs of
these statements and also proved continuity in $D$. After substantial
revision, this led to the first version of the present work
\cite{Tasaki_JD_v1}.
Soon after completing that version, I asked whether the OpenAI
machinery could also treat open chains without endpoint fields,
for which one does not expect a unique ground state separated from
the rest of the spectrum by a gap uniform in the chain length.
After obtaining an unexpected affirmative answer, corresponding
essentially to Section~\ref{S:free}, I asked ChatGPT to investigate
uniqueness by combining domination by a single transfer eigenvector
with general results available in \cite{BR1,BR2}.
ChatGPT then found the separated-insertion argument and its
connection with the minimum-energy principle, presented in
Section~\ref{S:uniqueness}, and completed the uniqueness proof.
I regard this argument as beautiful. For the final theorem, my role
was comparable to that of a senior scientist suggesting a suitable
series of problems and readings to a brilliant postdoctoral
researcher. Following my suggestions for the organization, ChatGPT
prepared the first draft, which I subsequently revised with its
assistance. I have checked the mathematical arguments and take full
responsibility for the paper. The numerical certificates imported
from \cite{OpenAIperiodic,OpenAIboundary} have not been independently
recomputed here.

\section{Definitions and main results}\label{S:results}

\subsection{The infinite chain and finite chains}

Let us consider a chain of $S=1$ spins on $\bbZ$. Associate $\bbC^3$
with each site, and write $\spn_j=(\Sx_j,\Sy_j,\Sz_j)$ for the spin
operators, normalized by $\spn_j^2=2\one$. Put $Q_j=(\Sz_j)^2$, so
that $0\le Q_j\le\one$. For a finite $\Lambda\subset\bbZ$, let
$\calA_\Lambda=\bigotimes_{j\in\Lambda}M_3(\bbC)$.
The local algebra $\Aloc$ is the union of these matrix algebras, with
the usual identifications by tensoring with the identity, and the
$C^*$-algebra $\calA$ is its norm completion. We identify a finite
interval $[s,t]$ of the chain with $\{s,s+1,\ldots,t\}$, where
$s,t\in\bbZ$.

We study the $J$-$D$ model with the formal infinite-chain Hamiltonian
\eq
H(D)=\sum_{j\in\bbZ}\{\spn_j\cdot\spn_{j+1}+DQ_j\},
\qquad D\in\bbR.
\lb{Hbulk}
\en
We have set the exchange coupling $J=1$, so $D$ is the only bulk
parameter. For a local operator $A$, the commutator $[H(D),A]$ is a
finite sum. A state $\omega$ on $\calA$ is a ground state if
\eq
\omega\bigl(A^*[H(D),A]\bigr)\ge0\qquad(A\in\Aloc).
\lb{GSdefinition}
\en
For the present finite-range interaction this is the standard
operator-algebraic ground-state condition; the local algebra is a core
for the generator of the dynamics \cite{BR2,TasakiBook}.
Let $\tau_x$ denote translation by $x\in\bbZ$, so
$\tau_x(S_j^\alpha)=S_{j+x}^\alpha$. Translation invariance means
$\omega\circ\tau_x=\omega$ for every integer $x$.
Our main task is to prove uniqueness in this class, without additional
assumptions on the state.

We next define finite-chain Hamiltonians with three boundary
conditions. For an integer $N\ge2$, define the centered interval
with $N$ sites by
\eq
\Lambda_N=
\begin{cases}
 \{-N/2,\ldots,N/2-1\},&N\text{ even},\\
 \{-(N-1)/2,\ldots,(N-1)/2\},&N\text{ odd}.
\end{cases}
\lb{centeredinterval}
\en
We shall also write
\eq
\Lambda_N=\{\ell_N,\ldots,r_N\}.
\lb{ellNrN}
\en
For every $N\ge4$, we define the periodic Hamiltonian by
\eq
H_{N}^{\per}(D)
 =\sum_{j\in\Lambda_N}
 \{\spn_j\cdot\spn_{j+1}+DQ_j\},
 \qquad \spn_{r_N+1}=\spn_{\ell_N},
 \lb{Hper}
 \en
and, for every $N\ge2$ and a fixed field strength $h_0>0$,
define the open-chain Hamiltonians by
 \eqg
 H_{N}^{\fld}(D)
 =\sum_{j=\ell_N}^{r_N-1}\spn_j\cdot\spn_{j+1}
   +D\sum_{j\in\Lambda_N}Q_j-h_0(\Sz_{\ell_N}+\Sz_{r_N}),
 \lb{Hfield}\\
 H_{N}^{\free}(D)
 =\sum_{j=\ell_N}^{r_N-1}\spn_j\cdot\spn_{j+1}
   +D\sum_{j\in\Lambda_N}Q_j.
   \lb{Hfree}
 \eng
In all three constructions, the endpoints or closing bond recede
from every fixed neighborhood of the origin as $N\uparrow\infty$.

We are primarily interested in the even periodic chain
$H_{2L}^{\per}(D)$, with $L\ge2$, and the odd open chain
$H_{2L+1}^{\fld}(D)$, with $L\ge1$.
In the latter, the same magnetic field in the $\mathrm z$ direction
acts at both endpoints; the single-ion term is present there as well.
This is the boundary condition used in \cite{Tasaki2025} to study
the topological character of the ground state.
The OpenAI papers \cite{OpenAIperiodic,OpenAIboundary} establish
uniform gaps above unique ground states of $H_{2L}^{\per}(0)$ and
$H_{2L+1}^{\fld}(0)$, respectively, at all sufficiently large $L$;
the latter result uses $h_0=3/5$. The same finite-chain criteria
remain valid for sufficiently small $|D|$ by continuity of the
finitely many traces entering them. We give the details in
Appendix~\ref{S:numerics}; continuity of individual eigenvalues
at each fixed length would not by itself imply a uniform gap.

We also study $H_N^{\free}(D)$ at both parities. For small $|D|$,
this Hamiltonian is expected to have four nearly degenerate
lowest-energy states. At $D=0$ and even $N$, these consist of the
singlet ground state and the Kennedy triplet; see \cite{TasakiBook}.
The near degeneracy is associated with an effective $S=1/2$ spin
near each endpoint. The fields in \rlb{Hfield} are introduced to
remove this freedom at the endpoints. All four low-lying states
are expected to have the same infinite-chain limit. Our result
below concerns arbitrary choices of actual ground vectors and
does not assume this four-state description.

Uniqueness is known independently for some of these finite chains.
In particular, $H_{2L}^{\per}(D)$ has a unique ground state for
every real $D$ and every $L\ge2$; see, e.g.,
\cite[Theorem~2.4]{TasakiBook}. We use this classical result to
obtain translation-invariant ground states of the infinite chain
without using a gap estimate. The estimates below also establish
the finite-chain uniqueness needed here, together with a uniform
gap, for sufficiently small $|D|$ and sufficiently large lengths.
Whenever the ground eigenspace is one-dimensional, choose
normalized ground vectors $\Phi_{2L,D}^{\per}$ and
$\Phi_{2L+1,D}^{\fld}$, and define
\eqa
\omega_{2L,D}^{\per}(A)
 &=\langle\Phi_{2L,D}^{\per},A\Phi_{2L,D}^{\per}\rangle,
 \lb{omegaper}\\
\omega_{2L+1,D}^{\fld}(A)
 &=\langle\Phi_{2L+1,D}^{\fld},A\Phi_{2L+1,D}^{\fld}\rangle.
 \lb{omegafield}
\ena
Each expression is defined as soon as the support of $A$ lies in the
corresponding interval. In particular, every $A\in\Aloc$ can be used
once $L$ is large enough.

The proof also uses finite-temperature Gibbs states.
For $b\in\{\per,\fld,\free\}$, inverse temperature $\beta>0$, and an
operator $A$ supported in $\Lambda_N$, define
\eq
\rho_{N,\beta,D}^{b}(A)
 =\frac{\Tr_{\mathcal H_N}\bigl(Ae^{-\beta H_N^b(D)}\bigr)}
        {\Tr_{\mathcal H_N}e^{-\beta H_N^b(D)}},
\qquad
\mathcal H_N=\bigotimes_{j\in\Lambda_N}\bbC^3.
\lb{Gibbsdefinition}
\en
Whenever $H_N^b(D)$ has a unique ground state, write its
expectation as $\omega_{N,D}^b$, consistently with
\rlb{omegaper} and \rlb{omegafield}. At that fixed length,
\eq
\omega_{N,D}^{b}(A)=\lim_{\beta\uparrow\infty}\rho_{N,\beta,D}^{b}(A).
\lb{fixedvolumecold}
\en
This elementary fixed-length limit does not by itself control the
joint increase of chain length and inverse temperature needed here.

\subsection{The main theorem and the three finite-chain limits}

We first state the uniqueness theorem for the infinite chain.
\begin{theorem}[Translation-invariant uniqueness]
\label{T:main}
There are constants $D_0,\gamma_0>0$ such that, for every
$D\in[-D_0,D_0]$, $H(D)$ has exactly one translation-invariant
ground state $\omega_D$. Uniqueness is asserted among all
translation-invariant states satisfying \rlb{GSdefinition}.
No spectral-gap or spin-rotation symmetry assumption is imposed
on a competing state.

The state $\omega_D$ is pure and is a locally-unique gapped ground
state: for every $A\in\Aloc$,
\eq
\omega_D\bigl(A^*[H(D),A]\bigr)
\ge\gamma_0\{\omega_D(A^*A)-|\omega_D(A)|^2\}.
\lb{localgap}
\en
It is invariant under spatial reflections, all uniform rotations about
the $\mathrm z$ axis, and the uniform $\pi$ rotation about the
$\mathrm x$ axis. At $D=0$ it is invariant under all spin rotations.
For every $A\in\calA$, the map $D\mapsto\omega_D(A)$ is continuous.
\end{theorem}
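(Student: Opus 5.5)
The plan follows the strategy sketched in the introduction. We first produce a candidate and then show that every translation-invariant ground state coincides with it.

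\emph{Existence.} Take the periodic ground states $\omega_{2L,D}^{\per}$, which are unique for every $L\ge2$ by \cite[Theorem~2.4]{TasakiBook}. Each is translation invariant on its ring. Any weak-$*$ limit point as $L\uparrow\infty$ is therefore a translation-invariant state, and since the commutator $[H(D),A]$ is local it satisfies \rlb{GSdefinition}. So the set of translation-invariant ground states is nonempty. It is also convex and weak-$*$ compact.

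\emph{Minimum-energy characterization.} Next I prove the elementary form of the principle \cite{BKR,BR2}. A translation-invariant state $\omega$ is a ground state if and only if
\[
\omega(h_0)=e_0(D):=\lim_{N}\frac{1}{N}\min\operatorname{spec}H_N^{\per}(D),
\qquad h_0=\spn_0\cdot\spn_1+DQ_0 .
\]
The direction needed here is $\omega(h_0)\le e_0$. For a ground state I would show this by bounding the restriction of $\omega$ to a block of length $n$ against the open-chain ground energy, up to an $O(1)$ boundary term, and then dividing by $n$.

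\emph{Uniqueness.} This is the heart of the proof. Fix a self-adjoint local $A$ and a real parameter $t$. Place translates $\tau_{x_k}(A)$ in $m$ blocks that are widely separated inside a periodic ring of length $N$. For any translation-invariant ground state $\omega$, apply Gibbs' variational inequality (entropy is nonnegative, and the restricted density matrix of $\omega$ lives on the ring, modulo boundary bonds). Combining it with the Golden--Thompson inequality gives
\[
t\,m\,\omega(A)-\beta N e_0-O(\beta)\le \log\Tr\exp\Bigl(-\beta H_N^{\per}+t\sum_k\tau_{x_k}(A)\Bigr).
\]
I then expand the right-hand side with the spatial transfer operator $K_{\beta,D}$. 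Two inputs control it. The first is the exponentially small free-energy correction, $\frac1N\log\Tr e^{-\beta H_N^{\per}}\ge-\beta e_0-Ce^{-c\beta}$ at large $N$. The second is concentration of $K_{\beta,D}$ on its dominant eigenvector, so that each separated insertion contributes a factor $\langle\psi_\beta,e^{tA}\cdots\psi_\beta\rangle$ with exponentially small error. Dividing by $m$ and then letting $N\to\infty$, with block spacing $\sim\beta$ so that $m\sim N/\beta$, gives
\[
t\,\omega(A)\le \log\phi_\beta\bigl(e^{tA}\bigr)+o(1)
\]
for a functional $\phi_\beta$ fixed by the dominant eigenvector. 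The $o(1)$ term collects the free-energy error multiplied by the block length, and it vanishes because $\beta e^{-c\beta}\to0$. For every $\beta$ the right-hand side is independent of $\omega$. Choosing an unbounded sequence $\beta_n$ and passing to a weak-$*$ limit point $\phi$ of $\phi_{\beta_n}$, we get $t\,\omega(A)\le\log\phi(e^{tA})$ for all $t$. Dividing by $t\to0^{\pm}$ then yields $\omega(A)=\phi(A)$. As the introduction emphasizes, this step does not require $\phi_\beta$ to converge. The main obstacle is this quantitative step: showing that inserting $m$ separated copies of $e^{tA}$ changes the transfer product only through per-block factors, uniformly in $N$, while $\beta$ stays large enough for the free-energy error to be negligible per block.

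\emph{Remaining properties.} Uniqueness makes $\omega_D$ an extreme point of the translation-invariant states. Purity and the local gap \rlb{localgap} should come from the uniform finite-chain gap for $H_{2L}^{\per}(D)$ at small $|D|$ (Appendix~\ref{S:numerics}). Together with convergence $\omega_{2L,D}^{\per}\to\omega_D$, the finite-chain gap passes to the local gap inequality in the limit. Purity then follows because a locally-unique gapped ground state has clustering, and a clustering translation-invariant state whose GNS ground state is unique is pure. The symmetries follow because each symmetry maps translation-invariant ground states to translation-invariant ground states, so uniqueness forces invariance; at $D=0$ the full $SU(2)$ invariance follows the same way. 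For continuity in $D$, take $D_n\to D$. Any limit point of $\omega_{D_n}$ is translation invariant and satisfies \rlb{GSdefinition} for $H(D)$, since the commutators depend continuously on $D$. Uniqueness then identifies it with $\omega_D$.
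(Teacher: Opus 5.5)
Your outline has the same architecture as the paper's proof. The shared steps are existence from periodic limits, the bound $\theta(\spn_0\cdot\spn_1+DQ_0)\le e(D)$ for any translation-invariant ground state, Jensen or Gibbs plus Golden--Thompson on separated blocks, an $\omega$-independent comparison functional, and $t\to0^\pm$. Your derivation of the symmetries directly from uniqueness is a valid shortcut. The gap is that the two quantitative inputs that carry the argument are assumed rather than proved.

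First, the free-energy estimate is stated in the wrong direction. The inequality $\frac1N\log Z_N\ge-\beta e_0-Ce^{-c\beta}$ holds trivially with no correction, since $\log Z_N\ge-\beta E_N^{\per}$. What your inequality actually needs is the upper bound $\log\lambda_\beta+\beta e(D)\le\epsilon(\beta)$, with $\epsilon(\beta)$ times the block spacing tending to zero. Neither elementary route gives this. Counting states gives only $\log3$ per site. A uniform finite-chain gap gives only $\log Z_N+\beta E_N^{\per}\le\log(1+3^Ne^{-\beta\gamma_0})$, which is again $\log3$ per site. Concentration of $K_{\beta,D}$ at one temperature identifies the free energy with $\log\lambda_\beta$, but says nothing about its distance from $-\beta e(D)$. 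The missing idea is the comparison of two temperatures in Lemmas~\ref{L:temperatureeigenvalue} and~\ref{L:freeenergy}. Spatial purity at $\beta_{j+1}$, together with the physical Gibbs purity $T_D(2k_j,\beta_j)\ge1-u_j$, gives $r_j^{k_{j+1}}\ge(1-u_j)(1-u_{j+1})$ for $r_j=\lambda_{j+1}/\lambda_j^2$. Telescoping $\log\lambda_j+\beta_je(D)=-\sum_{\ell\ge j}2^{j-\ell-1}\log r_\ell$ along the doubling scales then yields the bound $u_j/k_j$. Without this, your $o(1)$ term is uncontrolled.

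Second, the step you call the main obstacle is left open, and a per-block factorization uniform in $N$ is not what is needed. Only an upper bound on $\frac1n\log\rho^{\per}_{N,\beta_j,D}(e^{tA^{(n)}})$ is used. Write the multiple-insertion trace as $\lambda_j^{N}\Tr\mathcal M_j(t)^n$. Then the bound $|\Tr\mathcal M^n|\le\norm{\mathcal M}^{n-2}\norm{\mathcal M}_{\rm HS}^2$, which needs no normality or positivity, together with $\norm{\mathcal M_j(t)-\nu_{\beta_j,D}(e^{tA})P_j}\le\delta_j$, closes the argument as $n\to\infty$ at fixed $j$.

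Moreover, $\delta_j$ is not automatically small at spacing of order $\beta$. The only available bound on $\norm{\mathcal B_{e^{tA},\beta,D}}/\lambda^m$ is $e^{|t|\norm A+(3+D_0)m\beta}$, where $m$ is the number of sites in the support of $A$ (not your number of blocks). So the spacing $s$ must satisfy $e^{(3+D_0)m\beta}q^{s}\to0$ while $s\,\epsilon(\beta)\to0$. A constant depending on $m$ times $\beta$ would do; the paper takes $2k_j^2$.

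Finally, two further facts come from results your proposal never invokes. The functional $\phi_\beta$ is a state because it is the fixed-temperature periodic Gibbs limit (Lemma~\ref{L:fixedtemperature}). The spectral concentration behind all these bounds comes from the coupled purity iteration started from the finite criterion. That same iteration (Lemma~\ref{L:periodicgap}), not the appendix alone, supplies the uniform periodic gap you use for \rlb{localgap}.
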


Here ``locally-unique gapped ground state'' has the meaning used in
\cite{TasakiLSMReview,TasakiEquivalence}. Setting $\omega_D(A)=0$ in
\rlb{localgap} gives the usual formulation. Conversely, that
formulation applied to $A-\omega_D(A)\one$ gives \rlb{localgap}.
This property is not being used as a synonym for uniqueness among
all ground states. The additional uniqueness proved in the theorem
is restricted to translation-invariant states.
The theorem is proved in Sections~\ref{S:transfer}--\ref{S:uniqueness}.

We next state the convergence theorem for periodic chains.
\begin{theorem}[Convergence of periodic-chain ground states]
\label{T:periodic}
With the constants and states of Theorem~\ref{T:main}, the ground
state of $H_{2L}^{\per}(D)$ is unique for every $L\ge2$ and
$|D|\le D_0$, and has an energy gap greater than $\gamma_0$ for
all sufficiently large $L$, with a threshold independent of $D$.
Moreover,
\eq
\lim_{L\uparrow\infty}\omega_{2L,D}^{\per}(A)
=\omega_D(A),\qquad A\in\Aloc,
\lb{PBClimit}
\en
uniformly for $|D|\le D_0$. There is $c>0$ such that, for every fixed
finite interval $I$, a constant $C_I>0$ satisfies
\eq
\sup_{|D|\le D_0}\ \sup_{\substack{A\in\calA_I\\\norm A\le1}}
|\omega_{2L,D}^{\per}(A)-\omega_D(A)|
\le C_I\sqrt L\,e^{-c\sqrt L}
\lb{localratePBC}
\en
for all sufficiently large $L$.
\end{theorem}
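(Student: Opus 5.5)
Theorem~\ref{T:periodic} will follow from Theorem~\ref{T:main} together with the quantitative transfer estimates behind it. The order is: finite-chain uniqueness, then the gap, then convergence via uniqueness, then the rate.

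\emph{Finite-chain uniqueness.} Uniqueness of the ground state of $H_{2L}^{\per}(D)$ for every $L\ge2$ and every real $D$ is the classical Marshall--Lieb--Mattis/Perron--Frobenius result \cite[Theorem~2.4]{TasakiBook}. I quote it directly.

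\emph{Uniform gap.} For the gap I would use the finite numerical criteria of \cite{OpenAIperiodic}. These involve finitely many traces at fixed lengths and temperatures, so they persist for $|D|\le D_0$ by continuity (Appendix~\ref{S:numerics}). They then feed into the coupled spatial/temporal purity improvement, which yields a gap above the unique ground state bounded below for $L\ge L_0$. The threshold $L_0$ is independent of $D$ because the criteria are checked uniformly on the compact interval $[-D_0,D_0]$. Choosing $\gamma_0$ no larger than this bound makes the gap statement consistent with the constant in \rlb{localgap}.

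\emph{Convergence.} Each $\omega_{2L,D}^{\per}$ is translation invariant modulo $2L$. By compactness of the state space, every subsequence has a weak-$*$ convergent sub-subsequence. Any such limit is translation invariant, since $\omega^{\per}_{2L,D}\circ\tau_x=\omega^{\per}_{2L,D}$ for local $A$ once $L$ is large. It is also a ground state: $\omega^{\per}_{2L,D}(A^*[H_{2L}^{\per}(D),A])\ge0$, and for local $A$ this commutator coincides with $A^*[H(D),A]$ once the closing bond is far from the support of $A$. By Theorem~\ref{T:main}, every limit equals $\omega_D$, so the full sequence converges. For uniformity in $D$, I argue by contradiction. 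Suppose there are $\epsilon>0$, $A$, $L_k\to\infty$ and $D_k\to D_*$ with $|\omega^{\per}_{2L_k,D_k}(A)-\omega_{D_k}(A)|\ge\epsilon$. Pass to a further subsequence along which $\omega^{\per}_{2L_k,D_k}$ converges weak-$*$. The same argument applies, using that the commutator depends continuously on $D$, so the limit is a translation-invariant ground state of $H(D_*)$ and hence equals $\omega_{D_*}$. Continuity of $D\mapsto\omega_D(A)$ from Theorem~\ref{T:main} gives $\omega_{D_k}(A)\to\omega_{D_*}(A)$, which is a contradiction.

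\emph{Rate \rlb{localratePBC}.} This is the main obstacle, since soft compactness gives no rate. My plan is as follows:
\begin{itemize}
\item Choose $\beta\sim\sqrt L$.
\item Bound $|\omega^{\per}_{2L,D}(A)-\rho^{\per}_{2L,\beta,D}(A)|$ by $C\,e^{-\gamma_0\beta}$ (up to polynomial factors) using the uniform gap, since excited-state weights are at most $e^{-\beta\gamma_0}$ times an entropy factor that the purity estimate controls.
\item Write $\rho^{\per}_{2L,\beta,D}(A)$ through the transfer representation as a trace of $K_{\beta,D}^{\,\cdot}$ with an insertion. Concentration of the transfer spectrum means the subleading contributions are suppressed as $(\lambda_2/\lambda_1)^{L}$, with a spectral ratio whose deficit shrinks only like $e^{-c'\beta}$ or $1/\beta$; with $\beta\sim\sqrt L$ this yields the $\sqrt L\,e^{-c\sqrt L}$ balance.
\item Show that the dominant-eigenvector expression at inverse temperature $\beta$ is within $e^{-c\beta}$ of $\omega_D(A)$. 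Here I would use the free-energy comparison at consecutive temperatures from the uniqueness section.
\end{itemize}
The constant $C_I$ absorbs the dependence on the support size of $A$.
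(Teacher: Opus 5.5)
The qualitative parts of your proposal coincide with the paper's argument: classical finite-chain uniqueness, the gap from the coupled purity iteration, convergence via compactness and Theorem~\ref{T:main}, and uniformity in $D$ by contradiction plus continuity of $\omega_D$. Your split of the rate into ground state, then Gibbs state, then $\nu_{\beta_j,D}$, then $\omega_D$, with $\beta\sim\sqrt L$, is also the paper's decomposition. The rate itself is not established, however, because the mechanism in your third step is wrong.

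The paper proves $\varrho_j\le q=(10\eta)^{1/k_0}$ uniformly in $j$ and $D$, using \rlb{residualmoment} and $d(u_j)\le x_j=q^{k_j}/10$. So the spatial transfer gap does not close as $\beta$ grows. If the deficit really shrank like $e^{-c'\beta}$, as you allow, then $(1-e^{-c'\sqrt L})^{L}\to1$ and there would be no suppression at all. What actually forces $N\gg\beta$ is the growth of the normalized insertion, $\norm{\mathcal B_{A,\beta_j,D}}/\lambda_j^m\le\norm A\,e^{(3+D_0)m\beta_j}$ from \rlb{Bnormalizednorm}. This must be beaten by the fixed-rate factor $q^{N-m}$ in \rlb{exactpererror}. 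Taking $N$ of order $k_j^2$ achieves this for every support size $m$, which is also why $c$ can be chosen independent of $I$. Even a deficit of order $1/\beta$ would only produce $e^{-c\sqrt L}$ against the prefactor $e^{(3+D_0)m\beta}$, making $c$ depend on $I$.

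Two further ingredients are missing.

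First, at $N\sim\beta_j^2$ the ground-versus-Gibbs comparison cannot come from the gap. Relative to the ground-state weight, the excited weights are only bounded by $(3^N-1)e^{-\gamma_0\beta_j}$. The doubling iteration controls physical purity only for $k_j\le N\le2k_j$. What you need is Proposition~\ref{P:longpurity}:
\begin{itemize}
\item the bounds $\lambda_j^N\le Z_{N,D}(\beta_j)\le\lambda_j^N(1+x_j)$ and $Z_{N,D}(\beta_{j+1})\ge\lambda_{j+1}^N$,
\item together with $r_j^{k_{j+1}}\ge(1-u_j)(1-u_{j+1})$ from Lemma~\ref{L:temperatureeigenvalue},
\item give $1-T_D(N,\beta_j)\le4\frac N{2k_j}x_j$ for every even $N\ge2k_j$.
\end{itemize}
Then \rlb{puritytostate} yields $|\omega^{\per}_{N,D}(A)-\rho^{\per}_{N,\beta_j,D}(A)|\le10\norm A\frac N{2k_j}x_j$, with no gap input at all.

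Second, Lemma~\ref{L:freeenergy} alone does not bound $|\nu_{\beta_j,D}(A)-\omega_D(A)|$. The paper runs the uniqueness argument quantitatively:
\begin{itemize}
\item apply \rlb{masteruniqueness}, which rests on Lemma~\ref{L:exponentialmoment} and the Jensen--Golden--Thompson comparison, with $\theta=\omega_D$;
\item bound $\log\nu(e^{tA})\le t\nu(A)+t^2/2$ as in \rlb{scalarlogmoment};
\item optimize over $t$.
\end{itemize}
This gives \rlb{nurate}, with the square-root rate $\sqrt{k_ju_j}$.

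Finally, $\beta$ must be taken on the doubling sequence, $\beta=\beta_{j(N)}$ with $k_j^2\le N<4k_j^2$, because the dominant eigenvector and all the estimates are only established there. The factor $\sqrt L\,e^{-c\sqrt L}$ then comes from the terms $k_je^{-\gamma_0\beta_j}$ and $\sqrt{k_ju_j}$ in \rlb{combinedlocalrate}. It does not come from a balance against a closing transfer gap.
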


The supremum over $A$ in \rlb{localratePBC} is the trace norm of the
difference of the reduced density matrices on $I$.
Let us first explain why \rlb{PBClimit} follows once
translation-invariant uniqueness is established.
By the classical result recalled above, $H_{2L}^{\per}(D)$ has a
unique ground state for every $L\ge2$. Since cyclic translation
commutes with the Hamiltonian, it multiplies a normalized ground
vector by a phase. Its expectation functional is therefore invariant
under cyclic translation. On the support of a fixed local observable,
this agrees with the ordinary translation when $L$ is sufficiently
large. Compactness of the finite local state spaces, followed by a
diagonal subsequence, gives a limit of a subsequence of these
expectation functionals. Every such limit is translation invariant.
It is also a ground state: the finite-chain ground-state inequality
passes to the limit, since the closing bond eventually commutes with
each fixed local operator. This proves existence of at least one
translation-invariant ground state independently of
Theorem~\ref{T:main}. Once uniqueness in that theorem is established,
every subsequential limit must equal $\omega_D$, so the entire
periodic sequence converges to $\omega_D$.
Uniformity in $D$ is proved in Section~\ref{S:bulkconsequences}.
The quantitative bound \rlb{localratePBC} requires further estimates;
it is proved in Section~\ref{S:quantitativerate} using only the
periodic estimates.

The following convergence theorem is important for the application
to the topological properties of the ground state in
\cite{Tasaki2025,TasakiEquivalence}.
\begin{theorem}[Convergence with equal endpoint fields]
\label{T:boundary}
There is a field strength $h_0>0$ such that, with the constants
and states of Theorem~\ref{T:main}, the ground state of
$H_{2L+1}^{\fld}(D)$ is unique and has an energy gap greater than
$\gamma_0$ for every sufficiently large $L$, uniformly for
$|D|\le D_0$. Moreover,
\eq
\lim_{L\uparrow\infty}\omega_{2L+1,D}^{\fld}(A)
=\omega_D(A),\qquad A\in\Aloc,
\lb{fieldlimit}
\en
uniformly for $|D|\le D_0$. There is $c>0$ such that, for every fixed
finite interval $I$, a constant $C_I>0$ satisfies
\eq
\sup_{|D|\le D_0}\ \sup_{\substack{A\in\calA_I\\\norm A\le1}}
|\omega_{2L+1,D}^{\fld}(A)-\omega_D(A)|
\le C_I\sqrt L\,e^{-c\sqrt L}
\lb{localrateField}
\en
for all sufficiently large $L$.
\end{theorem}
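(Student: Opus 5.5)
The plan is to reduce Theorem~\ref{T:boundary} to Theorem~\ref{T:main}: first identify every subsequential limit of $\omega^{\fld}_{2L+1,D}$ as a translation-invariant ground state, then invoke translation-invariant uniqueness, and finally obtain the quantitative rate and the uniformity in $D$ from the transfer-operator estimates.

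\emph{Finite-chain input.} I would import the boundary-field finite-chain criteria of \cite{OpenAIboundary}, with $h_0=3/5$. These are finitely many certified trace inequalities. By continuity of those traces in $D$ (Appendix~\ref{S:numerics}), they persist for $|D|\le D_0$ after shrinking $D_0$ if necessary. The temporal/spatial purity bootstrap of the open-chain paper then gives a unique ground state of $H^{\fld}_{2L+1}(D)$ with gap $>\gamma_0$ for $L\ge L_0$, where $L_0$ does not depend on $D$.

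\emph{Identification of the limit.} Fix $A\in\Aloc$. The ground-state inequality $\omega^{\fld}_{2L+1,D}(A^*[H^{\fld}_{2L+1}(D),A])\ge0$ passes to any weak-$*$ subsequential limit $\omega$. This works because, once the support of $A$ is far from the endpoints, the endpoint fields and the missing bonds commute with $A$, so $\omega$ satisfies \rlb{GSdefinition}. The real work is \emph{translation invariance}, since the open chain has no cyclic symmetry. I would write local Gibbs expectations $\rho^{\fld}_{2L+1,\beta,D}(A)$ and $\rho^{\fld}_{2L+1,\beta,D}(\tau_1 A)$ in the transfer representation. Here the interior contributes powers of the bulk operator $K_{\beta,D}$, and the endpoints contribute boundary vectors depending on $h_0$. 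Spatial concentration of $K_{\beta,D}$ (a spectral gap of the transfer operator relative to its leading eigenvalue, of size controlled at inverse temperature $\beta$) shows that both expectations are within $\epsilon(L,\beta)$ of the leading-eigenvector expression. That expression is shift invariant. The boundary vectors must have a nonvanishing overlap with the leading eigenvector, which I would derive from the open-chain purity bounds.

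Next I would pass from $\rho$ to $\omega^{\fld}$. The uniform gap gives $|\rho^{\fld}_{2L+1,\beta,D}(A)-\omega^{\fld}_{2L+1,D}(A)|\le C\norm{A}\,3^{2L+1}e^{-\beta\gamma_0}$, or a sharper bound via the Gibbs purity. Choosing $\beta\propto\sqrt L$ with $L/\beta$ transfer steps balances the two errors. This joint scaling is exactly where the $\sqrt L\,e^{-c\sqrt L}$ rate comes from. With it, $|\omega^{\fld}(A)-\omega^{\fld}(\tau_1A)|\to0$, so every limit is translation invariant and hence equals $\omega_D$ by Theorem~\ref{T:main}. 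Full-sequence convergence \rlb{fieldlimit} follows by compactness.

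\emph{Rate and uniformity.} The same estimate shows that $\omega^{\fld}_{2L+1,D}$ and $\omega^{\per}_{2L,D}$ are both within $C_I\sqrt L e^{-c\sqrt L}$ of the common leading-eigenvector functional on $\calA_I$. Combining this with \rlb{localratePBC} gives \rlb{localrateField}. Since all constants come from the $D$-uniform certificates, the rate is uniform, and uniform convergence follows.

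The main obstacle is the boundary-overlap step. One must show that the endpoint vectors produced by the field terms retain a weight on the dominant transfer eigenvector that is bounded below uniformly in $L$, $D$ and $\beta\sim\sqrt L$. Otherwise subdominant transfer modes could survive to the origin. I would first try to extract this lower bound from the purity estimates of \cite{OpenAIboundary}, by comparing the partition function with the leading eigenvalue raised to the power of the chain length.
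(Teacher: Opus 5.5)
Your overall reduction matches the paper's. Every subsequential limit of $\omega^{\fld}_{2L+1,D}$ is shown to be a translation-invariant ground state and then identified by Theorem~\ref{T:main}. Deriving \rlb{localrateField} from \rlb{localratePBC} through the common functional $\nu_{\beta_j,D}$ is a legitimate substitute for the paper's direct use of \rlb{nurate}, since the periodic rate needs only periodic estimates.

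\textbf{The gap is the step from Gibbs states to ground states.} The bound $3^{2L+1}e^{-\beta\gamma_0}$ is useless for $\beta\propto\sqrt L$. You also cannot fall back to lengths comparable to $k_j\propto\beta_j$, where the imported bootstrap controls the field-chain purity. There $q^{d_N(I)}\ge q^{2k_j}=e^{-2\gamma_0\beta_j}$, which cannot compensate the insertion factor $e^{(3+D_0)m\beta_j}$ of \rlb{Bnormalizednorm}. So you need equal-field Gibbs purity close to one at lengths $N\sim\beta_j^2$, far outside the doubling window. Your phrase ``a sharper bound via the Gibbs purity'' does not say where this comes from, and the finite-chain gap gives no control on the number of low-lying states. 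In the paper this is Proposition~\ref{P:longpurity}. There $Y^{\fld}_{N,D}$ at $\beta_j$ and $\beta_{j+1}$ is compared with the opposite-field partition function at the reference length $2k_j$, giving $\Tr[(\hat\rho^{\fld}_{N,\beta_j,D})^2]\ge(1-2x_{j+1})\tau_D(2k_j,\beta_j)r_j^{N-2k_j}$. The consecutive-temperature bound $r_j^{k_{j+1}}\ge(1-u_j)(1-u_{j+1})$ from \rlb{rbound} then makes the loss only of order $(N/k_j)x_j$.

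\textbf{You have also misplaced the main obstacle.} A lower bound on $c_j=\norm{P_jv_{\beta_j}}^2$ uniform in $\beta$ is neither needed nor to be expected, and the paper never uses one. Heuristically, $c_j$ behaves like $e^{-\beta_j\Delta}$ with $\Delta$ a boundary energy.

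For the Gibbs-to-$\nu_{\beta_j,D}$ step, your suggested comparison suffices in crude form. From $H_N^{\per}(D)\ge H_N^{\fld}(D)-(2+2h_0)\one$ and $\lambda_j^{-N}K_{\beta_j,D}^N\to P_j$ you get $\langle\Ux v_{\beta_j},P_jv_{\beta_j}\rangle\ge e^{-(2+2h_0)\beta_j}$. This fixes the sign of the leading term, which matters because the left vector is $\Ux v_\beta$ (cf. Lemma~\ref{L:character}). The factor $e^{C\beta_j}$ is then absorbed by $q^{N/3}$ at $N\sim\beta_j^2$.

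No overlap bound can control the purity, however. Up to non-leading contributions, $\Tr[(\hat\rho^{\fld}_{N,\beta_j,D})^2]\approx r_j^Nc_{j+1}/c_j^2$. What must be shown is therefore that the ratio $c_{j+1}/c_j^2$ of overlaps at consecutive temperatures is close to one. That is the content of the coupled induction in Proposition~\ref{P:quadratic}, which bounds the relative error $e_{k_j}(j)\le x_j$ and the opposite-field purity $\tau_D(2k_j,\beta_j)\ge1-x_j$. This estimate is relative to $c_j$, not a lower bound on it, and it feeds into Proposition~\ref{P:longpurity}. Without it you would at best obtain a positive lower bound on the purity and an extremality argument as in Section~\ref{S:free}, which gives no rate.
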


The limit \rlb{fieldlimit} likewise follows from
Theorem~\ref{T:main} if every subsequential limit of these
ground-state expectations is translation invariant. Unlike the
periodic case, translation invariance is not an exact finite-chain
symmetry. Proving its recovery in the limit requires the estimates
of Sections~\ref{S:fieldtransfer}--\ref{S:limit}. The constants
$c$ in Theorems~\ref{T:periodic} and \ref{T:boundary} can be
chosen to be the same by decreasing either one if necessary.

Finally, there is a convergence theorem for open chains without
endpoint magnetic fields, a setting not covered by the two cited
OpenAI gap theorems. Let $E_N^{\free}(D)$ be the ground-state
energy of $H_N^{\free}(D)$.
\begin{theorem}[Convergence without endpoint fields]\label{T:free}
For every fixed $D$ such that $|D|\le D_0$ and every $A\in\Aloc$,
\eq
\lim_{N\uparrow\infty}
\sup_{\substack{\psi\in\ker(H_N^{\free}(D)-E_N^{\free}(D))\\
                \norm\psi=1}}
\abs{\langle\psi,A\psi\rangle-\omega_D(A)}=0.
\lb{freelimit}
\en
Both even and odd lengths are included.
The supremum is over all normalized ground vectors of
$H_N^{\free}(D)$. No uniqueness or gap uniform in $N$ is assumed
for these finite-chain ground states.
\end{theorem}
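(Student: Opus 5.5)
We prove Theorem~\ref{T:free} by the same route the paper describes for the field case: show that every subsequential limit of free-chain ground-vector expectations is a translation-invariant ground state, and then invoke Theorem~\ref{T:main}. Since no uniqueness or uniform gap is available for $H_N^{\free}(D)$, the limit cannot be taken through a single state. We therefore argue by contradiction on the supremum in \rlb{freelimit}.

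Suppose the limit fails for some $A\in\Aloc$. Then there are $\varepsilon>0$, a sequence $N_k\uparrow\infty$, and normalized ground vectors $\psi_k$ with
\[
|\langle\psi_k,A\psi_k\rangle-\omega_D(A)|\ge\varepsilon .
\]
By compactness of the local state spaces and a diagonal argument, we pass to a further subsequence along which $\langle\psi_k,\,\cdot\,\psi_k\rangle$ converges on $\Aloc$ to a state $\omega$. This $\omega$ is a ground state of $H(D)$. Indeed, for fixed local $B$ and large $k$, the finite-chain inequality $\langle\psi_k,B^*[H_{N_k}^{\free},B]\psi_k\rangle\ge0$ involves only bonds near the support of $B$. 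Those bonds coincide with the bulk terms, because the endpoints recede. So \rlb{GSdefinition} passes to the limit. The whole difficulty is translation invariance of $\omega$, which is not a finite-chain symmetry here.

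For translation invariance we follow the strategy sketched in the introduction. Fix $\beta$ large and compare $\psi_k$ with the Gibbs state $\rho^{\free}_{N_k,\beta,D}$. The projection $P_k$ onto the ground eigenspace satisfies $P_k\le e^{\beta E}e^{-\beta H}$ in the appropriate sense. Hence every ground vector is dominated, up to the factor $1/\rho^{\free}_{N_k,\beta,D}(P_k)$, by the Gibbs weight. The key input is that the Gibbs purity, or equivalently the ground-space weight in the low-temperature Gibbs state, stays bounded below by a positive constant uniformly in $N$. This is where the four-fold near degeneracy is tolerated. Combined with the transfer representation, it should show the following: in the interior, local expectations in any ground vector are governed by the dominant eigenvector of $K_{\beta,D}$, up to errors that vanish as the distance to the endpoints and $\beta$ grow jointly. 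In particular $|\langle\psi_k,(\tau_1B-B)\psi_k\rangle|\to0$.

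To reach an arbitrary ground vector rather than the Gibbs average, I would use purity of the limit. The Gibbs-type states converge to a translation-invariant ground state, which by Theorem~\ref{T:main} is $\omega_D$, a pure state. The normalized restriction of the Gibbs state to the ground space is a convex combination of ground-vector states, with weight bounded below. Its limit is $\omega_D$. A pure state admits no nontrivial convex decomposition, and the weight is bounded below, so every limit of ground-vector states, including $\omega$, must equal $\omega_D$. This contradicts the choice of $\varepsilon$.

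The main obstacle is this last step. One must make precise that the uniformly positive purity lets the extremality of $\omega_D$ control every individual ground vector, not merely a Gibbs average. Concretely, I would bound $\langle\psi,B\psi\rangle$ for $B\ge0$ by a constant times the Gibbs expectation, namely $\langle\psi,B\psi\rangle\le C\,\rho^{\free}_{N,\beta,D}(B)$ after restricting to the ground space. This yields domination $\omega\le C\omega_D$ in the limit, and purity then forces $\omega=\omega_D$.
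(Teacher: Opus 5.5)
\textbf{Verdict: genuine gap.} The input you call key, a uniform positive lower bound on the free-end Gibbs purity, is asserted rather than proved, and as you state it, it is false.

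\textbf{Your closing step is fine.} It is the paper's own, and it is not the obstacle. The free-end Gibbs density matrix acts as $p_0\one$ on the ground eigenspace, where $p_0$ is its largest eigenvalue. Hence $\hat\rho^{\free}_{N,\beta,D}\ge p_0|\psi\rangle\langle\psi|$ for every normalized ground vector $\psi$. Moreover $p_0\ge\Tr\bigl[(\hat\rho^{\free}_{N,\beta,D})^2\bigr]$, since $\sum_a p_a^2\le p_0$. Your constant $1/\rho^{\free}_{N,\beta,D}(P_k)$ omits the degeneracy factor. What matters is the Gibbs weight of a single ground vector, not of the whole ground space, so purity and ground-space weight are not interchangeable. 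Given a purity bound of the form $\tau_*/2$ and Gibbs convergence to $\omega_D$, the convex decomposition and extremality of the pure state $\omega_D$ finish the proof immediately. Your preliminary attempt to prove translation invariance of limits of individual ground vectors is therefore unnecessary. You also give no mechanism for it: an individual ground vector has no transfer representation.

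\textbf{Why the purity claim fails at fixed $\beta$.} You fix $\beta$ and ask for a bound uniform in $N$. At fixed $\beta$ the Gibbs state has positive entropy density, so its purity decays exponentially in $N$. The temperature must be tied to the length, as in the paper's choice $\beta=\beta_{j(N)}$ with $k_j^2\le N<4k_j^2$. Then $\beta$ grows, but the distances to the ends stay large enough to beat the factors $e^{(3+D_0)m\beta_j}$ in the insertion estimate.

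\textbf{What proving the bound actually requires.} This is the main new work behind the theorem, and your proposal contains none of it. The quadratic purification used for periodic and field chains only improves purities already close to one. That is not expected with free ends, where about four states are nearly degenerate. The paper instead proceeds in three steps.
\begin{itemize}
\item It uses H\"older's inequality to show that doubling $\beta$ never decreases the purity, at any purity level.
\item It runs an iteration along the lengths $n_j=Rk_j$, starting from the trivial seed $3^{-n_0}$. The multiplicative losses are controlled by $r_j$ and by $\delta_j=e^{4\beta_j}q^{n_j}$, and they are summable.
\item It extends the bound to $N\sim k_j^2$ using $r_j^N\ge\{(1-u_j)(1-u_{j+1})\}^{N/(2k_j)}\to1$.
\end{itemize}

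\textbf{Missing overlap bound.} Two steps need a quantitative nonzero overlap of the empty-history boundary vector $v^{\free}_\beta$ with the leading transfer eigenvector, and you do not address it. The first is this iteration. The second is the convergence of the free-end Gibbs expectations to $\nu_{\beta_j,D}$, and hence to $\omega_D$ by Proposition~\ref{P:TIunique}. The paper obtains $c_j^{\free}\ge e^{-2\beta_j}$ from $H_N^{\per}(D)\ge H_N^{\free}(D)-2\one$. This bound is exponentially small, and it suffices only because every length used is long enough for the powers of $q$ to beat $e^{2\beta_j}$.
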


Theorems~\ref{T:main}, \ref{T:periodic}, and \ref{T:free} use
only the periodic numerical inputs of \cite{OpenAIperiodic}.
Theorem~\ref{T:boundary} also uses the boundary-field numerical
inputs of \cite{OpenAIboundary}. One simultaneous choice, verified
in Appendix~\ref{S:numerics}, is $h_0=3/5$,
$D_0=5\times10^{-9}$, and $\gamma_0=(\log10)/392$.
We keep the constants abstract in the proofs. Although only the
single-ion anisotropy is developed in detail, the argument also
extends to sufficiently small exchange anisotropy.\footnote{One
may replace $\spn_j\cdot\spn_{j+1}$ by
$S_j^{\mathrm x}S_{j+1}^{\mathrm x}
+S_j^{\mathrm y}S_{j+1}^{\mathrm y}
+\Delta S_j^{\mathrm z}S_{j+1}^{\mathrm z}$ with
$|\Delta-1|$ sufficiently small, retaining the symmetries
preserved by that interaction. We do not develop this extension here.}

\subsection{Organization and the additional arguments}

Section~\ref{S:transfer} constructs the periodic spatial transfer and
local insertions. Section~\ref{S:criterion} proves the periodic
concentration estimates and finite-chain gaps from two finite
purities. Section~\ref{S:uniqueness} proves
Theorem~\ref{T:main}: the small free-energy correction and the separated
exponential-moment bound identify every translation-invariant ground
state without assuming a zero-temperature limit in advance.
The qualitative periodic limit in Theorem~\ref{T:periodic} follows
there as well; its quantitative bound is completed in
Section~\ref{S:quantitativerate}.

Sections~\ref{S:fieldtransfer}--\ref{S:longchains} establish the
finite-chain endpoint-field comparison in
Proposition~\ref{P:finiteapproximation}.
Section~\ref{S:limit} proves that every limit of a subsequence of
the ground-state expectations with endpoint fields is translation
invariant, and then applies Theorem~\ref{T:main} to prove
Theorem~\ref{T:boundary}. Section~\ref{S:free} proves
Theorem~\ref{T:free}. The appendix verifies the numerical premises.

\medskip\noindent
{\em Additions relative to the two OpenAI papers.}\/ 
The periodic purity iteration and the coupled control of the boundary
overlap and opposite-field Gibbs purity are inherited from
\cite{OpenAIperiodic,OpenAIboundary}. The principal additional argument
is the use of repeated local insertions, the low-temperature free-energy
estimate, and a scalar Jensen--Golden--Thompson comparison to prove
translation-invariant uniqueness. We also incorporate the single-ion
evolution into the kernel, give an initial overlap test using ordinary
finite-chain traces, and obtain quantitative local expectations.
The elementary sign argument for the $\pi$ rotation about the $\mathrm x$ axis does
not require the cyclic permutation of spin axes used in the isotropic
boundary proof. Retaining the length dependence of the physical
purity permits lengths of order $\beta_j^2$; in the equal-field case
this extends the estimate in the proof of
\cite[Proposition~6.2]{OpenAIboundary}, not the boundary recurrence
itself. Finally, for free ends we use monotonicity of Gibbs purity
and extremality of the pure limiting state rather than purification
of the finite-chain Gibbs density matrix. These statements describe
additions relative to the two cited OpenAI papers, not priority claims
for the general transfer, variational, or convexity principles.

\section{Gibbs states and their spatial transfer representation for periodic chains}
\label{S:transfer}

The proof of translation-invariant uniqueness uses the Gibbs
expectations defined in \rlb{Gibbsdefinition}.
We first derive exact expressions for periodic chains.

\subsection{Treatment of the \texorpdfstring{$D$}{D}-term in the bond expansion}

We reserve the notation
\eq
Z_{N,D}(\beta)=\Tr_{\mathcal H_N}e^{-\beta H_N^{\per}(D)}
\lb{Zdefinition}
\en
for the \emph{periodic} partition function. We recall the bond-history space used in \cite{OpenAIperiodic,TasakiCT}.
An element $x\in\Omega_\beta$ is either the empty history
$\varnothing$ or a finite ordered list
$x=((t_1,\alpha_1),\ldots,(t_r,\alpha_r))$, where
$0<t_1<\cdots<t_r<\beta$ and each mark $\alpha_a$ belongs to
$\{\mathrm x,\mathrm y,\mathrm z\}$.
The measure $\mu_\beta$ is defined by
\eq
\begin{split}
\int_{\Omega_\beta}d\mu_\beta(x)\,F(x)
={}&F(\varnothing)\nobreak\\
 &+\sum_{r=1}^{\infty}
  \sum_{\alpha_1,\ldots,\alpha_r\in\{\mathrm x,\mathrm y,\mathrm z\}}
  \int_{0<t_1<\cdots<t_r<\beta}dt_1\cdots dt_r\,
  F\bigl((t_1,\alpha_1),\ldots,(t_r,\alpha_r)\bigr)
\end{split}
\lb{historymeasure}
\en
for nonnegative measurable $F$, and hence for every integrable $F$.
Thus the empty history has mass one, and
$\mu_\beta(\Omega_\beta)=\sum_{r=0}^{\infty}3^r\beta^r/r!=e^{3\beta}$.
In particular, this is not a probability measure.
Put $\calK_\beta=L^2(\Omega_\beta,\mu_\beta;\bbC)$.
Each history describes events on a single bond. A full space-time
configuration assigns one such history to each bond; the bond index
is not part of an individual history.
Put $G^\alpha=iS^\alpha$ and $Q=(\Sz)^2$, so
$\norm{G^\alpha}=1$ and
$-\spn_j\cdot\spn_{j+1}=\sum_\alpha G_j^\alpha G_{j+1}^\alpha$.

The only change from the expansion discussed in the CT note is that
we keep the single-ion evolution exact between successive bond events. Here is a derivation
from the Lie-product formula. On a fixed periodic chain, set
\eq
\mathcal Q_N=\sum_{j\in\Lambda_N}Q_j,
\qquad
V_N=\sum_{j\in\Lambda_N}\sum_{\alpha=\mathrm x,\mathrm y,\mathrm z}
 G_j^\alpha G_{j+1}^\alpha,
\qquad H_N^{\per}(D)=D\mathcal Q_N-V_N,
\lb{QVdecomposition}
\en
with periodic identification in $V_N$. For $\delta=\beta/n$,
\eq
e^{-\beta H_N^{\per}(D)}
 =\lim_{n\uparrow\infty}
 \bigl(e^{-\delta D\mathcal Q_N}e^{\delta V_N}\bigr)^n
 =\lim_{n\uparrow\infty}
 \bigl(e^{-\delta D\mathcal Q_N}(\one+\delta V_N)\bigr)^n.
\lb{Lieproduct}
\en
Both limits are in operator norm. The second follows by a telescoping
estimate: replacing $e^{\delta V_N}$ by $\one+\delta V_N$ has error
$O(\delta^2)$ per factor, and the number of factors is $n$.
Here $N,\beta,D$ are fixed while this limit is taken.

Expand the last product by choosing the slices in which $V_N$ occurs.
For a fixed number $r$ of such slices, their ordered positions give a
Riemann sum over $0<t_1<\cdots<t_r<\beta$. Consequently,
\eq
\begin{split}
e^{-\beta H_N^{\per}(D)}
 =\sum_{r=0}^{\infty}\int_{0<t_1<\cdots<t_r<\beta}
 &dt_1\cdots dt_r\,
 e^{-t_1D\mathcal Q_N}V_Ne^{-(t_2-t_1)D\mathcal Q_N}
 \cdots\\[-1mm]
 &\hspace{18mm}\cdots V_Ne^{-(\beta-t_r)D\mathcal Q_N}.
\end{split}
\lb{orderedDexpansion}
\en
The $r=0$ term is $e^{-\beta D\mathcal Q_N}$.
To justify summing the limits, put $D_-=\max\{0,-D\}$.
At order $r$ the sum of the operator-norm bounds, after also expanding
$V_N$ into its $3N$ bond terms, is at most
$e^{\beta ND_-}(3N\beta)^r/r!$.
These bounds are summable. Thus this is an ordered expansion in the
\emph{bond terms}, not a Taylor expansion in the full Hamiltonian.
It uses no commutation of $Q_j$ with the bond operators.

Regroup the events by assigning to each bond its ordered list of events.
If two histories $x,y$ incident on a site merge into the chronological
list $(t_1,\alpha_1),\ldots,(t_r,\alpha_r)$, define its single-site word
and kernel by
\eqa
W_{\beta,D}(x,y)
 &=e^{-Dt_1Q}G^{\alpha_1}e^{-D(t_2-t_1)Q}
       \cdots G^{\alpha_r}e^{-D(\beta-t_r)Q},
 \lb{Dword}\\
k_{\beta,D}(x,y)&=\tr W_{\beta,D}(x,y).
 \lb{Dkernel}
\ena
For the empty merged history, $W_{\beta,D}(\varnothing,\varnothing)
=e^{-\beta DQ}$ and $k_{\beta,D}(\varnothing,\varnothing)=1+2e^{-\beta D}$.
The increasing-time order is from left to right, as in the CT note.
Coincident event times form a null set and do not affect any integral.
At $D=0$, \rlb{Dkernel} is precisely the CT kernel.

Let $\Theta$ be the usual spin-one time-reversal antiunitary, with
$\Theta^{-1}S^\alpha\Theta=-S^\alpha$. Antilinearity then gives
$\Theta^{-1}G^\alpha\Theta=G^\alpha$, and
$\Theta^{-1}Q\Theta=Q$. A word fixed by $\Theta$ has real trace,
since antiunitary conjugation complex-conjugates its trace. This applies also with the factors $e^{-tDQ}$ inserted. Exchanging $x$ and $y$ leaves the merged list unchanged, so
\eq
k_{\beta,D}(x,y)=k_{\beta,D}(y,x)\in\bbR,
\qquad
|k_{\beta,D}(x,y)|\le3e^{\beta D_-}.
\lb{kernelproperties}
\en
The integral operator
\eq
(K_{\beta,D}f)(x)
 =\int_{\Omega_\beta}k_{\beta,D}(x,y)f(y)\,d\mu_\beta(y)
\lb{Kdefinition}
\en
is therefore real, self-adjoint, and Hilbert--Schmidt, with
$\norm{K_{\beta,D}}_{\rm HS}^2\le9e^{2(3+D_-)\beta}$.
Regrouping the physical trace gives the cyclic integral of $N$
site kernels. This is the operator trace: the kernel of $K^{N-1}$ is
its iterated integral, and for Hilbert--Schmidt kernels $a,b$,
$\Tr(AB)=\int a(x,y)b(y,x)\,d\mu_\beta(x)d\mu_\beta(y)$.
The bounded kernel and finite measure justify every integration.
Consequently,
\eq
Z_{N,D}(\beta)=\Tr_{\calK_\beta}K_{\beta,D}^{\,N},
\qquad N\ge4.
\lb{Ztransfer}
\en
This holds at both parities. The operator $K_{\beta,D}$ need not be
positive. We have therefore recovered the CT transfer representation
with the single-ion term included. No statement about a leading
eigenvalue has yet been used.

\subsection{The space-time representation for expectation values}
\label{S:blockconstruction}

Let $A$ be an operator supported on a block $I=[s,t]$ of $m=t-s+1$ sites,
strictly smaller than the ring. Start from
$\Tr_{\mathcal H_N}(Ae^{-\beta H_N^{\per}(D)})$ and apply the same
expansion \rlb{orderedDexpansion}. The operator $A$ sits at imaginary time $0$, immediately to the left
of all the ordered single-site words. Figure~\ref{fig:periodic} shows the ordinary
trace and the trace with this insertion.

\begin{figure}[tbp]
\centering
\includegraphics[width=0.85\textwidth]{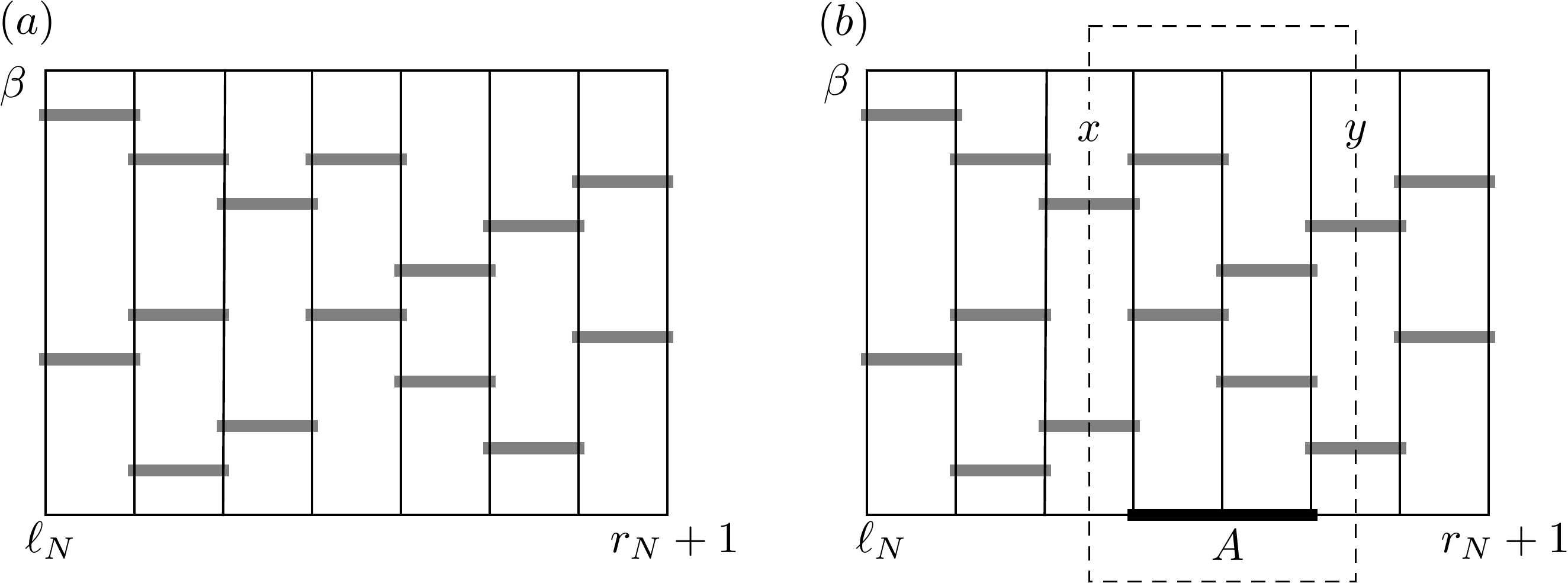}
\caption{The periodic bond-history representation. Panel (a) represents
$Z_{N,D}(\beta)$; panel (b) represents
$\Tr_{\mathcal H_N}(Ae^{-\beta H_N^{\per}(D)})$.
Vertical lines are spin sites and gray horizontal segments are bond
events; the spin-component labels are suppressed. The single-ion
factors act along the intervals between events and are not drawn
separately. The displayed spatial endpoints are periodically
identified: the line labeled $r_N+1$ is the periodic copy of site
$\ell_N$, not an extra spin. The drawing is schematic and the transfer
identity applies at either parity.
The black segment is $A$ at time $0$. The histories $x,y$ crossing the
dashed sides are retained as the two arguments of the insertion kernel;
the histories inside the block are integrated out. The dashed lines
are not physical boundaries. The physical trace identifies the top
and bottom in imaginary time, with the insertion of $A$ in panel (b).}
\label{fig:periodic}
\end{figure}

Outside $I$, each site again contributes the trace of its word
$W_{\beta,D}$. Inside $I$, an arbitrary $A$ may couple the sites, so
we must keep a single trace over the whole block. Retain the histories
$x,y$ on its two external bonds, let $x_1,\ldots,x_{m-1}$ be its
internal bond histories, and set $x_0=x$, $x_m=y$. Define
\eq
\begin{split}
b_{A,\beta,D}(x,y)
 =\int\prod_{a=1}^{m-1}d\mu_\beta(x_a)\,
 \Tr_{(\bbC^3)^{\otimes m}}
 \left[A\bigotimes_{a=1}^{m}W_{\beta,D}(x_{a-1},x_a)\right],
\end{split}
\lb{blockinsertion}
\en
where every history measure is the measure defined above.
For $m=1$ there is no internal integration. Let
$\mathcal B_{A,\beta,D}$ be the integral operator on $\calK_\beta$
with kernel $b_{A,\beta,D}$. This construction is linear in $A$ and
does not require $A$ to be a product of one-site operators.
It replaces \emph{all $m$ site kernels in the block}, not just one
factor. In particular,
\eq
\mathcal B_{\one,\beta,D}=K_{\beta,D}^{\,m}.
\lb{Bidentity}
\en
For a general $A$, the operator $\mathcal B_{A,\beta,D}$ need not be
self-adjoint or positive.

There is no convergence difficulty in this regrouping. Each word has
norm at most $e^{\beta D_-}$. Bounding the block trace by
$3^m\norm A$ times the product of these norms, and integrating the
$m-1$ internal histories, bounds the kernel by
$3^m\norm A e^{m\beta D_-+3(m-1)\beta}$.
Integrating the two external histories in the Hilbert--Schmidt norm
therefore gives
\eq
\norm{\mathcal B_{A,\beta,D}}
 \le\norm{\mathcal B_{A,\beta,D}}_{\rm HS}
 \le3^m\norm A\,e^{(3+D_-)m\beta}.
\lb{Bnorm}
\en
The absolute sum for the full traced expansion is at most
$3^N\norm A e^{\beta N(3+D_-)}$.

Closing the remaining $N-m$ site kernels into a periodic chain gives
\eq
\Tr_{\mathcal H_N}\bigl(Ae^{-\beta H_N^{\per}(D)}\bigr)
 =\Tr_{\calK_\beta}
  \bigl(\mathcal B_{A,\beta,D}K_{\beta,D}^{N-m}\bigr),
\qquad N\ge\max\{4,m+1\}.
\lb{perunnormalized}
\en
The product on the right is trace class: $\mathcal B_A$ is
Hilbert--Schmidt and so is $K^{N-m}$ for $N-m\ge1$.
Dividing by \rlb{Ztransfer}, we obtain the exact finite-chain identity
\eq
\rho_{N,\beta,D}^{\per}(A)
 =\frac{\Tr\bigl(\mathcal B_{A,\beta,D}K_{\beta,D}^{N-m}\bigr)}
        {\Tr K_{\beta,D}^{N}}.
\lb{perGibbsraw}
\en
For $A=\one$, \rlb{Bidentity} makes the numerator equal to the
denominator. Both traces on the right act on the history Hilbert
space; the equality with the physical Gibbs expectation is the result
of the bond expansion. The local observable has now been incorporated
into a block operator, so the remaining calculation concerns only the
transfer factors outside its support.

\subsection{The sign of a dominant transfer eigenvalue}
\label{S:leadingline}

We next explain what a single dominant eigenvector of the spatial transfer operator would imply.
Its existence will be proved at the required temperatures by purity
concentration; it is not being assumed for every $\beta$.
When $K_{\beta,D}$ has a simple eigenvalue whose modulus is strictly
larger than that of every other eigenvalue,
write $\lambda_{\beta,D}=\norm{K_{\beta,D}}>0$ for that modulus and
$P_{\beta,D}$ for the projection onto its one-dimensional eigenspace.

\begin{lemma}[Positivity of the leading eigenvalue]\label{L:positiveleading}
Fix $\beta>0$ and $D\in\bbR$, and suppose that $K_{\beta,D}$ has a
simple eigenvalue of strictly largest modulus. With the notation just
introduced, the leading eigenvalue is $+\lambda_{\beta,D}$.
\end{lemma}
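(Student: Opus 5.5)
The plan is to use the trace identity \rlb{Ztransfer} together with positivity of the partition function at both parities. Since $K_{\beta,D}$ is self-adjoint and Hilbert--Schmidt, it has a real discrete spectrum with eigenvalues accumulating only at $0$. By hypothesis, the eigenvalue of largest modulus is simple, so it equals $\sigma\lambda_{\beta,D}$ with $\sigma\in\{+1,-1\}$, and every other eigenvalue $\mu$ satisfies $|\mu|\le q\lambda_{\beta,D}$ for some $q<1$. The bound $q<1$ uses the fact that the spectrum accumulates only at zero. In particular, $-\sigma\lambda_{\beta,D}$ is not an eigenvalue, because simplicity with strictly largest modulus excludes a second eigenvalue of the same modulus. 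The goal is to rule out $\sigma=-1$.

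For $N\ge4$, write
\[
\Tr K_{\beta,D}^N=\sigma^N\lambda_{\beta,D}^N+\sum_{\mu\neq\sigma\lambda}\mu^N,
\]
with multiplicities counted in the sum. The remainder is bounded by $(q\lambda_{\beta,D})^{N-2}\sum_\mu\mu^2\le(q\lambda_{\beta,D})^{N-2}\norm{K_{\beta,D}}_{\rm HS}^2$, so it is $o(\lambda_{\beta,D}^N)$ as $N\uparrow\infty$. Hence $\Tr K_{\beta,D}^N/\lambda_{\beta,D}^N=\sigma^N+o(1)$.

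On the other side, \rlb{Ztransfer} identifies $\Tr K_{\beta,D}^N$ with $Z_{N,D}(\beta)=\Tr e^{-\beta H_N^{\per}(D)}$. This is strictly positive for every $N\ge4$, including odd $N$, as the paper stresses that the identity holds at both parities. Taking $N$ odd and large, we get $\sigma^N=\sigma$, so $\sigma+o(1)>0$, which forces $\sigma=+1$.

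The only step that needs care is the remainder estimate. There I would use the Hilbert--Schmidt bound already recorded after \rlb{Kdefinition} to make $\sum|\mu|^N$ finite and dominated as above. The existence of the gap $q<1$ needs a short remark that it follows from compactness.
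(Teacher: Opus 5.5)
Your proposal is correct and follows essentially the same argument as the paper. Both expand $Z_{N,D}(\beta)=\Tr K_{\beta,D}^N$ spectrally and show the non-leading part is $o(\lambda_{\beta,D}^N)$. Positivity of the physical partition function at odd $N$ then excludes the sign $-1$. The only difference is cosmetic: you control the remainder with the Hilbert--Schmidt sum of squares, while the paper uses the sum of fourth powers.
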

\begin{proof}
Abbreviate $K=K_{\beta,D}$, $\lambda=\lambda_{\beta,D}$, and
$P=P_{\beta,D}$. Let the leading eigenvalue be $\epsilon\lambda$,
with $\epsilon\in\{-1,1\}$, and let $\varrho<1$ bound the moduli
of all the other eigenvalues divided by $\lambda$.
The spectral expansion of \rlb{Ztransfer} gives
\eq
\frac{Z_{N,D}(\beta)}{\lambda^N}
 =\epsilon^N+\sum_{\xi\ne\epsilon\lambda}(\xi/\lambda)^N.
\lb{signlambda}
\en
The sum over the other eigenvalues tends to zero. To see this, use
\eq
\left|\sum_{\xi\ne\epsilon\lambda}(\xi/\lambda)^N\right|
 \le\varrho^{N-4}
        \sum_{\xi\ne\epsilon\lambda}|\xi/\lambda|^4
 \longrightarrow0,
\lb{signremainder}
\en
where the sum of fourth powers is finite and includes multiplicities. If $\epsilon=-1$, the expression would
therefore tend to $-1$ through odd $N$, contradicting positivity of
the physical partition function. Thus $\epsilon=+1$. The spectral decomposition also gives
$\norm{\lambda^{-N}K^N-P}\le\varrho^N$, which proves convergence
to $P$ in operator norm as $N\uparrow\infty$.

\end{proof}

This lemma uses positivity of the physical partition functions at odd
as well as even lengths, although the ground-state sequence studied
in the periodic case has even length.

\subsection{The periodic Gibbs limit at a fixed temperature}
\label{S:strategy}

For an operator $A$ supported on $m$ sites, define, whenever the
simple dominant eigenvalue exists,
\eq
\nu_{\beta,D}(A)=\frac{\Tr(P_{\beta,D}\mathcal B_{A,\beta,D})}
                            {\lambda_{\beta,D}^{m}}.
\lb{nufixedbeta}
\en
\begin{lemma}[Periodic Gibbs limit]\label{L:fixedtemperature}
Fix $\beta>0$ and $D\in\bbR$. If $K_{\beta,D}$ has a simple
eigenvalue of strictly largest modulus, then
\eq
\lim_{N\uparrow\infty}\rho_{N,\beta,D}^{\per}(A)=\nu_{\beta,D}(A)
\qquad(A\in\Aloc).
\lb{fixedtemperaturelimit}
\en
The expression defines a translation-invariant state on $\calA$ and
is independent of the containing interval used for its insertion.
\end{lemma}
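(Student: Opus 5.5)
We start from the exact identity \rlb{perGibbsraw} and divide numerator and denominator by $\lambda^N$, where $\lambda=\lambda_{\beta,D}$. By Lemma~\ref{L:positiveleading} the leading eigenvalue is $+\lambda$, and the proof of that lemma gives $\norm{\lambda^{-n}K^n-P}\le\varrho^n$, with $K=K_{\beta,D}$, $P=P_{\beta,D}$ and some $\varrho<1$.

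For the denominator, \rlb{signlambda}--\rlb{signremainder} give $\lambda^{-N}\Tr K^N\to1$. For the numerator we write
\eq
\lambda^{-N}\Tr\bigl(\mathcal B_AK^{N-m}\bigr)
=\lambda^{-m}\Tr(\mathcal B_AP)
+\lambda^{-m}\Tr\bigl(\mathcal B_A(\lambda^{-(N-m)}K^{N-m}-P)\bigr).
\en
The remainder is the delicate part, since the trace needs a trace-class factor. We split $K^{N-m}=K\cdot K^{N-m-1}$ and use $KP=\lambda P$, so that
\eq
\lambda^{-(N-m)}K^{N-m}-P
=\lambda^{-1}K\bigl(\lambda^{-(N-m-1)}K^{N-m-1}-P\bigr).
\en
Now $\mathcal B_AK$ is trace class, being a product of two Hilbert--Schmidt operators. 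Hence
\eq
\bigl|\Tr\bigl(\mathcal B_A\lambda^{-1}K(\cdots)\bigr)\bigr|
\le\lambda^{-1}\norm{\mathcal B_A}_{\rm HS}\norm K_{\rm HS}\,\varrho^{N-m-1}\to0.
\en
This gives the limit $\nu_{\beta,D}(A)=\Tr(P\mathcal B_A)/\lambda^m$, using cyclicity of the trace against the rank-one $P$.

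Next we check that $\nu$ does not depend on the containing interval. If $A\in\calA_I$ is viewed as $A\otimes\one$ on a larger interval $I'$, the physical Gibbs expectation $\rho^{\per}_{N,\beta,D}(A)$ is unchanged, because the regrouping was exact. So the limits agree; this is cleaner than verifying $\mathcal B_{A\otimes\one}=\mathcal B_AK$ directly. The same remark covers placing the block at a shifted position $[s+x,t+x]$. By \rlb{perGibbsraw}, the right-hand side depends only on the operator content of $A$ and the number $m$ of sites, not on the location, since $\mathcal B_{A}$ is defined from the word kernels alone. Equivalently, the periodic Gibbs state is invariant under cyclic translation. Hence $\nu(\tau_x(A))=\nu(A)$.

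It remains to show that $\nu$ extends to a state on $\calA$. On each $\calA_\Lambda$, $\nu$ is a pointwise limit of states, hence linear, positive and normalized, with $\nu(\one)=1$ by \rlb{Bidentity} and $\Tr(PK^m)=\lambda^m$. It is therefore bounded by $\norm A$ on $\Aloc$, and by consistency it extends by continuity to a state on $\calA$; translation invariance passes to the extension. The main obstacle is the trace-class bookkeeping in the remainder estimate, which the factorization through one copy of $K$ handles.
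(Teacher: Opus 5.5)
Your proof is correct and follows the paper's argument. You divide \rlb{perGibbsraw} by $\lambda^N$, send the denominator to one and the numerator to $\Tr(P\mathcal B_A)/\lambda^m$, and let positivity, normalization, consistency and translation invariance pass from the finite periodic Gibbs states to the limit. The only difference is trace-class bookkeeping: the paper uses trace-norm convergence $\norm{\lambda^{-n}K^n-P}_1\to0$, which follows from the summable spectral moments, whereas you use operator-norm convergence after peeling off one Hilbert--Schmidt factor $K$ so that $\mathcal B_AK$ is trace class; both work.
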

\begin{proof}
Lemma~\ref{L:positiveleading} gives a positive leading eigenvalue.
The argument in \rlb{signremainder}, with absolute values inside the
sum, proves
\eq
\left\|\lambda_{\beta,D}^{-n}K_{\beta,D}^{n}-P_{\beta,D}\right\|_1
 \longrightarrow0\qquad(n\uparrow\infty).
\lb{fixedtemperaturepower}
\en
Divide the numerator and denominator of \rlb{perGibbsraw} by
$\lambda_{\beta,D}^N$. The numerator converges to
\rlb{nufixedbeta} and the denominator to one. Positivity,
normalization, consistency on nested local algebras, and translation
invariance pass from the finite periodic Gibbs expectations to their
limit. In particular $|\nu_{\beta,D}(A)|\le\norm A$, so the
functional extends by norm continuity to $\calA$.
\end{proof}

The factor $\lambda_{\beta,D}^{-m}$ in \rlb{nufixedbeta} records the $m$ transfer factors
replaced by the insertion block. No projection was substituted into
an exact finite-chain expression. We have defined a state at fixed
temperature, but have not assumed a limit as $\beta\uparrow\infty$.
The next section supplies the spectral estimates at a sequence of
temperatures; Section~\ref{S:uniqueness} uses them to prove uniqueness
and, as a consequence, the zero-temperature limit.

\section{The periodic criterion and concentration estimates}
\label{S:criterion}

\subsection{Two finite purities and the doubling scales}

For even $k\ge4$ and $N\ge4$, define
\eq
S_D(k,\beta)=\frac{Z_{2k,D}(\beta)}{Z_{k,D}(\beta)^2},
\qquad
T_D(N,\beta)=\frac{Z_{N,D}(2\beta)}{Z_{N,D}(\beta)^2}.
\lb{STdef}
\en
The first is the purity of the probability list
$|\xi_a|^k/Z_{k,D}(\beta)$ formed from the transfer eigenvalues,
counted with multiplicity. The second is the physical Gibbs purity,
with all physical eigenvectors counted, and is defined at either
parity. Both lie in $(0,1]$.

Fix an even $k_0\ge4$, $\beta_0>0$, $0<\eta\le1/100$, and
$D_0>0$. The hypothesis for the main theorem is only
\eq
S_D(k_0,\beta_0)\ge1-\eta,
\qquad T_D(2k_0,\beta_0)\ge1-\eta
\qquad(|D|\le D_0).
\lb{periodiccriterion}
\en
It involves three ordinary finite periodic traces. Strict margins at
$D=0$ persist for small $|D|$ by continuity; Appendix~\ref{S:numerics}
gives explicit bounds. No open-chain quantity occurs in this criterion.
Set
\eq
\gamma_0=\frac{\log(1/(10\eta))}{\beta_0},
\qquad q=(10\eta)^{1/k_0}<1,
\lb{gamma0}
\en
and
\eq
\beta_j=2^j\beta_0,\qquad k_j=2^jk_0,\qquad j=0,1,\ldots.
\lb{scales}
\en
We prove the main theorem conditionally on \rlb{periodiccriterion};
its verification will involve only periodic numerical inputs.

\subsection{The coupled purity iteration}

For a finite or countable probability list $(p_a)$ with purity
$P=\sum_a p_a^2$, the normalized squares have purity $P'$ satisfying
\eq
1-P'=\frac{2\sum_{a<b}p_a^2p_b^2}{P^2}
 \le\frac{2(\sum_{a<b}p_ap_b)^2}{P^2}
 =\frac{(1-P)^2}{2P^2}.
\lb{squaringpurity}
\en
All sums contain nonnegative terms. Thus $P'\ge h(P)$, where
\eq
h(p)=1-\frac{(1-p)^2}{2p^2},\qquad 0<p\le1.
\lb{hdefinition}
\en
The function $h$ is increasing. Squaring the spatial probabilities
doubles length; squaring the physical probabilities doubles inverse
temperature. Direct cancellation in \rlb{STdef} gives
\eqg
S_D(k,2\beta)
 =\frac{S_D(k,\beta)^2T_D(2k,\beta)}{T_D(k,\beta)^2},
\lb{coupledone}\\
T_D(4k,\beta)
 =\frac{T_D(2k,\beta)^2S_D(2k,2\beta)}{S_D(2k,\beta)^2}.
\lb{coupledtwo}
\eng
Both denominators are at most one. Apply \rlb{squaringpurity}
first in space and then in temperature to obtain the coupled
inequalities used below. This is the periodic mechanism of
\cite{OpenAIperiodic}, written in the notation of the CT note.

Fix $D\in[-D_0,D_0]$. All estimates below are uniform on this interval.
Recall the scales in \rlb{scales}; in particular,
$\beta_{j+1}=2\beta_j$ and $k_{j+1}=2k_j$.
We shall use two error sequences:
\eq
u_j=\frac16(6\eta)^{2^j},\qquad
x_j=\frac1{10}(10\eta)^{2^j}.
\lb{errorscales}
\en
Thus $u_0=x_0=\eta$, $u_{j+1}=6u_j^2$,
$x_{j+1}=10x_j^2$, and
\eq
u_j\le x_j.
\lb{errormatching}
\en
The first sequence controls the two periodic purities. The slightly
larger second sequence will also control the boundary estimates.

\begin{lemma}[The periodic purity bounds]\label{L:periodicpurity}
Under \rlb{periodiccriterion}, for every $j\ge0$,
\eq
S_D(k_j,\beta_j)\ge1-u_j,\qquad
T_D(2k_j,\beta_j)\ge1-u_j.
\lb{uperiodic}
\en
\end{lemma}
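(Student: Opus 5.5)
The proof is by induction on $j$. The base case $j=0$ is exactly \rlb{periodiccriterion}, since $u_0=\eta$. For the inductive step, assume \rlb{uperiodic} at level $j$ and write $k=k_j$, $\beta=\beta_j$, $u=u_j$. Because $k_{j+1}=2k$ and $\beta_{j+1}=2\beta$, we must show
$$S_D(2k,2\beta)\ge1-6u^2\qquad\text{and}\qquad T_D(4k,2\beta)\ge1-6u^2 .$$
Note that $u_j\le u_0=\eta\le1/100$ for all $j$, because $6\eta<1$. All lengths that occur, namely $k,2k,4k$, are even and at least $4$, so every purity below is defined and lies in $(0,1]$.

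The two elementary inputs are the squaring bounds
$$S_D(2k',\beta')\ge h\bigl(S_D(k',\beta')\bigr),\qquad T_D(N,2\beta')\ge h\bigl(T_D(N,\beta')\bigr).$$
They follow from \rlb{squaringpurity}. The spatial list $|\xi_a|^{2k'}/Z_{2k',D}(\beta')$ is the normalized square of $|\xi_a|^{k'}/Z_{k',D}(\beta')$, using \rlb{Ztransfer}. The physical list $e^{-2\beta' E}/Z(2\beta')$ is the normalized square of $e^{-\beta' E}/Z(\beta')$. Since $h$ is increasing, lower bounds propagate through it.

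\emph{Spatial purity.} Apply \rlb{coupledone} at $(k,\beta)$ and drop the denominator, which is at most one. This gives
$$S_D(k,2\beta)\ge S_D(k,\beta)^2\,T_D(2k,\beta)\ge(1-u)^3\ge1-3u.$$
Squaring in space then gives $S_D(2k,2\beta)\ge h(1-3u)$.

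\emph{Physical purity.} Apply \rlb{coupledtwo} at $(k,\beta)$, again dropping the denominator, and use the spatial bound just obtained at $(2k,2\beta)$:
$$T_D(4k,\beta)\ge T_D(2k,\beta)^2\,S_D(2k,2\beta)\ge(1-u)^2h(1-3u).$$
Squaring in temperature at length $4k$ then gives $T_D(4k,2\beta)\ge h\bigl(T_D(4k,\beta)\bigr)$.

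It remains to check the numerics, which is the only place where care is needed. We use
$$h(1-\varepsilon)=1-\frac{\varepsilon^2}{2(1-\varepsilon)^2}.$$
With $\varepsilon=3u$ we get $1-h(1-3u)\le 6u^2$ provided $(1-3u)^2\ge3/4$, and this holds since $u\le1/100$. This settles $S_D(2k,2\beta)\ge1-6u^2$.

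For the second bound, the previous estimates give $T_D(4k,\beta)\ge(1-u)^2(1-6u^2)\ge1-\varepsilon'$ with $\varepsilon'=2u+6u^2\le2.06\,u$. Hence
$$1-T_D(4k,2\beta)\le\frac{\varepsilon'^2}{2(1-\varepsilon')^2}\le\frac{(2.06)^2}{2(0.97)^2}\,u^2<3u^2\le6u^2 .$$
Since $6u_j^2=u_{j+1}$, both bounds close the induction, uniformly for $|D|\le D_0$.

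The main obstacle is choosing the order of the two steps. A naive attempt at $T_D(4k,2\beta)$ by \rlb{coupledtwo} at inverse temperature $2\beta$ would require a spatial purity at $4\beta$, which is not yet available. Controlling the spatial purity at $(2k,2\beta)$ first and then feeding it into \rlb{coupledtwo} at the lower temperature $\beta$ avoids this, and leaves only the arithmetic above.
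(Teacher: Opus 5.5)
Your proof is correct and follows the paper's argument: the same two updates $S_D(2k,2\beta)\ge h\bigl(S_D(k,\beta)^2T_D(2k,\beta)\bigr)$ and $T_D(4k,2\beta)\ge h\bigl(T_D(2k,\beta)^2S_D(2k,2\beta)\bigr)$, taken in the same order, with the spatial bound at the new temperature fed into the physical update. The only difference is in the arithmetic of the second update: the paper uses $u_{j+1}\le u_j$ to bound all three factors by $1-u_j$ and reuses the single estimate $1-h((1-u)^3)\le6u^2$, whereas you keep the sharper factor $1-6u^2$ and check the resulting bound separately.
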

\begin{proof}
This is the iteration in the proof of Theorem~6.2 of the CT note,
with $u_0=\eta$ and $C=6$, applied to the present partition functions.
We recall the proof to fix the constants.
Equations \rlb{squaringpurity}--\rlb{coupledtwo} give
\eqa
S_D(2k,2\beta)
 &\ge h\bigl(S_D(k,\beta)^2T_D(2k,\beta)\bigr),\nl
T_D(4k,2\beta)
 &\ge h\bigl(T_D(2k,\beta)^2S_D(2k,2\beta)\bigr).
\lb{periodiciteration}
\ena
For $0\le u\le1/100$,
\eq
1-h((1-u)^3)
 \le\frac{9u^2}{2(1-3u)^2}\le6u^2\le u.
\lb{periodicstep}
\en
To obtain the first inequality, use $1-(1-u)^3\le3u$ and
$(1-u)^3\ge1-3u$ in the definition of $h$.
Suppose both purities at level $j$ are at least $1-u_j$.
The first update in \rlb{periodiciteration} and \rlb{periodicstep}
give $S_D(k_{j+1},\beta_{j+1})\ge1-6u_j^2=1-u_{j+1}$.
Since $u_{j+1}\le u_j$, this new spatial purity is also at least
$1-u_j$. In the second update we may therefore bound all three
factors inside $h$ from below by $1-u_j$, obtaining
$T_D(2k_{j+1},\beta_{j+1})\ge1-u_{j+1}$.
The hypotheses supply the case $j=0$, so induction proves the lemma.
\end{proof}

Both periodic purities are now controlled at every doubling scale,
with errors that decrease quadratically at each step.
The next step translates the first bound into explicit information
about the transfer eigenvalues; the second will relate consecutive
temperatures.

\subsection{The leading eigenvalue and the other eigenvalues}

We now translate the purity bound into a direct estimate on the
eigenvalues of $K_{\beta_j,D}$.

\begin{lemma}[Concentration of the transfer eigenvalues]\label{L:spectralconcentration}
Under \rlb{periodiccriterion}, for every $j\ge0$, $K_{\beta_j,D}$ has a simple positive eigenvalue
$\lambda_j$ whose modulus is strictly larger than that of every other
eigenvalue. Let $P_j$ be its orthogonal spectral projection, and let
$\varrho_j$ be the supremum of $|\xi|/\lambda_j$ over the other
eigenvalues $\xi$ (zero if there are no other nonzero eigenvalues).
Then
\eq
\varrho_j^{k_j}
 \le\frac{Z_{k_j,D}(\beta_j)-\lambda_j^{k_j}}{\lambda_j^{k_j}}
 \le d(u_j),\qquad
 d(u)=\frac{u}{2-3u}.
\lb{residualmoment}
\en
In particular, $\varrho_j<1$.
\end{lemma}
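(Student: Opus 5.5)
The plan is to read the purity bound of Lemma~\ref{L:periodicpurity} as a statement about a probability list built from the transfer eigenvalues. Nothing beyond elementary inequalities on that list and Lemma~\ref{L:positiveleading} should be needed. Fix $j$ and abbreviate $K=K_{\beta_j,D}$, $k=k_j$, $u=u_j\le\eta\le1/100$.

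\textbf{Step 1: the probability list.} Since $K$ is self-adjoint and Hilbert--Schmidt, it has real eigenvalues $(\xi_a)$, listed with multiplicity, and $\sum_a\xi_a^2<\infty$. Because $k\ge4$ is even, \rlb{Ztransfer} gives
\[
Z_{k,D}(\beta_j)=\sum_a\xi_a^{k}=\sum_a|\xi_a|^k>0 ,
\]
and the sum converges absolutely. Define $p_a=|\xi_a|^k/Z_{k,D}(\beta_j)$. As noted after \rlb{STdef}, its purity is $\sum_ap_a^2=S_D(k,\beta_j)$, so Lemma~\ref{L:periodicpurity} gives $\sum_ap_a^2\ge1-u$.

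\textbf{Step 2: a single dominant entry.} Let $p=\max_ap_a$, which is attained because $p_a\to0$, and set $r=1-p$. Two elementary bounds hold: $\sum_ap_a^2\le p$, and
\[
\sum_ap_a^2\le p^2+\Bigl(\max_{b\ne a_*}p_b\Bigr)r\le p^2+r^2 ,
\]
where $a_*$ is a maximizing index. The first bound gives $p\ge1-u>1/2$. Hence the maximizing index is unique, and every other entry is at most $r<1/2<p$.

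Translated back to eigenvalues, the maximal value of $|\xi_a|$ occurs for exactly one index in the list counted with multiplicity. This rules out both a degenerate top eigenvalue and a pair $\pm\lambda$. So $K$ has a simple eigenvalue of strictly largest modulus, and Lemma~\ref{L:positiveleading} shows it equals $+\lambda_j$ with $\lambda_j=\norm K$.

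\textbf{Step 3: the quantitative bound.} From $1-u\le p^2+r^2=1-2r(1-r)$ we get $2r(1-r)\le u$. Using $r\le u$, this yields $r\le u/(2-2u)$. The middle quantity in \rlb{residualmoment} equals $r/(1-r)$, and $r\mapsto r/(1-r)$ is increasing. Therefore
\[
\frac{r}{1-r}\le\frac{u/(2-2u)}{1-u/(2-2u)}=\frac{u}{2-3u}=d(u).
\]
For the first inequality in \rlb{residualmoment}, every other eigenvalue $\xi$ satisfies
\[
(|\xi|/\lambda_j)^k=p_\xi/p\le\sum_{a\ne a_*}p_a\big/p=r/(1-r).
\]
Taking the supremum over $\xi$ gives $\varrho_j^{k}\le r/(1-r)$. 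Finally, $d(u)<1$ for $u\le1/100$, so $\varrho_j<1$.

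The only delicate point is Step~2. Simplicity and strict dominance must be derived from the list counted with multiplicity, and the pair $\pm\lambda$ must be excluded. Evenness of $k$ makes both cases appear as two equal entries, and $p>1/2$ forbids two equal maximal entries. Everything else is routine algebra.
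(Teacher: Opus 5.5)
Your proposal is correct and follows essentially the same route as the paper. The paper likewise proves an elementary fact about a probability list with purity at least $1-u$: the largest entry satisfies $p_0\ge 1-u>1/2$, so it is unique, and $z=1-p_0\le u$ obeys $2z(1-z)\le u$, whence $z/(1-z)\le d(u)$. It applies this to $p_a=|\xi_a|^{k_j}/Z_{k_j,D}(\beta_j)$ and then invokes Lemma~\ref{L:positiveleading} for the sign. Your remark that evenness of $k_j$ turns a degenerate top eigenvalue or a pair $\pm\lambda$ into two equal maximal entries spells out what the paper compresses into ``the unique largest term gives a simple eigenvalue of largest modulus.''
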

\begin{proof}
We first prove an elementary fact about a probability distribution
$(p_a)$ with $\sum_a p_a^2\ge1-u$, where $0\le u\le1/100$.
Its largest term $p_0$ satisfies
$\sum_a p_a^2\le p_0\sum_a p_a=p_0$, so $p_0\ge1-u>1/2$
and it is unique. Put $z=1-p_0\le u$.
Since $\sum_{a\ne0}p_a^2\le z^2$, we have
$1-u\le(1-z)^2+z^2$, or $2z(1-z)\le u$. Since $z\le u$, we have
$2z(1-u)\le2z(1-z)\le u$, and thus $z\le u/[2(1-u)]$.
Using the fact that $z/(1-z)$ is increasing on $[0,1)$, we obtain
\eq
\frac{1-p_0}{p_0}=\frac z{1-z}
 \le\frac{u/[2(1-u)]}{1-u/[2(1-u)]}
 =\frac{u}{2-3u}=d(u).
\lb{ddef}
\en

Apply this fact to
\eq
p_a=\frac{|\xi_a|^{k_j}}{Z_{k_j,D}(\beta_j)},
\qquad
\sum_a p_a=1,\qquad
\sum_a p_a^2=S_D(k_j,\beta_j)\ge1-u_j,
\lb{eigenvalueweights}
\en
where $(\xi_a)$ lists the nonzero eigenvalues of $K_{\beta_j,D}$,
with multiplicities. The unique largest term gives a simple eigenvalue
of largest modulus. Lemma~\ref{L:positiveleading} shows that this eigenvalue
is positive; call it $\lambda_j$.
The ratio in \rlb{ddef} is now
\eq
\frac{1-p_0}{p_0}
 =\frac{\sum_{\xi_a\ne\lambda_j}|\xi_a|^{k_j}}{\lambda_j^{k_j}}
 =\frac{Z_{k_j,D}(\beta_j)-\lambda_j^{k_j}}{\lambda_j^{k_j}}.
\lb{spectralremainder}
\en
The first inequality in \rlb{residualmoment} holds because any one
nonnegative summand is at most their sum. This proves both bounds.
\end{proof}

The sum bound in \rlb{residualmoment} controls traces, while the bound
on each individual eigenvalue controls further powers of the transfer
operator. Both will be used below. The proof also applies with any
even length $k$ and any $u\le1/100$ for which
$S_D(k,\beta_j)\ge1-u$.
The dependence of $\lambda_j$, $P_j$, and $\varrho_j$ on the fixed
parameter $D$ is understood; the original operator remains
$K_{\beta_j,D}$ throughout.

\subsection{Comparing the leading eigenvalues at two temperatures}

Both the uniqueness proof and the boundary arguments will compare
$\beta_j$ and $\beta_{j+1}=2\beta_j$.
For this purpose define the scalar ratio
\eq
r_j=\frac{\lambda_{j+1}}{\lambda_j^2}.
\lb{rdefinition}
\en

\begin{lemma}\label{L:temperatureeigenvalue}
Under \rlb{periodiccriterion}, for every $j\ge0$,
\eq
0<r_j\le1,\qquad
r_j^{k_{j+1}}\ge(1-u_j)(1-u_{j+1}).
\lb{rbound}
\en
\end{lemma}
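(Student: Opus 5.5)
The plan is to sandwich the periodic partition functions between powers of the leading eigenvalues, using Lemma~\ref{L:spectralconcentration} at the two levels $j$ and $j+1$, together with the trivial bound $T_D\le1$ and the lower bound on $T_D$ from Lemma~\ref{L:periodicpurity}. Positivity of $r_j$ is immediate, since both $\lambda_j$ and $\lambda_{j+1}$ are positive by Lemma~\ref{L:spectralconcentration}. Two elementary facts about $Z_{N,D}(\beta)=\sum_a\xi_a^N$ will be used throughout; both rely on $K_{\beta,D}$ being self-adjoint, so that its eigenvalues are real.

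\textbf{Lower sandwich.} For even $N$, every $\xi_a^N\ge0$, so $Z_{N,D}(\beta_j)\ge\lambda_j^N$.

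\textbf{Upper sandwich.} For even $N\ge k_j$,
\[
Z_{N,D}(\beta_j)/\lambda_j^N
\le1+\varrho_j^{N-k_j}\sum_{\xi_a\ne\lambda_j}|\xi_a/\lambda_j|^{k_j}
\le1+d(u_j),
\]
where the last step uses $\varrho_j<1$ and \rlb{residualmoment}.

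\emph{Step 1: $r_j\le1$.} Take any even $N\ge k_j$. The lower sandwich at level $j+1$, the bound $T_D(N,\beta_j)\le1$ (i.e.\ purity at most one), and the upper sandwich at level $j$ give
\[
\lambda_{j+1}^N\le Z_{N,D}(\beta_{j+1})\le Z_{N,D}(\beta_j)^2\le\lambda_j^{2N}\bigl(1+d(u_j)\bigr)^2 .
\]
Hence $r_j^N\le(1+d(u_j))^2$ for all large even $N$. Letting $N\uparrow\infty$ yields $r_j\le1$.

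\emph{Step 2: the lower bound on $r_j^{k_{j+1}}$.} Take $N=2k_j=k_{j+1}$. Lemma~\ref{L:periodicpurity} gives
\[
1-u_j\le T_D(2k_j,\beta_j)=\frac{Z_{k_{j+1},D}(\beta_{j+1})}{Z_{2k_j,D}(\beta_j)^2}.
\]
For the numerator, apply Lemma~\ref{L:spectralconcentration} at level $j+1$ to get $Z_{k_{j+1},D}(\beta_{j+1})\le\lambda_{j+1}^{k_{j+1}}\bigl(1+d(u_{j+1})\bigr)$. For the denominator, the lower sandwich at level $j$ with the even length $2k_j$ gives $Z_{2k_j,D}(\beta_j)\ge\lambda_j^{2k_j}$. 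Combining these,
\[
1-u_j\le r_j^{k_{j+1}}\bigl(1+d(u_{j+1})\bigr).
\]

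\emph{Step 3: the elementary inequality.} It remains to check $1/(1+d(u))\ge1-u$ for $0\le u\le1/100$. Since
\[
\frac{1}{1+d(u)}=\frac{2-3u}{2-2u},
\]
the claim is equivalent to $2-3u\ge2(1-u)^2=2-4u+2u^2$, i.e.\ to $u\ge2u^2$, which holds on this range. Applying it with $u=u_{j+1}\le1/100$ gives $r_j^{k_{j+1}}\ge(1-u_j)(1-u_{j+1})$, which completes \rlb{rbound}.

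I expect no serious obstacle. The only points needing care are using even lengths, so that the discarded eigenvalue terms are nonnegative, and invoking concentration at the correct level in each direction. Self-adjointness of $K_{\beta,D}$ is essential here, because it makes the eigenvalues real.
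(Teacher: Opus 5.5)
Your proof is correct and follows the paper's argument. The upper bound comes from $Z_{N,D}(\beta_{j+1})\le Z_{N,D}(\beta_j)^2$ at large even $N$. The lower bound comes from $T_D(2k_j,\beta_j)\ge1-u_j$, keeping only the leading eigenvalue in the denominator and bounding the numerator by its leading term. The only cosmetic difference is how you bound $Z_{k_{j+1},D}(\beta_{j+1})/\lambda_{j+1}^{k_{j+1}}$: you use $1+d(u_{j+1})$ together with $1/(1+d(u))\ge1-u$, whereas the paper gets $\lambda_{j+1}^{k_{j+1}}\ge S_D(k_{j+1},\beta_{j+1})\,Z_{k_{j+1},D}(\beta_{j+1})$ directly by comparing even spectral powers.
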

\begin{proof}
To prove the upper bound, the physical Boltzmann weights give
$Z_{N,D}(2\beta_j)\le Z_{N,D}(\beta_j)^2$.
At either temperature, the simple leading eigenvalue and the summable
fourth spectral moment imply
$Z_{N,D}(\beta_j)^{1/N}\to\lambda_j$ through even $N$.
Taking $N$th roots therefore gives $\lambda_{j+1}\le\lambda_j^2$.

To prove the lower bound, list the eigenvalues of
$K_{\beta_{j+1},D}$ as $(\xi_a)$, with multiplicities.
The evenness of $k_{j+1}$ and $|\xi_a|\le\lambda_{j+1}$ give
\eq
\begin{split}
Z_{2k_{j+1},D}(\beta_{j+1})
 &=\sum_a|\xi_a|^{2k_{j+1}}\\
 &\le\lambda_{j+1}^{k_{j+1}}
       \sum_a|\xi_a|^{k_{j+1}}
 =\lambda_{j+1}^{k_{j+1}}Z_{k_{j+1},D}(\beta_{j+1}).
\end{split}
\lb{maximumeigenvalue}
\en
Dividing by the positive partition function and using the spatial
purity bound in \rlb{uperiodic} at level $j+1$, we obtain
\eq
\lambda_{j+1}^{k_{j+1}}
 \ge S_D(k_{j+1},\beta_{j+1})Z_{k_{j+1},D}(\beta_{j+1})
 \ge(1-u_{j+1})Z_{k_{j+1},D}(\beta_{j+1}).
\lb{leadingweightbound}
\en
Next use $\beta_{j+1}=2\beta_j$, $k_{j+1}=2k_j$, and the
physical-purity bound in \rlb{uperiodic}:
\eqa
\lambda_{j+1}^{k_{j+1}}
 &\ge(1-u_{j+1})T_D(2k_j,\beta_j)
                  Z_{k_{j+1},D}(\beta_j)^2\nl
 &\ge(1-u_{j+1})(1-u_j)Z_{k_{j+1},D}(\beta_j)^2\nl
 &\ge(1-u_{j+1})(1-u_j)\lambda_j^{2k_{j+1}}.
\lb{rproof}
\ena
The last step retains the leading eigenvalue in a sum of nonnegative
even powers. Division by $\lambda_j^{2k_{j+1}}$ proves the claim.
\end{proof}

The second inequality in \rlb{rbound} forces $r_j$ to be extremely
close to one: its $k_{j+1}$th power is bounded below by a number already
close to one. In Section~\ref{S:uniqueness} this will make the free-energy
correction very small. Later it will allow physical-purity estimates
to remain useful at lengths of order $\beta_j^2$.

\subsection{Uniform spatial contraction and local insertions}

Since $d(u_j)\le u_j\le x_j=q^{k_j}/10$, where the last equality
follows directly from \rlb{gamma0}, \rlb{scales}, and
\rlb{errorscales}, the individual-eigenvalue bound in
\rlb{residualmoment} implies $\varrho_j\le q$.
For every integer $s\ge k_j$, the sum bound in that equation gives
\eq
\norm{K_{\beta_j,D}^{\,s}-\lambda_j^sP_j}_1
 =\sum_{\xi_a\ne\lambda_j}|\xi_a|^s
 \le\lambda_j^s q^{s-k_j}d(u_j)
 \le\frac1{10}\lambda_j^s q^s.
\lb{residualall}
\en
Here the eigenvalues are counted with multiplicity, and $\norm{\cdot}_1$
is the trace norm, which for a self-adjoint trace-class operator is the
sum of the absolute values of its eigenvalues.
The same individual-eigenvalue bound gives, for every integer $s\ge1$,
\eq
\left\|K_{\beta_j,D}^{s}-\lambda_j^sP_j\right\|
 \le\lambda_j^s q^s.
\lb{operatorconcentration}
\en
Unlike the trace-norm bound, this operator-norm estimate does not
require $s\ge k_j$. Neither estimate requires $K_{\beta_j,D}$ itself
to be positive.

Let $A$ be supported on an interval of $m$ sites.
The indicator of the empty history is a unit vector in $\calK_\beta$.
Its Rayleigh quotient, together with positivity of the largest
transfer eigenvalue at $\beta_j$, gives
\eq
\lambda_j\ge\tr e^{-\beta_j DQ}
 \ge3e^{-\beta_j\max(D,0)}.
\lb{lambdalower}
\en
Combining this with \rlb{Bnorm}, we obtain the estimate
\eq
\frac{\norm{\mathcal B_{A,\beta_j,D}}}{\lambda_j^m}
 \le\norm A\,e^{(3+|D|)m\beta_j}
 \le\norm A\,e^{(3+D_0)m\beta_j}.
\lb{Bnormalizednorm}
\en
For fixed support, the quotient grows at most exponentially in
$\beta_j$.

Recall the exact finite-chain formula and its fixed-temperature limit:
\eqa
\rho_{N,\beta_j,D}^{\per}(A)
 &=\frac{\Tr\bigl(\mathcal B_{A,\beta_j,D}
                   K_{\beta_j,D}^{\,N-m}\bigr)}
        {Z_{N,D}(\beta_j)},
 \lb{perexpectation}\\
\nu_{\beta_j,D}(A)
 &=\frac{\Tr\bigl(P_j\mathcal B_{A,\beta_j,D}\bigr)}{\lambda_j^m}.
 \lb{nudef}
\ena
The second expression defines the state obtained in
\rlb{nufixedbeta}, so $|\nu_{\beta_j,D}(A)|\le\norm A$.
Subtract the leading term from the transfer power in the numerator
of \rlb{perexpectation}. Since
$\lambda_j^{N-m}\Tr(P_j\mathcal B_{A,\beta_j,D})
=\lambda_j^N\nu_{\beta_j,D}(A)$, the exact difference is
\eq
\begin{split}
\rho_{N,\beta_j,D}^{\per}(A)-\nu_{\beta_j,D}(A)
 ={}&\frac{\Tr\!\left[\mathcal B_{A,\beta_j,D}
       (K_{\beta_j,D}^{\,N-m}-\lambda_j^{N-m}P_j)\right]}
             {Z_{N,D}(\beta_j)}\\
 &-\nu_{\beta_j,D}(A)
       \frac{Z_{N,D}(\beta_j)-\lambda_j^N}{Z_{N,D}(\beta_j)}.
\end{split}
\lb{perresidualidentity}
\en
This identity holds at either parity.
For the estimate, take $N$ even, so that
$Z_{N,D}(\beta_j)\ge\lambda_j^N$.
If $N-m\ge k_j$, \rlb{residualall} bounds the first numerator by
$\norm{\mathcal B_{A,\beta_j,D}}\lambda_j^{N-m}q^{N-m}/10$
and the absolute value of the second numerator by
$\lambda_j^Nq^N/10$.
Using \rlb{Bnormalizednorm} gives
\eq
\abs{\rho_{N,\beta_j,D}^{\per}(A)-\nu_{\beta_j,D}(A)}
 \le\frac{\norm A}{10}
       \left\{e^{(3+D_0)m\beta_j}q^{N-m}+q^N\right\}.
\lb{exactpererror}
\en
The factor $\lambda_j^{-m}$ is accounted for by the $m$ transfer
factors replaced by the insertion block.
No eigenspace was discarded in the finite-chain identity; the
contributions other than the leading one have been bounded.
The periodic Gibbs expectation is therefore close to
$\nu_{\beta_j,D}(A)$ once the sites outside the support of $A$ are
sufficiently numerous. These bounds also control repeated insertions separated by long
intervals, which will be used in the uniqueness proof.

\subsection{The finite periodic gap}

We recall the elementary implication of physical purity. If the
Gibbs probabilities of a finite Hamiltonian, listed with multiplicity,
have purity at least $1-\varepsilon>1/2$, then their largest value
$p_0$ is at least the purity. Thus the ground eigenvector is unique,
and the probability of an individual first excited eigenvector obeys
\eq
e^{-\beta\gamma}=\frac{p_1}{p_0}
 \le\frac{1-p_0}{p_0}\le\frac{\varepsilon}{1-\varepsilon}.
\lb{puritygapbasic}
\en
This uses individual eigenvector probabilities, not the total weight
of an energy eigenspace.

\begin{lemma}[Periodic gap]\label{L:periodicgap}
Under \rlb{periodiccriterion}, $H_N^{\per}(D)$ has a unique ground
state with gap greater than $\gamma_0$ at every even $N\ge k_0$,
uniformly for $|D|\le D_0$.
\end{lemma}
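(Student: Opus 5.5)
The proof will go through the physical-purity criterion \rlb{puritygapbasic}: at a suitable inverse temperature $\beta_j$, it suffices to show that $T_D(N,\beta_j)$ is so close to one that $\varepsilon/(1-\varepsilon)<e^{-\beta_j\gamma_0}$. The gap bound then follows with strict inequality. The choice of $\beta_j$ is dictated by the identity
\[
e^{-\beta_j\gamma_0}=(10\eta)^{2^j},
\]
which holds by \rlb{gamma0} and \rlb{scales}. Thus we need $\varepsilon/(1-\varepsilon)<(10\eta)^{2^j}=10x_j$ at scale $j$. Lemma~\ref{L:periodicpurity} supplies this only at the lengths $N=2k_j$. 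For a general even $N\ge k_0$, I choose $j\ge0$ with $k_j\le N<k_{j+1}$. I then estimate $T_D(N,\beta_j)$ directly through the transfer spectra at $\beta_j$ and $\beta_{j+1}=2\beta_j$.

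\textbf{Step 1 (numerator).} Since $N$ is even, every term of $Z_{N,D}(2\beta_j)=\sum_a\xi_a^N$ over the eigenvalues of $K_{\beta_{j+1},D}$ is nonnegative. Keeping only the leading one gives $Z_{N,D}(\beta_{j+1})\ge\lambda_{j+1}^N$. No length restriction is needed here.

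\textbf{Step 2 (denominator).} Since $N\ge k_j$, \rlb{residualall} with $s=N$ gives
\[
Z_{N,D}(\beta_j)\le\lambda_j^N\bigl(1+q^N/10\bigr)\le\lambda_j^N(1+x_j),
\]
because $q^N\le q^{k_j}=10x_j$.

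\textbf{Step 3 (combining).} Steps 1 and 2 give
\[
T_D(N,\beta_j)\ge\frac{r_j^N}{(1+x_j)^2}.
\]
By Lemma~\ref{L:temperatureeigenvalue}, $0<r_j\le1$, and $N<k_{j+1}$. Hence
\[
r_j^N\ge r_j^{k_{j+1}}\ge(1-u_j)(1-u_{j+1}).
\]
Using $u_{j+1}\le u_j\le x_j$ from \rlb{errormatching}, this yields $T_D(N,\beta_j)\ge1-\varepsilon$ with $\varepsilon\le 4x_j$. Since $x_j\le\eta\le1/100$, we have $1-\varepsilon>1/2$. By \rlb{puritygapbasic}, the ground vector of $H_N^{\per}(D)$ is unique, and the gap $\gamma$ satisfies
\[
e^{-\beta_j\gamma}\le\frac{\varepsilon}{1-\varepsilon}\le\frac{4x_j}{1-4x_j}<10x_j=e^{-\beta_j\gamma_0}.
\]
Therefore $\gamma>\gamma_0$. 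All constants depend only on $\eta$, $k_0$, $\beta_0$, so the bound is uniform for $|D|\le D_0$.

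The main obstacle is Step 3, namely handling lengths strictly between the doubling scales. The point to check is that the choice $k_j\le N<k_{j+1}$ simultaneously satisfies two requirements. First, $N\ge k_j$ makes the trace-norm residual at $\beta_j$ as small as $x_j$. Second, $N<k_{j+1}$ lets the lower bound on $r_j^{k_{j+1}}$ control $r_j^N$, since $r_j\le1$. One should also confirm that the numerical loss, a factor of about $4$ against the available $10$, does not destroy the strict inequality. The crude bounds above already leave that margin. If a tighter constant were needed, I would instead bound $(1+x_j)^{-2}\ge1-2x_j$ and use $u_j+u_{j+1}\le u_j(1+6u_j)$.
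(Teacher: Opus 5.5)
Your proof is correct, but it follows a different route from the paper's.

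\textbf{The paper's route.} The paper does not use the dominant-eigenvalue structure at all for this lemma. For even $k_j\le N\le 2k_j$, it applies the elementary inequality $\sum_i z_i^b\le(\sum_i z_i^a)^{b/a}$, valid for $b\ge a$, to the absolute transfer eigenvalues at the two temperatures. This gives $Z_{N,D}(\beta_j)\le Z_{k_j,D}(\beta_j)^{N/k_j}$ and $Z_{N,D}(2\beta_j)\ge Z_{2k_j,D}(2\beta_j)^{N/(2k_j)}$, and hence
$T_D(N,\beta_j)\ge\{T_D(2k_j,\beta_j)S_D(k_j,\beta_j)^2\}^{N/(2k_j)}\ge(1-u_j)^3$.
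This uses only the two purity bounds of Lemma~\ref{L:periodicpurity} at the single level $j$.

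\textbf{Your route.} You use the simple leading eigenvalue and the trace-norm residual \rlb{residualall} at $\beta_j$. You combine these with the inter-temperature bound on $r_j$ from Lemma~\ref{L:temperatureeigenvalue}, which itself draws on the spatial purity at level $j+1$ and the physical purity at level $j$. This is exactly the mechanism the paper uses later in Proposition~\ref{P:longpurity}, in \rlb{longpartitionbounds}--\rlb{longperiodicerror}. Your restriction $N<k_{j+1}$ lets you replace the exponent $N/(2k_j)$ there by a constant, via $r_j^N\ge r_j^{k_{j+1}}$. Each step checks out: $q^N/10\le q^{k_j}/10=x_j$, $\varepsilon\le u_j+u_{j+1}+2x_j\le4x_j$, and $4x_j/(1-4x_j)<10x_j=e^{-\gamma_0\beta_j}$ since $x_j\le1/100$.

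\textbf{What each buys.} The paper's argument is lighter: it needs no spectral concentration, no bound $\varrho_j\le q$, and no comparison of leading eigenvalues across temperatures. It is also slightly sharper, giving $\varepsilon\le3u_j$. Yours costs more machinery. However, it extends without change to all $N\ge k_j$ with a defect growing only linearly in $N/k_j$, reaching lengths of order $k_j^2$ as the paper later needs. The norm-interpolation argument is confined to a single doubling interval and cannot do this.
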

\begin{proof}
For a nonnegative list and exponents $b\ge a>0$,
$\sum_i z_i^b\le(\sum_i z_i^a)^{b/a}$.
Apply this to the absolute transfer eigenvalues at the two temperatures.
For even $k\le N\le2k$, it gives
$Z_{N,D}(\beta)\le Z_{k,D}(\beta)^{N/k}$ and
$Z_{N,D}(2\beta)\ge Z_{2k,D}(2\beta)^{N/(2k)}$.
Dividing and using the definitions yields
\eq
T_D(N,\beta_j)
 \ge\{T_D(2k_j,\beta_j)S_D(k_j,\beta_j)^2\}^{N/(2k_j)}
 \ge(1-u_j)^3\ge1-3u_j
\lb{interpolation}
\en
when $k_j\le N\le2k_j$. The bracket is in $(0,1]$ and the
exponent is at most one, which explains the second inequality.
Equation \rlb{puritygapbasic} now gives
\eq
e^{-\beta_j\gamma_N^{\per}(D)}
 \le\frac{3u_j}{1-3u_j}
 <6u_j=(6\eta)^{2^j}<(10\eta)^{2^j}
 =e^{-\gamma_0\beta_j}.
\lb{pergap}
\en
The strict first comparison uses $u_j\le1/100$; the identities
follow from \rlb{errorscales}, \rlb{gamma0}, and \rlb{scales}.
The intervals $[k_j,2k_j]$ cover all lengths $N\ge k_0$.
\end{proof}

We have obtained a simple positive leading transfer eigenvalue,
uniform spatial contraction, control between successive temperatures,
and finite periodic gaps from \rlb{periodiccriterion} alone.
The next section uses this information to identify every
translation-invariant ground state, not just limits of periodic ones.

\section{Uniqueness among translation-invariant ground states}
\label{S:uniqueness}

We now prove uniqueness among translation-invariant ground states. Only the periodic criterion
\rlb{periodiccriterion} is assumed in this section. In particular, no
endpoint vector, boundary-field estimate, or convergence at zero
temperature is used as an input. We first identify the energy density
of an arbitrary translation-invariant ground state. We then compare
its exponential moments with those of the periodic Gibbs states.

\subsection{The energy density of a translation-invariant ground state}

Recall the open-chain Hamiltonian $H_N^{\free}(D)$ in
\rlb{Hfree} and its ground-state energy
$E_N^{\free}(D)=\min\operatorname{spec}H_N^{\free}(D)$.
Deleting or adding one bond changes the ground energy by at most two.
Consequently, for $n,m\ge2$,
\eq
|E_{n+m}^{\free}(D)-E_n^{\free}(D)-E_m^{\free}(D)|\le2.
\lb{almostadditive}
\en
Subadditivity of $E_N^{\free}(D)+2$ and superadditivity of $E_N^{\free}(D)-2$
show that the limit
\eq
e(D)=\lim_{N\uparrow\infty}\frac{E_N^{\free}(D)}N
     =\lim_{L\uparrow\infty}\frac{E_{2L}^{\per}(D)}{2L}
\lb{energydensity}
\en
exists. Here $E_N^{\per}(D)$ denotes the lowest energy of
$H_N^{\per}(D)$; the periodic limit is taken only through even
lengths. Iterating
\rlb{almostadditive} at a fixed length $N$ and letting the number of
copies tend to infinity gives
\eq
|E_N^{\free}(D)-Ne(D)|\le2,
\qquad |E_{2L}^{\per}(D)-2Le(D)|\le4.
\lb{energyfiniteerror}
\en

\begin{lemma}[Minimum energy density]\label{L:energydensity}
Let $\theta$ be any translation-invariant ground state of $H(D)$.
Then
\eq
\theta(\spn_0\cdot\spn_1+DQ_0)=e(D).
\lb{thetameanenergy}
\en
No purity, spectral-gap, or spin-symmetry assumption is imposed on
$\theta$.
\end{lemma}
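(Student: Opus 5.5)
The plan is to prove the two inequalities $\theta(h_0)\ge e(D)$ and $\theta(h_0)\le e(D)$ separately, where $h_0=\spn_0\cdot\spn_1+DQ_0$. Both are compared with the open-chain energies $E_N^{\free}(D)$ through the finite-length bound \rlb{energyfiniteerror}. Throughout I will use translation invariance in the form
\[
\theta(H_N^{\free}(D))=(N-1)\,\theta(\spn_0\cdot\spn_1)+ND\,\theta(Q_0)=N\theta(h_0)-\theta(\spn_0\cdot\spn_1),
\]
where $H_N^{\free}(D)$ is placed on $\Lambda_N$ and $|\theta(\spn_0\cdot\spn_1)|\le2$.

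\emph{Lower bound.} This holds for every translation-invariant state and does not use \rlb{GSdefinition}. The restriction of $\theta$ to $\calA_{\Lambda_N}$ is a density matrix, so $\theta(H_N^{\free}(D))\ge E_N^{\free}(D)$. Hence
\[
N\theta(h_0)\ge E_N^{\free}(D)-2.
\]
Dividing by $N$ and letting $N\uparrow\infty$ with \rlb{energydensity} gives $\theta(h_0)\ge e(D)$.

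\emph{Upper bound.} Here the ground-state condition is needed, and I would apply it to a sum of explicit local operators. Fix $N$ and let $\psi\in\mathcal H_N$ be a normalized ground vector of $H_N^{\free}(D)$. Let $(\phi_a)$ be an orthonormal basis of $\mathcal H_N$, and set $A_a=|\psi\rangle\langle\phi_a|\in\calA_{\Lambda_N}$. Write $H(D)=H_{\Lambda_N}+B$ locally, where $H_{\Lambda_N}=H_N^{\free}(D)$ and $B=\spn_{\ell_N-1}\cdot\spn_{\ell_N}+\spn_{r_N}\cdot\spn_{r_N+1}$ collects the two bonds crossing the boundary. All remaining terms commute with each $A_a$. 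Summing the inequalities \rlb{GSdefinition} over $a$ gives
\[
0\le\sum_a\theta\bigl(A_a^*H_{\Lambda_N}A_a-A_a^*A_aH_{\Lambda_N}\bigr)+\sum_a\theta\bigl(A_a^*BA_a-A_a^*A_aB\bigr).
\]
Since $\sum_aA_a^*A_a=\one$ and $A_a^*H_{\Lambda_N}A_a=E_N^{\free}(D)\,|\phi_a\rangle\langle\phi_a|$, the first sum equals $E_N^{\free}(D)-\theta(H_{\Lambda_N})$.

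For the boundary sum, write $X=\langle\psi|B|\psi\rangle$ for the partial expectation of $B$ in $\psi$; this is an operator on the two outside sites. Then $\sum_aA_a^*BA_a=\sum_a|\phi_a\rangle\langle\phi_a|\otimes X=\one\otimes X$, whose norm is at most $\norm B\le4$. The second term is $\sum_aA_a^*A_aB=B$, also of norm at most $4$. The boundary sum is therefore bounded by $8$, which yields
\[
\theta(H_N^{\free}(D))\le E_N^{\free}(D)+8.
\]
Combined with the translation-invariance identity above, this gives $N\theta(h_0)\le E_N^{\free}(D)+10$. Dividing by $N$ and letting $N\uparrow\infty$ proves $\theta(h_0)\le e(D)$.

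The main point needing care is this upper bound: choosing the family $A_a$ so that the commutators telescope exactly into $E_N-H_{\Lambda_N}$, and verifying that the boundary contribution is $O(1)$ uniformly in $N$. That second part is a partial-trace identity rather than a norm bound on each term separately, since there are $3^N$ terms in the sum. Everything else is bookkeeping with \rlb{energydensity} and translation invariance.
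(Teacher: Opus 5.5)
Your proof is correct and is essentially the paper's direct argument. The paper uses the same operators $C_a=|\psi_N\rangle\langle e_a|$ and sums the ground-state inequalities, phrased through the unital completely positive map $\mathcal T_N(B)=\sum_a C_a^*BC_a$; its bound $\norm{\mathcal T_N(\mathit{\Delta}H_N)}\le4$ on the boundary bonds is exactly your partial-expectation identity. This gives $E_N^{\free}(D)\le\theta(H_N^{\free}(D))\le E_N^{\free}(D)+8$, and the paper then concludes via translation invariance and division by $N$, just as you do, after noting that the lemma also follows from the minimum-energy characterization of translation-invariant ground states (Bratteli--Robinson, Theorem~6.2.58).
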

\begin{proof}
The assertion is a direct consequence of
\cite[Theorem~6.2.58]{BR2}, which characterizes translation-invariant
ground states as minimizers of the mean energy among
translation-invariant states; see also \cite{BKR}.
Its hypotheses are satisfied by our finite-range interaction.
The minimum mean energy equals $e(D)$ in \rlb{energydensity},
by the finite-chain variational principle and the construction of
limits of periodic-chain ground states following
Theorem~\ref{T:periodic}.
For completeness, we give a direct proof in the present setting.

Define the interaction joining $\Lambda_N$ to its complement by
\eq
\mathit{\Delta}H_N
=\spn_{\ell_N-1}\cdot\spn_{\ell_N}
 +\spn_{r_N}\cdot\spn_{r_N+1}.
\lb{boundaryinteraction}
\en
Since each bond operator has norm two,
$\norm{\mathit{\Delta}H_N}\le4$.
The single-ion terms do not cross the boundary.
Choose a normalized ground vector
$\psi_N$ of $H_N^{\free}(D)$ and an orthonormal basis $(e_a)$ of its physical
Hilbert space. The local operators
\eq
C_a=|\psi_N\rangle\langle e_a|,
\qquad \sum_a C_a^*C_a=\one,
\lb{replacementoperators}
\en
define the unital completely positive map
$\mathcal T_N(B)=\sum_a C_a^*BC_a$ on the quasi-local algebra.
It replaces the state inside $\Lambda_N$ by the vector state of
$\psi_N$ and leaves the exterior algebra fixed. In particular,
$\mathcal T_N(H_N^{\free}(D))=E_N^{\free}(D)\one$ and
$\norm{\mathcal T_N(\mathit{\Delta}H_N)}\le4$.
Summing the ground-state inequalities for the $C_a$ gives
\eq
\begin{split}
0&\le\sum_a\theta\bigl(C_a^*[H(D),C_a]\bigr)\\
 &=\theta\bigl(\mathcal T_N(H_N^{\free}(D)+\mathit{\Delta}H_N)
                    -H_N^{\free}(D)-\mathit{\Delta}H_N\bigr).
\end{split}
\lb{replacementenergy}
\en
The commutators are finite sums; all other interaction terms commute
with each $C_a$. It follows that
\eq
E_N^{\free}(D)\le\theta(H_N^{\free}(D))\le E_N^{\free}(D)+8.
\lb{localenergycomparison}
\en
Writing $e_\theta=\theta(\spn_0\cdot\spn_1+DQ_0)$, translation
invariance gives
$\theta(H_N^{\free}(D))=Ne_\theta-\theta(\spn_0\cdot\spn_1)$.
Divide \rlb{localenergycomparison} by $N$ and let $N\uparrow\infty$.
This proves $e_\theta=e(D)$.
\end{proof}

The compactness argument following Theorem~\ref{T:periodic},
using the classical finite-chain uniqueness result, already
establishes the existence of a translation-invariant ground state.
That existence argument does not use the uniqueness theorem for
the infinite chain or the gap estimates.
Lemma~\ref{L:energydensity} identifies the energy density of every
state to which the uniqueness theorem applies. We next compare
this density with the free energy at low temperature.

\subsection{The free-energy correction at low temperature}

At a selected inverse temperature $\beta_j$, the free energy per
site is obtained from the simple leading eigenvalue. By
\rlb{residualall},
\eq
\lim_{\substack{N\uparrow\infty\\N\ {\rm even}}}
 \left(-\frac{1}{\beta_jN}\log Z_{N,D}(\beta_j)\right)
 =-\frac{\log\lambda_j}{\beta_j}.
\lb{freeenergylimit}
\en
The elementary finite-chain bounds
$e^{-\beta E_N^{\per}(D)}\le Z_{N,D}(\beta)
\le3^N e^{-\beta E_N^{\per}(D)}$, together with
\rlb{energydensity}, imply
\eq
0\le\log\lambda_j+\beta_j e(D)\le\log3,
\qquad
-\lim_{j\uparrow\infty}\frac{\log\lambda_j}{\beta_j}=e(D).
\lb{freeenergyelementary}
\en
The nonnegative quantity $\log\lambda_j+\beta_j e(D)$ is
$\beta_j$ times the difference between the ground-state energy
density and the free-energy density. The following sharper bound
is the first new use of the comparison between successive leading
eigenvalues.

\begin{lemma}[A small free-energy correction]\label{L:freeenergy}
Under \rlb{periodiccriterion}, for every $j\ge0$,
\eq
0\le\log\lambda_j+\beta_j e(D)\le\frac{u_j}{k_j}.
\lb{freeenergysmall}
\en
The bound is uniform for $|D|\le D_0$.
\end{lemma}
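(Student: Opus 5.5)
The lower bound in \rlb{freeenergysmall} is already contained in \rlb{freeenergyelementary}, so only the upper bound needs proof. I would obtain it from a two-scale recursion for the quantity
\[
a_j=\log\lambda_j+\beta_j e(D).
\]
Since $\beta_{j+1}=2\beta_j$ and $\lambda_{j+1}=r_j\lambda_j^2$ by \rlb{rdefinition}, we get exactly
\[
a_{j+1}=2a_j+\log r_j,
\qquad\text{that is,}\qquad
a_j=\tfrac12 a_{j+1}-\tfrac12\log r_j .
\]

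Iterating this identity $n$ times gives
\[
a_j=2^{-n}a_{j+n}-\sum_{l=0}^{n-1}2^{-l-1}\log r_{j+l}.
\]
By \rlb{freeenergyelementary}, $0\le a_{j+n}\le\log3$, so the first term tends to zero as $n\uparrow\infty$. By Lemma~\ref{L:temperatureeigenvalue}, each $-\log r_{j+l}$ is nonnegative, so the series has nonnegative terms. The same lemma also bounds each term above:
\[
-\log r_{j+l}\le\frac{-\log(1-u_{j+l})-\log(1-u_{j+l+1})}{k_{j+l+1}},
\qquad k_{j+l+1}=2^{l+1}k_j .
\]
Combining these,
\[
a_j\le\frac1{k_j}\sum_{l\ge0}4^{-l-1}\bigl\{-\log(1-u_{j+l})-\log(1-u_{j+l+1})\bigr\}.
\]

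It remains to check that the sum is at most $u_j$.
\begin{itemize}
\item Since $u\le1/100$, we have $-\log(1-u)\le u/(1-u)\le1.02\,u$.
\item The sequence $u_i$ is decreasing, with $u_{i+1}=6u_i^2\le0.06\,u_i$.
\item Hence every bracket is at most $2.04\,u_j$.
\item Therefore the sum is at most $2.04\,u_j\sum_{l\ge0}4^{-l-1}=0.68\,u_j\le u_j$.
\end{itemize}
Uniformity in $D$ follows because Lemmas~\ref{L:periodicpurity} and \ref{L:temperatureeigenvalue} and the bound $a_j\le\log 3$ all hold uniformly for $|D|\le D_0$.

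The main point needing care is the telescoping step, specifically discarding $2^{-n}a_{j+n}$. This relies on the crude but $j$-independent bound $a_{j+n}\le\log3$ from \rlb{freeenergyelementary}, which in turn uses \rlb{freeenergylimit} at every scale. That limit is available because Lemma~\ref{L:spectralconcentration} gives a simple positive leading eigenvalue at every $\beta_j$. Once this is in place, the remaining estimates are elementary.
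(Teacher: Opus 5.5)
Your proof is correct and is essentially the paper's argument. The paper telescopes $\log\lambda_j/\beta_j$ using $\lambda_{j+1}=r_j\lambda_j^2$ and the limit in \rlb{freeenergyelementary>>} — equivalently, your removal of $2^{-n}a_{j+n}$ via the $\log 3$ bound — then bounds each $-\log r_\ell$ by Lemma~\ref{L:temperatureeigenvalue} and sums a geometric series in $4^{-a}$, getting the constant $2/[3(1-\eta)]$ where you get $0.68$.
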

\begin{proof}
By the definition \rlb{rdefinition} and
$\beta_{j+1}=2\beta_j$ in \rlb{scales},
\eq
\frac{\log\lambda_{j+1}}{\beta_{j+1}}
 -\frac{\log\lambda_j}{\beta_j}
 =\frac{\log r_j}{2\beta_j}.
\lb{freeenergydifference}
\en
Telescoping this equality and using the limit in
\rlb{freeenergyelementary} gives the exact identity
\eq
\log\lambda_j+\beta_j e(D)
 =-\sum_{\ell=j}^{\infty}2^{j-\ell-1}\log r_\ell.
\lb{freeenergytelescoping}
\en
The lower bound in \rlb{rbound} and
$-\log(1-u)\le u/(1-u)$ imply
\eq
0\le-\log r_\ell
 \le\frac{u_\ell+u_{\ell+1}}{2k_\ell(1-\eta)}
 \le\frac{u_\ell}{k_\ell(1-\eta)}.
\lb{logrfreeenergy}
\en
Here $u_{\ell+1}\le u_\ell\le\eta$. Substitution into
\rlb{freeenergytelescoping}, followed by $u_{j+a}\le u_j$ and
$k_{j+a}=2^ak_j$, yields
\eq
\log\lambda_j+\beta_j e(D)
 \le\frac{u_j}{2k_j(1-\eta)}\sum_{a=0}^{\infty}4^{-a}
 =\frac{2u_j}{3k_j(1-\eta)}\le\frac{u_j}{k_j}.
\lb{freeenergygeometric}
\en
The final inequality uses $\eta\le1/100$.
\end{proof}

Unlike the elementary bound by $\log3$, \rlb{freeenergysmall} remains
small after multiplication by a length of order $k_j^2$, because
$k_j^2u_j/k_j=k_ju_j\longrightarrow0$ by \rlb{errorscales}.
This is essential for the uniqueness proof: intervals of this length
will suppress the contribution of every other transfer eigenvalue,
while the free-energy contribution associated with each interval
still tends to zero.

\subsection{Exponentials of separated local observables}
\label{S:separated}

Fix a self-adjoint operator $A$ supported on $m$ consecutive sites.
For this calculation only, label the periodic ring by
$0,\ldots,N-1$ and identify the original block with $0,\ldots,m-1$.
Let us define a translated sum of $A$ by
\eq
s_j=m+2k_j^2,\qquad
A^{(n)}=\sum_{r=0}^{n-1}\tau_{rs_j}(A),\qquad N=ns_j,
\lb{separatedblocks}
\en
where $n\ge2$ is even, so $N$ is even whatever the parity of $m$.
The blocks are disjoint. Hence their observables commute and
$e^{tA^{(n)}}=\prod_{r=0}^{n-1}\tau_{rs_j}(e^{tA})$ for real $t$.

Applying the same bond expansion as in
Section~\ref{S:blockconstruction} to every block gives the exact
multiple-insertion identity
\eq
\Tr_{\mathcal H_N}\bigl(e^{tA^{(n)}}e^{-\beta_jH_N^{\per}(D)}\bigr)
 =\Tr_{\calK_{\beta_j}}
       \left(\mathcal B_{e^{tA},\beta_j,D}K_{\beta_j,D}^{2k_j^2}\right)^n.
\lb{multipleinsertion}
\en
The expansion is absolutely convergent for each finite $n$: its
absolute sum is bounded by
$3^N e^{|t|n\norm A+(3+D_-)N\beta_j}$.
There are $m$ site kernels in each insertion and $2k_j^2$ untouched
kernels between consecutive blocks, including the last and first.
No commutation between an insertion and $K$ is used.

To bound the trace, introduce the operator associated with one
insertion block and $k_j^2$ sites on each of its two sides:
\eq
\mathcal M_j(t)
 =\frac{K_{\beta_j,D}^{k_j^2}\,
             \mathcal B_{e^{tA},\beta_j,D}\,
             K_{\beta_j,D}^{k_j^2}}
        {\lambda_j^{m+2k_j^2}}.
\lb{Mblock}
\en
This notation is used only for the repeated-insertion calculation.
The operator is Hilbert--Schmidt, and cyclicity of products of
Hilbert--Schmidt operators gives
\eq
\Tr\left(\mathcal B_{e^{tA},\beta_j,D}K_{\beta_j,D}^{2k_j^2}\right)^n
 =\lambda_j^{ns_j}\Tr\mathcal M_j(t)^n.
\lb{Mtraceidentity}
\en
The rank-one identity $P_j\mathcal B P_j=\Tr(P_j\mathcal B)P_j$
and \rlb{operatorconcentration} imply
\eq
\left\|\mathcal M_j(t)-\nu_{\beta_j,D}(e^{tA})P_j\right\|
 \le3e^{|t|\norm A+(3+D_0)m\beta_j}q^{k_j^2}.
\lb{Mleading}
\en
To see the bound explicitly, replace each of the two powers
$K_{\beta_j,D}^{k_j^2}$ in \rlb{Mblock} by
$\lambda_j^{k_j^2}P_j$ plus its remainder. The two linear remainder
terms contribute at most $2q^{k_j^2}$ times
$\norm{\mathcal B_{e^{tA},\beta_j,D}}/\lambda_j^m$, and the
quadratic term contributes at most $q^{2k_j^2}$ times that quotient.
Now use \rlb{Bnormalizednorm} and $\norm{e^{tA}}\le e^{|t|\norm A}$.

\begin{lemma}[Separated exponential moment]\label{L:exponentialmoment}
Under \rlb{periodiccriterion}, for $A=A^*$ and real $t$,
\eq
\begin{split}
\limsup_{\substack{n\uparrow\infty\\n\ {\rm even}}}
 \frac1n\log\rho_{ns_j,\beta_j,D}^{\per}(e^{tA^{(n)}})
 \le\log\left\{\nu_{\beta_j,D}(e^{tA})
      +3e^{|t|\norm A+(3+D_0)m\beta_j}q^{k_j^2}\right\}.
\end{split}
\lb{exponentialmoment}
\en
\end{lemma}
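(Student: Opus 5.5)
We write the Gibbs expectation as a ratio of two transfer traces and treat the numerator and the denominator separately. By \rlb{multipleinsertion} and \rlb{Mtraceidentity},
\[
\rho_{ns_j,\beta_j,D}^{\per}(e^{tA^{(n)}})
 =\frac{\lambda_j^{ns_j}\Tr\mathcal M_j(t)^n}{Z_{ns_j,D}(\beta_j)}.
\]
Since $N=ns_j$ is even, the spectral expansion \rlb{Ztransfer} together with positivity of $\lambda_j$ gives $Z_{N,D}(\beta_j)\ge\lambda_j^{N}$. Hence
\[
\rho_{ns_j,\beta_j,D}^{\per}(e^{tA^{(n)}})\le\abs{\Tr\mathcal M_j(t)^n}.
\]
It therefore suffices to show that
\[
\limsup_n\frac1n\log\abs{\Tr\mathcal M_j(t)^n}\le\log\{\nu+\epsilon\},
\qquad
\nu=\nu_{\beta_j,D}(e^{tA}),\quad
\epsilon=3e^{|t|\norm A+(3+D_0)m\beta_j}q^{k_j^2}.
\]

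For the numerator we use that $\mathcal M_j(t)$ is trace class. Each factor $K^{k_j^2}$ is trace class, because $k_j^2\ge1$ and $K$ is Hilbert--Schmidt; the middle factor $\mathcal B$ is bounded. For a trace-class operator $M$, the trace of $M^n$ is the sum of the $n$th powers of its eigenvalues $\mu_i$, counted with algebraic multiplicity (Lidskii). This gives
\[
\abs{\Tr M^n}\le\sum_i|\mu_i|^n\le r(M)^{n-1}\sum_i|\mu_i|\le r(M)^{n-1}\norm M_1
\]
by Weyl's inequality, where $r(M)$ is the spectral radius. The factor $\norm{\mathcal M_j(t)}_1$ is finite and independent of $n$, so it disappears after taking $\frac1n\log$ and letting $n\uparrow\infty$. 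We are left with
\[
\limsup_n\frac1n\log\abs{\Tr\mathcal M_j(t)^n}\le\log r(\mathcal M_j(t)).
\]
If one prefers to avoid Lidskii, the same bound follows from $\abs{\Tr M^n}\le\norm{M}_1\norm{M^{n-1}}$ together with Gelfand's formula $\norm{M^{n-1}}^{1/(n-1)}\to r(M)$.

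It remains to bound the spectral radius, and we do this by the operator norm. The estimate \rlb{Mleading} gives
\[
r(\mathcal M_j(t))\le\norm{\mathcal M_j(t)}
 \le\abs{\nu}\,\norm{P_j}+\epsilon=\abs{\nu}+\epsilon.
\]
Because $A$ is self-adjoint, $e^{tA}$ is positive. The functional $\nu_{\beta_j,D}$ is a state by Lemma~\ref{L:fixedtemperature}, so $\nu\ge0$ and $\abs{\nu}=\nu$. This is exactly the right-hand side of \rlb{exponentialmoment}.

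The only points that need care are the following. First, the even-$N$ lower bound $Z_{N,D}(\beta_j)\ge\lambda_j^N$ needs even $N$, which is why $n$ is taken even. Second, the exponent $\frac1n$ is not compatible with the $\frac{1}{n-1}$ arising from Gelfand's formula; the factor $r^{n-1}$ versus $r^n$ costs nothing in the limit. Third, the reduction to the spectral radius has to be uniform in nothing more than $n$, since $j$, $t$ and $A$ are fixed. I expect no serious obstacle beyond verifying that $\mathcal M_j(t)$ is genuinely trace class, so that $\norm{\mathcal M_j(t)}_1<\infty$ holds independently of $n$.
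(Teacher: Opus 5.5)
Your proof is correct and follows the paper's route. Like the paper, you reduce to $|\Tr\mathcal M_j(t)^n|$ via $Z_{N,D}(\beta_j)\ge\lambda_j^N$ at even $N$ and bound the trace of the $n$th power by an $n$-independent ideal norm times roughly the $n$th power of $\norm{\mathcal M_j(t)}$; the paper uses $\norm{\mathcal M_j(t)}^{n-2}\norm{\mathcal M_j(t)}_{\rm HS}^2$, which avoids Lidskii, Weyl and the spectral-radius detour, and you then finish as it does with \rlb{Mleading} and $\nu_{\beta_j,D}(e^{tA})\ge0$. One minor slip: $K^{s}$ is trace class only for $s\ge2$, not $s\ge1$, but since $k_j^2\ge16$ nothing is affected.
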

\begin{proof}
At even $N$, $Z_{N,D}(\beta_j)\ge\lambda_j^N$.
The left side of \rlb{multipleinsertion} is positive, since it is
the trace of the product of two positive definite physical matrices.
For $n\ge2$, the trace-ideal inequalities give
\eq
|\Tr\mathcal M_j(t)^n|
 \le\norm{\mathcal M_j(t)}^{n-2}
                         \norm{\mathcal M_j(t)}_{\rm HS}^{\,2}.
\lb{powertracebound}
\en
This does not require $\mathcal M_j(t)$ to be normal, self-adjoint,
or positive. Divide \rlb{multipleinsertion} by $Z_{N,D}(\beta_j)$,
use \rlb{Mtraceidentity}, and take logarithms divided by $n$.
At fixed $j,t,A$, the Hilbert--Schmidt factor in
\rlb{powertracebound} disappears as $n\uparrow\infty$.
Finally apply \rlb{Mleading}. The argument of the logarithm is
positive because $\nu_{\beta_j,D}(e^{tA})\ge e^{-|t|\norm A}>0$.
\end{proof}

The order of limits matters: first the number $n$ of separated blocks
tends to infinity at fixed $j$, and only later does $j$ tend to
infinity. We have not asserted factorization of a many-point
expectation with an error uniform in $n$; instead we have bounded
its logarithm per block.

\subsection{A trace comparison and uniqueness}

\begin{proposition}[Uniqueness from the periodic criterion]
\label{P:TIunique}
Under \rlb{periodiccriterion}, there is exactly one
translation-invariant ground state $\omega_D$ for every $|D|\le D_0$.
Moreover $\nu_{\beta_j,D}(A)\to\omega_D(A)$ as $j\uparrow\infty$
for every local operator $A$.
\end{proposition}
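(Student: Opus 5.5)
The plan is to fix an arbitrary translation-invariant ground state $\theta$ and compare it with the fixed-temperature states $\nu_{\beta_j,D}$ through a variational lower bound on the separated exponential moment of Lemma~\ref{L:exponentialmoment}. Fix a self-adjoint $A$ supported on $m$ consecutive sites, a real $t$, and $j$. Set $s_j=m+2k_j^2$ and $N=ns_j$ with $n$ even, as in \rlb{separatedblocks}. Let $\sigma_N$ be the density matrix of the restriction of $\theta$ to $N$ consecutive sites, identified with the ring.

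First, for any density matrix $\sigma$ and self-adjoint $X$, the Gibbs variational principle together with $S(\sigma)\ge0$ gives $\log\Tr e^{X}\ge\Tr(\sigma X)$. Take $X=tA^{(n)}-\beta_jH_N^{\per}(D)$ and $\sigma=\sigma_N$. Golden--Thompson gives $\Tr e^{X}\le\Tr\bigl(e^{tA^{(n)}}e^{-\beta_jH_N^{\per}(D)}\bigr)$. Dividing by $Z_{N,D}(\beta_j)$, we obtain
\[
\log\rho^{\per}_{N,\beta_j,D}\bigl(e^{tA^{(n)}}\bigr)
\ge tn\,\theta(A)-\beta_j\theta_N\bigl(H_N^{\per}(D)\bigr)-\log Z_{N,D}(\beta_j).
\]
Here translation invariance gives $\sigma_N(\tau_{rs_j}(A))=\theta(A)$.

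Next, the energy term is controlled by Lemma~\ref{L:energydensity} and translation invariance. All bulk terms have mean $e(D)$ per site; only the closing bond differs, and it has norm $2$. Hence $\theta_N(H_N^{\per}(D))\le Ne(D)+4$. For the partition function, \rlb{residualall} at even $N\ge k_j$ gives $\log Z_{N,D}(\beta_j)\le N\log\lambda_j+\log(1+q^N/10)$. Combining these with Lemma~\ref{L:freeenergy}, we get
\[
\frac1n\log\rho^{\per}_{ns_j,\beta_j,D}\bigl(e^{tA^{(n)}}\bigr)
\ge t\theta(A)-s_j\frac{u_j}{k_j}-\frac{4\beta_j+1}{n}.
\]
Letting $n\uparrow\infty$ and applying Lemma~\ref{L:exponentialmoment} yields
\[
t\theta(A)\le\log\bigl\{\nu_{\beta_j,D}(e^{tA})+3e^{|t|\norm A+(3+D_0)m\beta_j}q^{k_j^2}\bigr\}+\frac{(m+2k_j^2)u_j}{k_j}.
\]
This is where the specific scales matter, and it is the step I expect to be the delicate one. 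The order of limits must be $n$ first and $j$ second, and both errors must vanish as $j\uparrow\infty$. The last term is $O(k_ju_j)\to0$ because $u_j$ decays doubly exponentially. The middle error vanishes because $q^{k_j^2}$ with $q<1$ beats $e^{cm\beta_j}$, since $\beta_j=(\beta_0/k_0)k_j$.

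Finally, by weak$^*$ compactness of the state space, pick a subsequence along which $\nu_{\beta_j,D}$ converges on $\Aloc$ to some state $\nu$. The states $\nu_{\beta_j,D}$ are uniformly bounded by Lemma~\ref{L:fixedtemperature}, and $e^{tA}$ is local with the same support. So in the limit $t\theta(A)\le\log\nu(e^{tA})$ for every real $t$. Expanding $\log\nu(e^{tA})=t\nu(A)+O(t^2)$, dividing by $t>0$, and letting $t\downarrow0$ gives $\theta(A)\le\nu(A)$; replacing $A$ by $-A$ gives equality. Hence every self-adjoint local $A$, and therefore all of $\calA$, satisfies $\theta=\nu$. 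The state $\nu$ does not depend on $\theta$, so all translation-invariant ground states coincide with every subsequential limit of $(\nu_{\beta_j,D})_j$. At least one translation-invariant ground state exists by the periodic compactness argument following Theorem~\ref{T:periodic}. Hence there is exactly one, call it $\omega_D$. Every subsequence of $(\nu_{\beta_j,D})$ then has a further subsequence converging to $\omega_D$, so $\nu_{\beta_j,D}(A)\to\omega_D(A)$ for every local $A$.
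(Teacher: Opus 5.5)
Your proposal is correct and follows essentially the same route as the paper. You compare an arbitrary translation-invariant ground state with the periodic Gibbs states through a Golden--Thompson bound on the separated exponential moment, control the energy and free-energy terms by Lemmas~\ref{L:energydensity} and \ref{L:freeenergy}, and apply Lemma~\ref{L:exponentialmoment} with $n\uparrow\infty$ before $j\uparrow\infty$. Differentiating at $t=0$ along a subsequential limit of $\nu_{\beta_j,D}$ then forces $\theta=\nu$, as in the paper.

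The differences are cosmetic. You use the Gibbs variational inequality with $S(\sigma)\ge0$ where the paper uses scalar Jensen together with the bound of a state by the unnormalized trace. You also bound $\log Z_{N,D}(\beta_j)$ explicitly at finite even $N$ rather than passing to the free-energy limit.
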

\begin{proof}
Let $\theta$ be any translation-invariant ground state. By
Lemma~\ref{L:energydensity},
\eq
\bigl|\theta(H_N^{\per}(D))-Ne(D)\bigr|\le4.
\lb{thetaperenergy}
\en
To see this, note that the $N-1$ internal bonds and all $N$ single-ion terms have
expectation $Ne(D)-\theta(\spn_0\cdot\spn_1)$; the closing bond
has norm two. This estimate does not assume clustering of $\theta$.

For the separated observable in \rlb{separatedblocks}, scalar
Jensen's inequality and the Golden--Thompson inequality
\cite{Golden,Thompson} give
\eq
\begin{split}
\exp\{tn\theta(A)-\beta_j\theta(H_N^{\per}(D))\}
 &\le\theta\left(e^{-\beta_jH_N^{\per}(D)+tA^{(n)}}\right)\\
 &\le\Tr_{\mathcal H_N}e^{-\beta_jH_N^{\per}(D)+tA^{(n)}}\\
 &\le Z_{N,D}(\beta_j)\,
             \rho_{N,\beta_j,D}^{\per}(e^{tA^{(n)}}).
\end{split}
\lb{JensenGT}
\en
In the first line, Jensen is applied to the scalar spectral measure
of the self-adjoint matrix $-\beta_jH_N^{\per}(D)+tA^{(n)}$ in
the restricted state $\theta$. Translation invariance gives
$\theta(A^{(n)})=n\theta(A)$. For the second line, a density matrix
has eigenvalues at most one, so its expectation of a positive matrix
is at most the unnormalized trace.\footnote{This is a very crude
comparison: for the identity matrix it says only $1\le3^N$.
Nevertheless it suffices here. After taking logarithms and dividing
by the number of separated blocks, the unperturbed contribution is
$s_j\{\log\lambda_j+\beta_j e(D)\}$, which tends to zero by
Lemma~\ref{L:freeenergy}. No estimate of the competing state's
entropy is needed.} The last line is
Golden--Thompson followed by cyclicity of the physical trace.
Thus no operator-convexity assertion about the exponential and no
relative-entropy estimate is needed.

Take logarithms in \rlb{JensenGT}, divide by $n$, and let even
$n\uparrow\infty$. Equations \rlb{freeenergylimit},
\rlb{thetaperenergy}, and \rlb{exponentialmoment} yield
\eq
\begin{split}
t\theta(A)\le{}&s_j\{\log\lambda_j+\beta_j e(D)\}\\
 &+\log\left\{\nu_{\beta_j,D}(e^{tA})
       +3e^{|t|\norm A+(3+D_0)m\beta_j}q^{k_j^2}\right\}.
\end{split}
\lb{masteruniqueness}
\en
The first term tends to zero as $j\uparrow\infty$, because
\rlb{freeenergysmall} bounds it
by $(m+2k_j^2)u_j/k_j$. The additive error inside the logarithm
also tends to zero: its negative exponent is quadratic in $k_j$,
whereas $\beta_j=(\beta_0/k_0)k_j$ grows only linearly.

Choose any state $\nu_*$ obtained as a local limit of a subsequence
$\nu_{\beta_{j_a},D}$, where $j_1<j_2<\cdots$. Such a subsequence
exists by compactness of the state space of each finite local
algebra and a diagonal argument. Thus
$\nu_{\beta_{j_a},D}(B)\to\nu_*(B)$ for every local operator $B$.
Taking $j=j_a$ and then $a\uparrow\infty$ in
\rlb{masteruniqueness} gives
\eq
t\theta(A)\le\log\nu_*(e^{tA}),\qquad t\in\bbR.
\lb{limitJensen}
\en
The right side vanishes at $t=0$ and has derivative $\nu_*(A)$
there. Dividing by $t>0$ and letting $t\downarrow0$ gives
$\theta(A)\le\nu_*(A)$. Dividing by $t<0$ and letting
$t\uparrow0$ reverses the inequality and gives
$\theta(A)\ge\nu_*(A)$. Hence $\theta(A)=\nu_*(A)$.
This holds for every self-adjoint local $A$, and therefore for every
local operator by linearity.

There exists at least one translation-invariant ground state, as
observed above. Since every such state equals every subsequential limit
$\nu_*$, there is exactly one; denote it by $\omega_D$.
The same argument identifies the limit of every convergent
subsequence of $\nu_{\beta_j,D}$, and hence proves
\eq
\lim_{j\uparrow\infty}\nu_{\beta_j,D}(A)=\omega_D(A).
\lb{nulimit}
\en
No convergence of these states, nor a gap for $\theta$, was assumed
in proving uniqueness. This completes the proof of translation-invariant uniqueness.
\end{proof}

\subsection{A quantitative local estimate}
\label{S:quantitativenulimit}

The same trace comparison gives a rate, which will be useful for
finite-chain convergence. If $A=A^*$ and $\norm A\le1$, the
scalar spectral measure of $A$ in any state $\nu$ is supported on
$[-1,1]$. The second derivative of $\log\nu(e^{tA})$ is the variance
in the exponentially tilted measure and is at most one. Therefore
\eq
\log\nu(e^{tA})\le t\nu(A)+\frac{t^2}{2},\qquad t\in\bbR.
\lb{scalarlogmoment}
\en
For $|t|\le1$, also $\nu(e^{tA})\ge e^{-1}$ and
$\log(a+b)\le\log a+b/a$ for $a>0$, $b\ge0$.
Apply these facts to \rlb{masteruniqueness} with $\theta=\omega_D$,
and then apply the result to both signs of $A$. For $0<t\le1$,
\eq
\begin{split}
|\omega_D(A)-\nu_{\beta_j,D}(A)|
 \le\frac t2+\frac1t\left\{
       \frac{(m+2k_j^2)u_j}{k_j}
       +3e^{2+(3+D_0)m\beta_j}q^{k_j^2}\right\}.
\end{split}
\lb{nulimitoptimized}
\en
The quantity in braces tends to zero uniformly in $D$ and in such
$A$. For large $j$, choose $t$ to be the square root of twice that
quantity; it is then at most one. Increasing a constant to cover the
finitely many other $j$, and decomposing a general $A$ into its
self-adjoint real and imaginary parts, gives
\eq
\begin{split}
|\omega_D(A)-\nu_{\beta_j,D}(A)|
 \le C_I\norm A\left\{\sqrt{k_ju_j}
       +e^{(3+D_0)m\beta_j/2}q^{k_j^2/2}\right\},
 \qquad A\in\calA_I.
\end{split}
\lb{nurate}
\en
The constant $C_I$ depends on the fixed support, not on $D$ or $j$.

\subsection{Continuity, symmetry, and the local gap}
\label{S:bulkconsequences}

Proposition~\ref{P:TIunique} and the argument following
Theorem~\ref{T:periodic} prove the periodic limit \rlb{PBClimit}.
We next prove continuity of $\omega_D$ in $D$.
Let $(D_a)_{a=1,2,\ldots}$ be any sequence in $[-D_0,D_0]$ with
$D_a\to D$. A subsequence of the states $\omega_{D_a}$ has a
local limit $\theta$. This state is translation invariant. For
local $A$, the norm convergence
$[H(D_a),A]\to[H(D),A]$ allows the ground-state inequalities to
pass to the limit, so $\theta$ is a ground state of $H(D)$.
Uniqueness at $D$ gives $\theta=\omega_D$. The same conclusion
holds for every convergent subsequence, proving
$\omega_{D_a}(A)\to\omega_D(A)$ for every local $A$.
Norm approximation then gives continuity for every $A\in\calA$.

The periodic convergence is also uniform for $|D|\le D_0$.
Otherwise, for some fixed local observable, one could choose
$L_a\uparrow\infty$ and $D_a\to D$ along a subsequence on which
the claimed convergence fails. A further local limit of the
periodic ground-state expectations is a translation-invariant
ground state of $H(D)$ and must equal $\omega_D$. The continuity
just proved gives the same limit for $\omega_{D_a}$, a
contradiction. Since each fixed local algebra is finite dimensional,
this convergence is uniform also on its unit ball.

Every finite periodic ground-state expectation has the symmetries of
its Hamiltonian, since its ground eigenspace is one dimensional.
Their limits inherit reflections, uniform rotations about the
$\mathrm z$ axis, and the uniform $\pi$ rotation about the
$\mathrm x$ axis; at $D=0$ they inherit all spin rotations.
For a local operator $A$, the finite-chain gap gives
\eq
\omega_{N,D}^{\per}\bigl(A^*[H_N^{\per}(D),A]\bigr)
 \ge\gamma_0\left\{\omega_{N,D}^{\per}(A^*A)
                   -|\omega_{N,D}^{\per}(A)|^2\right\}.
\lb{finitegapineq}
\en
For sufficiently large centered rings, the commutator equals
$[H(D),A]$. Passing to \rlb{PBClimit} proves
\rlb{localgap}. The strict finite-chain gap becomes a non-strict
lower bound in the limit. A locally-unique gapped ground state is
pure by \cite[Theorem~2.4 and Appendix~A]{TasakiLSMReview}.

This proves Theorem~\ref{T:main} under the periodic criterion,
as well as the qualitative convergence and uniformity in
Theorem~\ref{T:periodic}. Its finite-chain gap assertion was proved
in Lemma~\ref{L:periodicgap}; its quantitative convergence bound
will follow in Section~\ref{S:quantitativerate} from the periodic
estimates. Appendix~\ref{S:numerics} supplies explicit parameters.
In particular, the infinite-chain state is determined before any
endpoint-field estimates are introduced. The next sections show
that those fields select the same state.
\section{Open chains with endpoint magnetic fields}
\label{S:fieldtransfer}

The unique translation-invariant ground state $\omega_D$ is now determined. We return to the
finite chains with equal endpoint fields and show that their local
limits are translation invariant. The exact transfer construction
uses the same $K_{\beta,D}$ and the same insertion block as before;
only the endpoint contractions change.

\subsection{Open chains with equal endpoint fields}
\label{S:equalfields}

We next derive the representation for the open chain with the boundary condition of
interest.
Write
\eq
Y_{N,D}^{\fld}(\beta)=\Tr_{\mathcal H_N}e^{-\beta H_N^{\fld}(D)},
\qquad N\ge2.
\lb{Yfielddefinition}
\en
A field term $-h_0\Sz$ contributes $+h_0\Sz=-ih_0G^{\mathrm z}$
to the negative Hamiltonian in the ordered expansion.
We can therefore represent its events by an extra history at the
endpoint: only $\mathrm z$ marks occur, and each carries the scalar
coefficient $-ih_0$. These endpoint histories are integrated just like
bond histories, but do not add physical sites or bonds.

For $\alpha\in\{\mathrm x,\mathrm y,\mathrm z\}$, let $r_\alpha(x)$
be the number of marks of color $\alpha$ in $x\in\Omega_\beta$.
Define the boundary vector
\eq
v_\beta(x)=
\begin{cases}
 (-ih_0)^{r_{\mathrm z}(x)},&r_{\mathrm x}(x)=r_{\mathrm y}(x)=0,\\
 0,&\text{otherwise}.
\end{cases}
\lb{boundaryvector}
\en
The empty-history value is one. Note also that
\eq
\norm{v_\beta}^2
 =\int_{\Omega_\beta}d\mu_\beta(x)\,|v_\beta(x)|^2
 =\sum_{r=0}^{\infty}\frac{\beta^rh_0^{2r}}{r!}
 =e^{\beta h_0^2}.
\lb{boundaryvectornorm}
\en
The factors $\beta^r/r!$ come from
the ordered time integrations, not from the definition of $v_\beta$.
The vector is independent of $D$: the single-ion term is already
included in $k_{\beta,D}$.

Our inner product is antilinear in its first argument. Thus $v_\beta$
already supplies the desired coefficient on the \emph{right}, whereas
on the left we need its complex conjugate as a vector.
Introduce
\eq
(\Ux f)(x)=(-1)^{r_{\mathrm y}(x)+r_{\mathrm z}(x)}f(x).
\lb{Uxdefinition}
\en
This is the self-adjoint unitary representing a $\pi$ rotation about
the $\mathrm x$ axis in history space. In particular,
\eq
\Ux^*=\Ux,\qquad \Ux^2=\one,\qquad
\Ux v_\beta=\overline{v_\beta},\qquad
\Ux K_{\beta,D}=K_{\beta,D}\Ux.
\lb{Uxcommutes}
\en
For the last identity, rotate every matrix in the single-site word:
$G^{\mathrm x}$ is unchanged, $G^{\mathrm y}$ and $G^{\mathrm z}$
change sign, and $Q$ is unchanged. Thus a nonzero kernel can only
connect histories with the same $\Ux$ sign, which proves commutation.

At each end, the coefficient entering the expansion is now
$(-ih_0)^r$: on the right it is $v_\beta$, and on the left it is
$\overline{\Ux v_\beta}=v_\beta$.
Regrouping the $N$ site words, with the $N-1$ internal bond histories
and the two endpoint histories, yields
\eq
Y_{N,D}^{\fld}(\beta)
 =\langle\Ux v_\beta,K_{\beta,D}^{N}v_\beta\rangle.
\lb{Yfieldtransfer}
\en
There are $N$ kernels because there are $N$ physical sites.
The endpoint histories are already accounted for by the two vectors.
The absolute sum of the traced terms is bounded by
$3^N\exp\{\beta[ND_-+3(N-1)+2h_0]\}$, which justifies the
regrouping. See Figure~\ref{fig:field}(a).

\begin{figure}[tbp]
\centering
\includegraphics[width=0.85\textwidth]{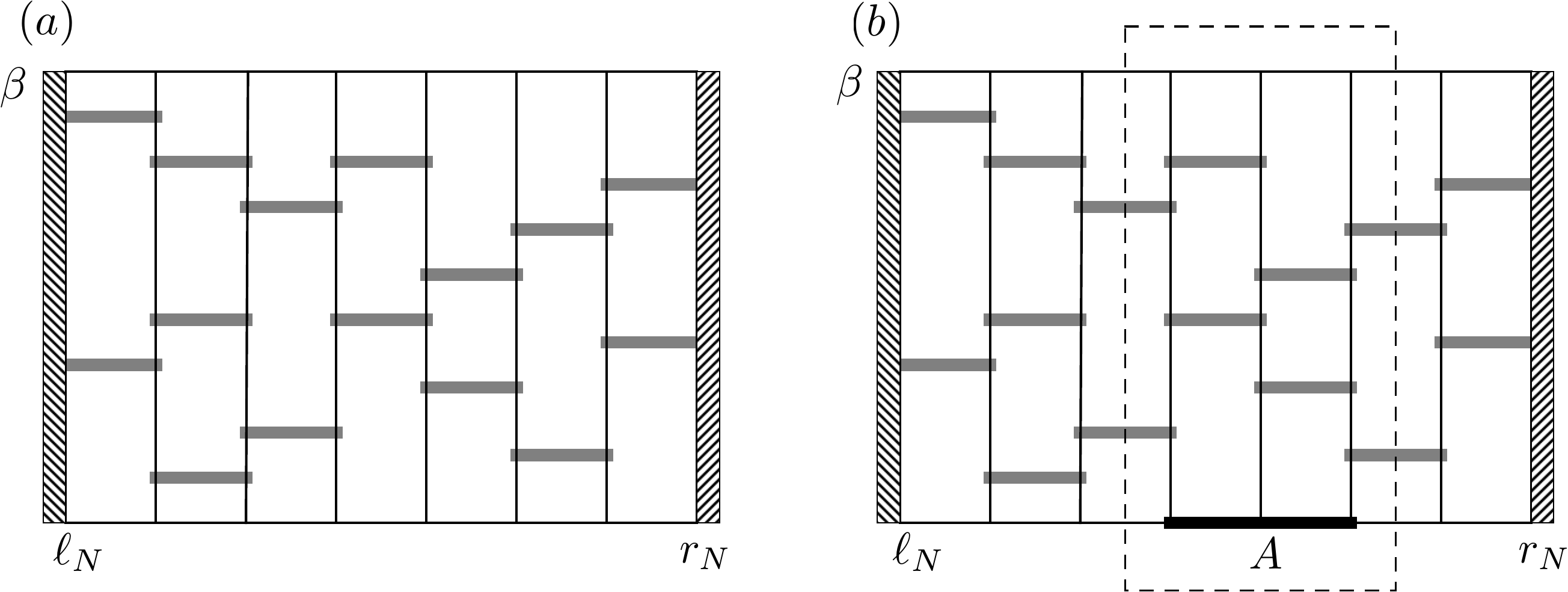}
\caption{The open chain with equal endpoint fields. Panel (a) represents
$Y_{N,D}^{\fld}(\beta)$ and panel (b) represents
$\Tr_{\mathcal H_N}(Ae^{-\beta H_N^{\fld}(D)})$.
The hatched strips indicate the two equal field terms $-h_0\Sz$;
their ordered field events are integrated against $\Ux v_\beta$ on
the left and $v_\beta$ on the right, with conjugation in the left slot
of the inner product. There is no spatial identification of the two
ends. The insertion block in (b) is exactly the same block operator
$\mathcal B_{A,\beta,D}$ as in Figure~\ref{fig:periodic}; only the
way its exterior site kernels are contracted has changed. In panel (b), the block contains $m=3$ sites,
with $d_{\rm L}=3$ sites to its left and $d_{\rm R}=2$ to its right,
so the drawing has $N=8$. The endpoint labels $\ell_N,r_N$ cover both
parities; the endpoint-field ground-state construction uses odd
lengths.}
\label{fig:field}
\end{figure}

For $I=[s,t]\subset\Lambda_N=\{\ell_N,\ldots,r_N\}$, put
\eq
d_{\rm L}=s-\ell_N,\qquad d_{\rm R}=r_N-t,\qquad
m=t-s+1,\qquad N=d_{\rm L}+m+d_{\rm R}.
\lb{distances}
\en
Inserting $A$ at time $0$ as in Figure~\ref{fig:field}(b) gives
\eq
\Tr_{\mathcal H_N}\bigl(Ae^{-\beta H_N^{\fld}(D)}\bigr)
 =\langle\Ux v_\beta,
   K_{\beta,D}^{d_{\rm L}}\mathcal B_{A,\beta,D}
   K_{\beta,D}^{d_{\rm R}}v_\beta\rangle.
\lb{fieldunnormalized}
\en
Consequently,
\eq
\rho_{N,\beta,D}^{\fld}(A)
 =\frac{\langle\Ux v_\beta,
   K_{\beta,D}^{d_{\rm L}}\mathcal B_{A,\beta,D}
   K_{\beta,D}^{d_{\rm R}}v_\beta\rangle}
        {\langle\Ux v_\beta,K_{\beta,D}^{N}v_\beta\rangle}.
\lb{fieldGibbsraw}
\en
This is again an exact identity at both parities. In particular, the
same insertion operator describes the periodic and open expectations.
We retain $\Ux$ on the left throughout; it cannot in general be moved
across $\mathcal B_A$ without also rotating the observable.

\subsection{The auxiliary opposite-field chain}
\label{S:oppositefields}

If the same vector $v_\beta$ is used in both slots of \rlb{Yfieldtransfer}, conjugation reverses
the sign of the left field. This motivates us to define
\eq
H_N^{\opp}(D)
 =\sum_{j=\ell_N}^{r_N-1}\spn_j\cdot\spn_{j+1}
  +D\sum_{j\in\Lambda_N}Q_j
  +h_0\Sz_{\ell_N}-h_0\Sz_{r_N},
\lb{Hopp}
\en
and, for every $N\ge2$, set
\eq
Y_{N,D}(\beta)=\Tr_{\mathcal H_N}e^{-\beta H_N^{\opp}(D)}
 =\langle v_\beta,K_{\beta,D}^{N}v_\beta\rangle.
\lb{Ytransfer}
\en
Thus $Z$ always means periodic boundary conditions, $Y^{\fld}$ equal
endpoint fields, and $Y$ opposite endpoint fields.
All three are strictly positive physical partition functions at both
parities. For \emph{even} $N$, the last expression has the additional
property
\eq
Y_{N,D}(\beta)
 =\langle v_\beta,K_{\beta,D}^{\,N}v_\beta\rangle
 =\sum_\xi |\xi|^N\norm{P_\xi v_\beta}^2,
\lb{Ypositive}
\en
where $P_\xi$ is the spectral projection for a nonzero eigenvalue $\xi$.
This nonnegative weighted moment is the reason for introducing the
opposite-field chain. It will allow us to quantify the overlap of the
boundary vector with the leading spatial eigenvector.

The opposite-field chain has supplied a nonnegative sum at even
lengths, with coefficients fixed by the endpoint vector. We will use
this sum to control the equal-field expectations; the physical
boundary condition in Theorem~\ref{T:boundary} remains unchanged.

\medskip\noindent
{\em Remark:}\/
Our convention is equivalent to that of \cite{OpenAIboundary}, but
puts the ordinary vector on the right in \rlb{Yfieldtransfer}.
That paper uses $v_\beta^{\rm old}(x)=(ih_0)^{r_{\mathrm z}(x)}$
on the same histories, so $v_\beta=\Ux v_\beta^{\rm old}$.
Its opposite-field Hamiltonian has the two field signs reversed.
A global $\pi$ rotation interchanges these signs and preserves the
bulk terms, including $DQ_j$, so the partition functions $Y$ agree.
Since $\Ux$ commutes with every spectral projection of $K_{\beta,D}$,
the boundary-weighted moments and squared overlaps agree as well.
The imported numerical estimates therefore need no change.

\subsection{The symmetry of the leading eigenspace}

\begin{lemma}[The spin rotation on the leading eigenspace]\label{L:character}
Fix $\beta>0$ and $D\in\bbR$. Suppose that $K_{\beta,D}$ has a
simple positive eigenvalue $\lambda_{\beta,D}$ of strictly largest
modulus, with projection $P_{\beta,D}$. If
$c_{\beta,D}:=\norm{P_{\beta,D}v_\beta}^2>0$, then
$\Ux P_{\beta,D}=P_{\beta,D}$.
\end{lemma}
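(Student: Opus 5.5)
The plan is a sign argument analogous to the proof of Lemma~\ref{L:positiveleading}. This time it uses positivity of the equal-field partition functions $Y_{N,D}^{\fld}(\beta)$ in place of the periodic ones.

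First, $\Ux$ acts on the leading eigenspace as a sign. By \rlb{Uxcommutes}, $\Ux$ is a self-adjoint unitary commuting with $K_{\beta,D}$. Hence it commutes with every spectral projection of the compact self-adjoint operator $K_{\beta,D}$, in particular with $P_{\beta,D}$. The range of $P_{\beta,D}$ is one-dimensional and invariant under $\Ux$, and $\Ux^2=\one$. Therefore $\Ux P_{\beta,D}=\epsilon P_{\beta,D}$ for some $\epsilon\in\{-1,+1\}$. It remains only to exclude $\epsilon=-1$.

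Second, I identify the large-$N$ behaviour of the equal-field partition function. The proof of Lemma~\ref{L:positiveleading} gives
$\norm{\lambda_{\beta,D}^{-N}K_{\beta,D}^N-P_{\beta,D}}\le\varrho^N\to0$ in operator norm. Since $v_\beta\in\calK_\beta$ has finite norm by \rlb{boundaryvectornorm}, the representation \rlb{Yfieldtransfer} then yields
\[
\lambda_{\beta,D}^{-N}Y_{N,D}^{\fld}(\beta)
\longrightarrow\langle\Ux v_\beta,P_{\beta,D}v_\beta\rangle
=\langle v_\beta,\Ux P_{\beta,D}v_\beta\rangle
=\epsilon\,\norm{P_{\beta,D}v_\beta}^2=\epsilon\,c_{\beta,D}.
\]
The middle equality uses $\Ux^*=\Ux$.

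Third, I conclude. Each $Y_{N,D}^{\fld}(\beta)$ is the trace of a positive definite matrix, so it is strictly positive. Together with $\lambda_{\beta,D}>0$, the limit $\epsilon c_{\beta,D}$ must therefore be nonnegative. Since $c_{\beta,D}>0$, this forces $\epsilon=+1$, that is, $\Ux P_{\beta,D}=P_{\beta,D}$.

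The only point needing care is the bookkeeping of the antilinear slot. The left vector in \rlb{Yfieldtransfer} is $\Ux v_\beta$, not $v_\beta$, and the sign $\epsilon$ must be read off after moving $\Ux$ across the inner product. No parity restriction on $N$ is needed, because the equal-field partition function is positive at every length.
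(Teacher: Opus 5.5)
Your proposal is correct and is essentially the paper's own proof. Like the paper, you show that $\Ux$ acts as a sign $\pm1$ on the one-dimensional leading eigenspace, that the normalized equal-field partition functions $\lambda_{\beta,D}^{-N}Y_{N,D}^{\fld}(\beta)$ converge to that sign times $c_{\beta,D}$, and that positivity of these physical partition functions together with $c_{\beta,D}>0$ forces the sign to be $+1$.
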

\begin{proof}
Abbreviate $P=P_{\beta,D}$ and $\lambda=\lambda_{\beta,D}$.
Choose a normalized vector $\varphi$ in the range of $P$.
Commutation with $K_{\beta,D}$ implies that $\Ux\varphi$ is in
the same one-dimensional eigenspace. Thus
$\Ux\varphi=\sigma\varphi$; from $\Ux^2=\one$ we get
$\sigma=\pm1$. This scalar action is on the leading eigenspace,
not on the whole vector $v_\beta$ or the whole history space.
The operator-norm convergence of the normalized transfer powers gives
\eq
\lim_{N\uparrow\infty}
 \frac{Y_{N,D}^{\fld}(\beta)}{\lambda^N}
 =\langle\Ux v_\beta,Pv_\beta\rangle
 =\sigma c_{\beta,D}.
\lb{signcharacter}
\en
The left side is a limit of positive physical partition functions.
Since $c_{\beta,D}>0$, it cannot be negative. Hence $\sigma=+1$.
\end{proof}

The uniqueness argument required no spin-rotation symmetry of a
competing state. Here the discrete $\pi$ rotation of the bulk
Hamiltonian is used explicitly to relate the partition functions
with equal and opposite endpoint fields.
Continuous rotation symmetry is not used in this comparison.

\subsection{The additional finite boundary criterion}

In addition to \rlb{STdef}, define
\eq
\tau_D(N,\beta)=\frac{Y_{N,D}(2\beta)}{Y_{N,D}(\beta)^2},
\qquad
\mathcal C_D(k,\beta)=
 \frac{Y_{2k,D}(\beta)}{Z_{k,D}(\beta)Y_{k,D}(\beta)}.
\lb{boundaryratios}
\en
Here $k\ge4$ is even. The first ratio is the physical Gibbs purity
of the opposite-field chain. The second is not used as a purity; it
will certify a nonzero leading overlap using only ordinary traces.
In addition to the periodic criterion, assume
\eq
\tau_D(2k_0,\beta_0)\ge1-\eta,
\qquad \mathcal C_D(k_0,\beta_0)\ge1-\frac\eta2
\qquad(|D|\le D_0).
\lb{boundarycriterion}
\en
For later statements we record the combined criterion as
\eq
\begin{split}
S_D(k_0,\beta_0)\ge1-\eta,\quad T_D(2k_0,\beta_0)&\ge1-\eta,\quad
\tau_D(2k_0,\beta_0)\ge1-\eta,\\
\mathcal C_D(k_0,\beta_0)&\ge1-\frac\eta2.
\end{split}
\lb{finitecriterion}
\en
It involves six physical traces, the three periodic traces and the
analogous three opposite-field traces. The following proposition is
the intermediate target for the boundary analysis.

\begin{proposition}[Finite-chain approximation]\label{P:finiteapproximation}
Fix $h_0>0$ and assume \rlb{finitecriterion}, with the scales and
constants defined in Section~\ref{S:criterion}.
The finite-chain ground states are unique and have gaps greater than
$\gamma_0$ for every even periodic length $N\ge k_0$ and every odd
field-chain length $N\ge2k_0+1$.
At every $\beta_j$, the endpoint-field Gibbs expectations converge,
at fixed temperature, to the periodic state $\nu_{\beta_j,D}$.

For each fixed nonempty finite interval $I=[s,t]$, there is a constant
$C_I>0$, independent of $D,j,N$, such that, for $A\in\calA_I$,
\eq
\begin{split}
\abs{\omega_{N,D}^{b}(A)-\nu_{\beta_j,D}(A)}
 \le\norm A\left\{
 \frac{N}{2k_j}e^{-\gamma_0\beta_j}
 +C_I e^{(3+D_0)m\beta_j}q^{d_N(I)}\right\}.
\end{split}
\lb{finiteapproximation}
\en
Here $m=t-s+1$, $I\subset\Lambda_N$, and
$d_N(I)=\min\{s-\ell_N,r_N-t\}$ is the smaller number of sites
on the two sides of $I$. The bound holds whenever $d_N(I)\ge k_j$,
for even $N$ with $b=\per$ and odd $N$ with $b=\fld$.
All assertions about periodic chains hold under
\rlb{periodiccriterion} alone.
\end{proposition}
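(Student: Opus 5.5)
The plan is to split the error in \rlb{finiteapproximation} into a temperature part and a spatial part, using the triangle inequality $|\omega_{N,D}^b(A)-\nu_{\beta_j,D}(A)|\le|\omega_{N,D}^b(A)-\rho_{N,\beta_j,D}^b(A)|+|\rho_{N,\beta_j,D}^b(A)-\nu_{\beta_j,D}(A)|$. The first term is a zero-temperature versus $\beta_j$ comparison at fixed length. It is controlled by the physical Gibbs purity at inverse temperature $\beta_j$ and length $N$. The second term is a spatial comparison at fixed $\beta_j$, controlled by the transfer concentration \rlb{residualall}, \rlb{operatorconcentration}. For periodic chains everything is already available: the gap is Lemma~\ref{L:periodicgap}, and the spatial term is \rlb{exactpererror}, which is of the form $C_Ie^{(3+D_0)m\beta_j}q^{d_N(I)}$ since $N-m\ge2d_N(I)$. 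For the temperature term I would use the elementary bound $|\omega(A)-\rho(A)|\le 2\norm A(1-p_0)$, where $p_0$ is the ground-state Gibbs weight, together with $1-p_0\le 1-T$.

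To keep the factor $N/(2k_j)$, I would extend the interpolation \rlb{interpolation} to lengths $N\ge2k_j$. This uses $Z_{N}(2\beta_j)\ge\lambda_{j+1}^N$ and $Z_N(\beta_j)\le\lambda_j^N(1+q^N/10)$, which give $T_D(N,\beta_j)\ge r_j^N(1+\cdots)^{-2}$. By \rlb{rbound}, $r_j^N\ge[(1-u_j)(1-u_{j+1})]^{N/k_{j+1}}$, so $1-T\lesssim (N/k_j)u_j$, and $u_j<e^{-\gamma_0\beta_j}/10$ then produces the $\frac{N}{2k_j}e^{-\gamma_0\beta_j}$ term. Only the lower bound $r_j^{k_{j+1}}$ close to one is needed here, which is exactly why lengths of order $\beta_j^2$ remain usable.

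For the odd field chains, I would first propagate the opposite-field purity $\tau_D(2k_j,\beta_j)\ge1-x_j$ by the coupled iteration of \cite{OpenAIboundary}. In the combined recursion, $\tau$ at doubled scale is expressed through $\tau$, $S$, $T$ and the overlap ratio, and the error sequence $x_j$ absorbs the extra loss. In parallel I would show that the leading overlap $c_j=\norm{P_jv_{\beta_j}}^2$ satisfies $\sum_{\xi\ne\lambda_j}|\xi|^{k_j}\norm{P_\xi v}^2\le x_j c_j\lambda_j^{k_j}$. This is initialized from $\mathcal C_D(k_0,\beta_0)\ge1-\eta/2$, by writing $Y_{2k}\le \lambda^{k}Y_k$ and comparing with $Z_kY_k$, and then improved along the scales. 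Lemma~\ref{L:character} makes $\langle\Ux v,P_jv\rangle=c_j>0$. Consequently $Y^{\fld}_{N}(\beta_j)=\lambda_j^N c_j(1+O(x_jq^{N-k_j}))$, and the numerator of \rlb{fieldGibbsraw} becomes $\lambda_j^{N-m}\langle\Ux v,P_j\mathcal B_AP_jv\rangle$ plus remainders. By the rank-one identity, this main term equals $\lambda_j^N c_j\nu_{\beta_j,D}(A)$. The remainders are bounded via \rlb{operatorconcentration} on the sides of length $d_{\rm L},d_{\rm R}\ge d_N(I)$, after normalizing the boundary vector contributions by $c_j$. This also gives the fixed-temperature convergence to $\nu_{\beta_j,D}$.

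For the field-chain gap and temperature term, I would bound the equal-field Gibbs purity $Y^{\fld}_N(2\beta_j)/Y^{\fld}_N(\beta_j)^2$ from below, using the same leading-eigenvalue asymptotics and $c_{j+1}$ versus $c_j^2$ control inherited from $\tau$. I would then proceed exactly as in \rlb{pergap} to get a gap greater than $\gamma_0$ for odd $N\ge2k_0+1$, together with the $N/(2k_j)$ temperature error. I expect the main obstacle to be the overlap control: proving that $c_j$ stays bounded below, or at least that its ratio to the non-leading boundary weights stays small, uniformly along the doubling scales. This step requires coupling $\tau$ with $S,T$ and with $c_{j+1}/c_j^2$. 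I would try to mimic the proof of \cite[Proposition~6.2]{OpenAIboundary}, keeping the explicit $N$-dependence.
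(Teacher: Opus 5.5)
Your decomposition and most of your ingredients match the paper's. These are the triangle inequality through $\rho^{b}_{N,\beta_j,D}$, \rlb{exactpererror} together with $T_D(N,\beta_j)\ge r_j^N/(1+x_j)^2$ for $N\ge2k_j$ in the periodic half, \rlb{puritytostate}, the rank-one evaluation of the leading insertion term using Lemma~\ref{L:character}, and the long field purity through $\tau_D(2k_j,\beta_j)r_j^{N-2k_j}$. The gap is the boundary overlap, which you flag as the main obstacle but leave unresolved; the field-chain half of the proof rests on it.

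First, the seed does not work as you describe it. $Y_{2k,D}\le\lambda_j^kY_{k,D}$ combined with $\mathcal C_D\ge1-\eta/2$ only gives $\lambda_j^k/Z_{k,D}\ge1-\eta/2$, and these two inequalities do not exclude $c_j=0$. You need the strict contraction $(|\xi|/\lambda_j)^k\le d(u)$ of every nonleading eigenvalue. It sharpens the comparison to $Y_{2k,D}\le\lambda_j^kY_{k,D}-(1-d(u))\lambda_j^k(Y_{k,D}-\lambda_j^kc_j)$, and hence gives $e_k(j)\le(1-\mathcal C_D)/(1-d(u))$, as in Lemma~\ref{L:overlapseed}.

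Second, and more seriously, ``improved along the scales'' hides the real difficulty. At $\beta_{j+1}$ both the history space and the endpoint vector change, so $c_{j+1}>0$ must be certified afresh without comparing eigenvectors at two temperatures. The paper's device (Lemma~\ref{L:boundaryupdate}) applies the exact identity $Y_{n,D}(\beta_{j+1})=\tau_D(n,\beta_j)Y_{n,D}(\beta_j)^2$ at the two lengths $n=k_j$ and $n=2k_j$. Then $\lambda_{j+1}^{k_j}Y_{k_j,D}(\beta_{j+1})/Y_{2k_j,D}(\beta_{j+1})$ is at most $r_j^{k_j}\tau_D(k_j,\beta_j)/\tau_D(2k_j,\beta_j)$ times the square of its old-temperature counterpart. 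That counterpart is at most $1/(1-e_{k_j}(j))$. The contraction $\varrho_{j+1}^{k_j}\le\sqrt{d(u_{j+1})}$ at the new temperature then gives $e_{k_{j+1}}(j+1)\le F\bigl(\sqrt{d(u_{j+1})},e_{k_j}(j),\tau_D(2k_j,\beta_j)\bigr)<1$. This new overlap bound is inserted into \rlb{tauidentity} and \rlb{newpurityratio} to update $\tau_D$, which closes the induction of Proposition~\ref{P:quadratic}.

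Relatedly, no lower bound on $c_j$, uniform or otherwise, is needed; only the relative bound $e_{k_j}(j)\le x_j$ is used. For the same reason, the remainders in \rlb{fieldGibbsraw} must be estimated with the weighted bound \rlb{weightedall}, not with \rlb{operatorconcentration}. The weighted bound measures the nonleading boundary weights relative to $c_j$ and then uses $|\xi|\le q\lambda_j$. The operator-norm bound controls $K_{\beta_j,D}^{d}v_{\beta_j}-\lambda_j^dP_jv_{\beta_j}$ only by $\lambda_j^dq^d\norm{v_{\beta_j}}$. After normalizing by $c_j$ you would be left with the factor $\norm{v_{\beta_j}}/\sqrt{c_j}=e^{\beta_jh_0^2/2}/\sqrt{c_j}$. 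No bound on this factor uniform in $j$ is available, so the $j$-independence of $C_I$ in \rlb{finiteapproximation} would be lost.
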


The first error in \rlb{finiteapproximation} compares a ground state
with its Gibbs state; the second compares the Gibbs state with
$\nu_{\beta_j,D}$. At lengths of order $\beta_j^2$, both vanish.
Section~\ref{S:boundary} controls the boundary spectral weights.
Section~\ref{S:longchains} proves the local Gibbs and ground-state
estimates and completes this proposition. Section~\ref{S:limit}
then deduces asymptotic translation invariance and invokes the bulk
uniqueness theorem. Appendix~\ref{S:numerics} verifies the criteria.

\section{Concentration of the boundary spectral measure}
\label{S:boundary}

Throughout this section, the combined criterion \rlb{finitecriterion}
is assumed, except where a lemma explicitly states weaker hypotheses.

The comparison between temperatures and the coupled induction in this
section follow \cite[Lemma~4.1 and Propositions~4.2 and~5.3]{OpenAIboundary}.
Our initial overlap test in Lemma~\ref{L:overlapseed} instead uses the
finite partition-function ratio $\mathcal C_D$.

Section~\ref{S:criterion} controls the eigenvalues of the transfer
operator through the periodic partition functions $Z$.
We now use the opposite-field partition functions $Y$ to control the
contribution of each eigenspace to $Y_{N,D}(\beta_j)$.
Even a simple leading eigenvalue would make no contribution if its
eigenvector were orthogonal to the endpoint vector.
Moreover, both the history space and the endpoint vector change with
$\beta$. We will compare their contributions through identities between
physical partition functions, without comparing eigenvectors at two
temperatures. The estimates obtained here will supply the additional
information needed for the equal-field chain.

\subsection{The leading contribution and the relative error}

Recall from Lemma~\ref{L:spectralconcentration} that $P_j$ is the
orthogonal projection onto the one-dimensional eigenspace of
$K_{\beta_j,D}$ corresponding to its leading eigenvalue $\lambda_j$.
Put
\eq
c_j=\norm{P_jv_{\beta_j}}^2,
\lb{boundaryoverlap}
\en
allowing $c_j=0$ until it is proved otherwise.
For even $n\ge2$, self-adjointness and the transfer identity give
\eq
Y_{n,D}(\beta_j)
 =\langle v_{\beta_j},K_{\beta_j,D}^{\,n}v_{\beta_j}\rangle
 =\norm{K_{\beta_j,D}^{\,n/2}v_{\beta_j}}^2>0.
\lb{boundarynorm}
\en
Since $P_j$ commutes with $K_{\beta_j,D}$, the two vectors
$\lambda_j^{n/2}P_jv_{\beta_j}$ and
$K_{\beta_j,D}^{\,n/2}(\one-P_j)v_{\beta_j}$ are orthogonal.
Consequently,
\eq
Y_{n,D}(\beta_j)-\lambda_j^n c_j
 =\norm{K_{\beta_j,D}^{\,n/2}(\one-P_j)v_{\beta_j}}^2\ge0.
\lb{boundaryremainder}
\en
The term $\lambda_j^n c_j$ is the contribution of the leading
eigenspace to $Y_{n,D}(\beta_j)$. We define the relative contribution
from the other eigenspaces by
\eq
e_n(j)=\frac{Y_{n,D}(\beta_j)-\lambda_j^n c_j}{Y_{n,D}(\beta_j)},
\qquad 0\le e_n(j)\le1,
\qquad n\ge2\ \text{even}.
\lb{boundaryquantities}
\en
The inequality $e_n(j)<1$ is equivalent to $c_j>0$.
No lower bound on $c_j$ uniform in $j$ will be needed.
We continue to write the opposite-field Gibbs purity as
$\tau_D(n,\beta_j)$, with its definition in \rlb{boundaryratios}.
These definitions do not yet prove a positive overlap. We first
establish $e_{k_0}(0)<1$, and then control the error as the temperature
changes.

\subsection{The initial overlap from a finite trace ratio}

\begin{lemma}[A finite-trace bound for the overlap]\label{L:overlapseed}
Fix $j\ge0$ and an even $k\ge4$, and suppose
$S_D(k,\beta_j)\ge1-u$, where $0\le u\le1/100$.
Define $\lambda_j$, $P_j$, and $e_k(j)$ as above; the spectral argument
of Lemma~\ref{L:spectralconcentration} applies under this hypothesis
alone. Then
\eq
e_k(j)\le\frac{1-\mathcal C_D(k,\beta_j)}{1-d(u)}.
\lb{Ccertificate}
\en
The assertion does not assume $c_j>0$.
\end{lemma}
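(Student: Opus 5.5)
The plan is to expand both opposite-field partition functions spectrally at the single temperature $\beta_j$, and to compare $Y_{2k,D}(\beta_j)$ with $Y_{k,D}(\beta_j)$ eigenspace by eigenspace. The periodic factor $Z_{k,D}(\beta_j)$ in $\mathcal C_D$ then supplies the normalization $\lambda_j^k$. Abbreviate $K=K_{\beta_j,D}$, $v=v_{\beta_j}$, $\lambda=\lambda_j$, $P=P_j$, $c=c_j$. Let $\varrho$ be the supremum of $|\xi|/\lambda$ over the non-leading eigenvalues $\xi$.

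First I will record what the hypothesis $S_D(k,\beta_j)\ge1-u$ gives, using the argument of Lemma~\ref{L:spectralconcentration} verbatim with $k$ in place of $k_j$. It yields a simple positive leading eigenvalue and, by \rlb{residualmoment}, $\varrho^k\le (Z_{k,D}(\beta_j)-\lambda^k)/\lambda^k\le d(u)$. Since $k$ is even, $Z_{k,D}(\beta_j)=\sum_a|\xi_a|^k\ge\lambda^k$.

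Next I will use \rlb{Ypositive} for the even lengths $k$ and $2k$, with weights $w_\xi=\norm{P_\xi v}^2\ge0$. This gives
\[
Y_{k,D}(\beta_j)=\lambda^kc+b,\qquad b=\sum_{\xi\ne\lambda}|\xi|^kw_\xi=e_k(j)\,Y_{k,D}(\beta_j),
\]
and, using $|\xi|^{2k}\le \varrho^k\lambda^k|\xi|^k$ for every non-leading $\xi$,
\[
Y_{2k,D}(\beta_j)=\lambda^{2k}c+\sum_{\xi\ne\lambda}|\xi|^{2k}w_\xi\le\lambda^k\bigl(\lambda^kc+\varrho^k b\bigr).
\]
Dividing by $Z_{k,D}(\beta_j)Y_{k,D}(\beta_j)\ge\lambda^kY_{k,D}(\beta_j)>0$ yields
\[
\mathcal C_D(k,\beta_j)\le\frac{\lambda^kc+\varrho^kb}{\lambda^kc+b}=1-(1-\varrho^k)\,e_k(j).
\]
Rearranging and using $1-\varrho^k\ge1-d(u)>0$ gives \rlb{Ccertificate}.

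I do not expect a real obstacle, since every step is a termwise inequality between nonnegative sums. The points needing care are the following. The evenness of both $k$ and $2k$ is what makes \rlb{Ypositive} available, so that all weights are nonnegative; positivity of $Y_{k,D}$ holds by \rlb{boundarynorm}. No step divides by $c$, so the case $c=0$ is included: then $e_k(j)=1$, and the bound simply forces $\mathcal C_D\le\varrho^k$. Finally, the leading eigenvalue being positive (Lemma~\ref{L:positiveleading}) ensures that the leading term is $\lambda^kc$ with no sign issue.
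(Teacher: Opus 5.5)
Your proposal is correct and follows essentially the same route as the paper. Both arguments expand $Y_{k,D}(\beta_j)$ and $Y_{2k,D}(\beta_j)$ over the spectral projections with nonnegative weights, bound each non-leading term by $|\xi|^{2k}\le d(u)\lambda_j^k|\xi|^k$ (you via $\varrho^k\le d(u)$, the paper via $t-t^2\ge(1-d(u))t$), and use $Z_{k,D}(\beta_j)\ge\lambda_j^k$ to pass to $\mathcal C_D(k,\beta_j)$; the only difference is that you rearrange the ratio bound $1-(1-\varrho^k)e_k(j)$ directly, whereas the paper sums the termwise inequality first and then divides by $Y_{k,D}(\beta_j)$.
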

\begin{proof}
Let $\xi$ be a nonzero eigenvalue of $K_{\beta_j,D}$ other than
$\lambda_j$, and denote its orthogonal spectral projection by
$P_{\xi,j}$. Set $t=(|\xi|/\lambda_j)^k$.
The proof of Lemma~\ref{L:spectralconcentration}, applied at length
$k$, gives $0\le t\le d(u)$, so
$t-t^2\ge(1-d(u))t$.
Multiply this inequality by
$\lambda_j^k\norm{P_{\xi,j}v_{\beta_j}}^2$ and sum over the
eigenvalues other than $\lambda_j$. The leading contributions cancel,
and we obtain
\eq
Y_{k,D}(\beta_j)-\frac{Y_{2k,D}(\beta_j)}{\lambda_j^k}
 \ge(1-d(u))\{Y_{k,D}(\beta_j)-\lambda_j^k c_j\}.
\lb{Ccontraction}
\en
Here and below, sums with spectral projections run over distinct
nonzero eigenvalues; the projections include any multiplicity.
The kernel of the transfer operator contributes zero to positive powers.

The definition of $\mathcal C_D$ also gives
\eq
\frac{Y_{2k,D}(\beta_j)}{\lambda_j^kY_{k,D}(\beta_j)}
 =\mathcal C_D(k,\beta_j)
       \frac{Z_{k,D}(\beta_j)}{\lambda_j^k}
 \ge\mathcal C_D(k,\beta_j).
\lb{Cratio}
\en
For the last step, use
$Z_{k,D}(\beta_j)=\sum_a|\xi_a|^k\ge\lambda_j^k$.
Divide \rlb{Ccontraction} by the positive $Y_{k,D}(\beta_j)$ and
apply \rlb{Cratio} to get
\eq
(1-d(u))e_k(j)
 \le1-\frac{Y_{2k,D}(\beta_j)}{\lambda_j^kY_{k,D}(\beta_j)}
 \le1-\mathcal C_D(k,\beta_j).
\lb{Cconclusion}
\en
This proves the lemma.
\end{proof}

The last inequality of \rlb{finitecriterion} now implies
\eq
e_{k_0}(0)\le\frac{\eta}{2(1-d(\eta))}<\eta.
\lb{initialoverlap}
\en
In particular, $c_0>0$. Together with
$\tau_D(2k_0,\beta_0)\ge1-\eta$, this supplies the initial boundary
data. The nonzero overlap has so far been proved only at $\beta_0$;
the next argument must establish it anew at each successive
temperature, where the endpoint vector is different.

\subsection{Passing to the next temperature}

First keep the temperature fixed. For even $m>k\ge2$, each
eigenvalue other than $\lambda_j$ satisfies
$|\xi|^m\le(\lambda_j\varrho_j)^{m-k}|\xi|^k$.
Multiplication by its nonnegative boundary weight and summation give
\eq
\begin{split}
0\le Y_{m,D}(\beta_j)-\lambda_j^m c_j
 \le(\lambda_j\varrho_j)^{m-k}
       \{Y_{k,D}(\beta_j)-\lambda_j^k c_j\}.
\end{split}
\lb{weightedcontraction}
\en
The factor $\lambda_j^{m-k}$ is the growth of the leading contribution
through the extra sites. Relative to this growth, the other
contributions contract by $\varrho_j^{m-k}$.
The next lemma combines this observation with a physical
partition-function identity at two temperatures.

\begin{lemma}[Boundary comparison between temperatures]\label{L:boundaryupdate}
Fix $j\ge0$, and suppose that the leading eigenvalues and projections
of $K_{\beta_j,D}$ and $K_{\beta_{j+1},D}$ are as in
Lemma~\ref{L:spectralconcentration} and that $r_j\le1$.
Let $m>k\ge2$ be even. Suppose $e_k(j)<1$ and $0<q'<1$ with
$\varrho_{j+1}^{m-k}\le q'$. Then
\eq
e_m(j+1)\le F\bigl(q',e_k(j),\tau_D(m,\beta_j)\bigr),
\qquad
F(q',e,t)=\frac{q'}{1-q'}
 \left\{\frac1{(1-e)^2t}-1\right\}.
\lb{Fupdate}
\en
A bound less than one on the right proves $c_{j+1}>0$.
\end{lemma}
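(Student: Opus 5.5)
The plan is to compare the two temperatures through the physical identity $Y_{m,D}(\beta_{j+1})=Y_{m,D}(2\beta_j)=\tau_D(m,\beta_j)\,Y_{m,D}(\beta_j)^2$. Spectral information at $\beta_{j+1}$ enters only through the weighted contraction \rlb{weightedcontraction} applied at that temperature. Abbreviate $\lambda=\lambda_{j+1}$, $c=c_{j+1}$, $Y_n=Y_{n,D}(\beta_{j+1})$, $L_n=\lambda^n c$ and $R_n=Y_n-L_n\ge0$. Also write $e=e_k(j)$ and $t=\tau_D(m,\beta_j)$. The goal is an inequality for $R_m/Y_m=e_m(j+1)$.

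First apply \rlb{weightedcontraction} at $\beta_{j+1}$, between lengths $k$ and $m$, and use $\varrho_{j+1}^{m-k}\le q'$:
\[
R_m\le q'\lambda^{m-k}R_k=q'\bigl(\lambda^{m-k}Y_k-L_m\bigr),
\]
since $\lambda^{m-k}L_k=L_m$. It then remains to bound $\lambda^{m-k}Y_k$ from above by a multiple of $Y_m$. Two elementary facts handle the $\beta_{j+1}$ side. The hypothesis $r_j\le1$ gives $\lambda_{j+1}^{m-k}\le\lambda_j^{2(m-k)}$. The physical purity is at most one, so $Y_{k,D}(2\beta_j)\le Y_{k,D}(\beta_j)^2$. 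Together these give
\[
\lambda^{m-k}Y_k\le\bigl(\lambda_j^{m-k}Y_{k,D}(\beta_j)\bigr)^2.
\]

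Next use the definition \rlb{boundaryquantities} at $\beta_j$ with $e<1$. This gives $Y_{k,D}(\beta_j)=\lambda_j^kc_j/(1-e)$, and therefore
\[
\lambda_j^{m-k}Y_{k,D}(\beta_j)=\frac{\lambda_j^mc_j}{1-e}\le\frac{Y_{m,D}(\beta_j)}{1-e},
\]
where the last step is \rlb{boundaryremainder} at length $m$. Squaring and using the temperature identity yields
\[
\lambda^{m-k}Y_k\le\frac{Y_m}{(1-e)^2t}.
\]
Substituting into the first display and writing $L_m=Y_m-R_m$ gives
\[
R_m\le q'\bigl(Y_m X-Y_m+R_m\bigr),\qquad X=\frac1{(1-e)^2t}.
\]
Rearranging, $(1-q')R_m\le q'Y_m(X-1)$. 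Dividing by $(1-q')Y_m>0$ gives \rlb{Fupdate}. If the bound is less than one, then $e_m(j+1)<1$, which is equivalent to $c_{j+1}>0$.

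The main obstacle is conceptual rather than computational. The endpoint vectors and history spaces at the two temperatures are unrelated, so the overlap cannot be transferred directly. The device above avoids this by moving every quantity to a physical partition function before changing temperature. One step needs care: the bound $Y_k(2\beta_j)\le Y_k(\beta_j)^2$ holds at any length and parity by positivity of Boltzmann weights. This must be combined correctly with the even-length spectral identities, which is why both $k$ and $m$ are assumed even.
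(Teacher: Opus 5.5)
Your proof is correct and is essentially the paper's argument. Both use the same ingredients: the weighted contraction at $\beta_{j+1}$, the exact identity $Y_{n,D}(\beta_{j+1})=\tau_D(n,\beta_j)Y_{n,D}(\beta_j)^2$, the old-overlap bound $\lambda_j^{m-k}Y_{k,D}(\beta_j)\le Y_{m,D}(\beta_j)/(1-e_k(j))$, and the inequalities $r_j\le1$ and $\tau_D(k,\beta_j)\le1$. The only difference is cosmetic: the paper first writes the exact ratio identity and then bounds each factor, whereas you apply $r_j\le1$ and $\tau_D(k,\beta_j)\le1$ before squaring.
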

\begin{proof}
At the new temperature, \rlb{weightedcontraction} implies
\eq
\begin{split}
(1-q')\{Y_{m,D}(\beta_{j+1})-\lambda_{j+1}^m c_{j+1}\}
 \le q'\{\lambda_{j+1}^{m-k}Y_{k,D}(\beta_{j+1})
                         -Y_{m,D}(\beta_{j+1})\}.
\end{split}
\lb{newtemperaturecontraction}
\en
To see this rearrangement, subtract $q'$ times the left-hand
remainder in \rlb{weightedcontraction} from both sides of that
inequality, with $j$ replaced by $j+1$.
Dividing by $(1-q')Y_{m,D}(\beta_{j+1})>0$ gives
\eq
e_m(j+1)\le\frac{q'}{1-q'}
 \left\{\frac{\lambda_{j+1}^{m-k}Y_{k,D}(\beta_{j+1})}
                    {Y_{m,D}(\beta_{j+1})}-1\right\}.
\lb{Ffirst}
\en

The definition of the physical Gibbs purity gives the exact identity
\eq
Y_{n,D}(\beta_{j+1})
 =\tau_D(n,\beta_j)Y_{n,D}(\beta_j)^2.
\lb{Ytemperature}
\en
Apply it at $n=k$ and $n=m$. Keeping the leading-eigenvalue factors
explicit, we obtain
\eq
\frac{\lambda_{j+1}^{m-k}Y_{k,D}(\beta_{j+1})}
     {Y_{m,D}(\beta_{j+1})}
 =r_j^{m-k}\frac{\tau_D(k,\beta_j)}{\tau_D(m,\beta_j)}
  \left\{\frac{\lambda_j^{m-k}Y_{k,D}(\beta_j)}
                     {Y_{m,D}(\beta_j)}\right\}^2.
\lb{Ytemperatureratio}
\en
The old overlap is positive because $e_k(j)<1$. Therefore
\eq
\frac{\lambda_j^{m-k}Y_{k,D}(\beta_j)}{Y_{m,D}(\beta_j)}
 \le\frac{Y_{k,D}(\beta_j)}{\lambda_j^k c_j}
 =\frac1{1-e_k(j)}.
\lb{oldoverlapratio}
\en
Use this bound, $r_j\le1$, and $\tau_D(k,\beta_j)\le1$ in
\rlb{Ytemperatureratio}, and then substitute into \rlb{Ffirst}.
This proves \rlb{Fupdate}. Only the old overlap $c_j$ has been
divided by; $c_{j+1}$ may still be zero during the argument.
\end{proof}

The lemma bounds the new overlap error using quantities at the old
temperature and the already controlled transfer eigenvalues. A bound
below one proves the new overlap is nonzero. To continue indefinitely,
we must also control the opposite-field Gibbs purity appearing on the
right; we do this next.

\subsection{The coupled boundary estimates}

We next write the two coupled updates explicitly. Suppose
$e_{k_j}(j)\le E_j<1$ and
$\tau_D(2k_j,\beta_j)\ge1-\zeta_j$, with $0\le\zeta_j<1$.
Since $k_{j+1}=2k_j$, \rlb{residualmoment} implies
$\varrho_{j+1}^{k_j}\le\sqrt{d(u_{j+1})}$.
Lemma~\ref{L:boundaryupdate}, with $k=k_j$ and $m=2k_j$, gives
\eq
e_{k_{j+1}}(j+1)
 \le F\bigl(\sqrt{d(u_{j+1})},E_j,1-\zeta_j\bigr).
\lb{boundaryE}
\en

For the physical purity, cancellation of partition functions gives
\eq
\frac{\tau_D(4k_j,\beta_j)}{\tau_D(2k_j,\beta_j)}
 =r_j^{2k_j}
\frac{Y_{4k_j,D}(\beta_{j+1})}
        {\lambda_{j+1}^{2k_j}Y_{2k_j,D}(\beta_{j+1})}
 \left\{\frac{\lambda_j^{2k_j}Y_{2k_j,D}(\beta_j)}
                    {Y_{4k_j,D}(\beta_j)}\right\}^2.
\lb{tauidentity}
\en
Each factor has a separate bound. At the old temperature,
$Y_{4k_j,D}(\beta_j)\le\lambda_j^{2k_j}Y_{2k_j,D}(\beta_j)$,
by comparison of the nonnegative even powers of each eigenvalue.
Thus the squared ratio is at least one.
At the new temperature, retaining the leading contribution in
$Y_{4k_j,D}(\beta_{j+1})$ gives
\eq
\frac{Y_{4k_j,D}(\beta_{j+1})}
     {\lambda_{j+1}^{2k_j}Y_{2k_j,D}(\beta_{j+1})}
 \ge\frac{\lambda_{j+1}^{2k_j}c_{j+1}}
           {Y_{2k_j,D}(\beta_{j+1})}
 =1-e_{k_{j+1}}(j+1).
\lb{newpurityratio}
\en
If \rlb{boundaryE} has given $e_{k_{j+1}}(j+1)\le E_{j+1}<1$,
we can use \rlb{rbound} in \rlb{tauidentity} to conclude
\eq
\tau_D(4k_j,\beta_j)
 \ge(1-\zeta_j)(1-u_j)(1-u_{j+1})(1-E_{j+1}).
\lb{taubefore}
\en
At the fixed length $4k_j$, doubling the inverse temperature squares
and normalizes the Gibbs probabilities. Apply \rlb{squaringpurity},
then use \rlb{taubefore} and the monotonicity of $h$ to obtain
\eqa
\tau_D(2k_{j+1},\beta_{j+1})
 &=\tau_D(4k_j,2\beta_j)\nl
 &\ge h\bigl(\tau_D(4k_j,\beta_j)\bigr)\nl
 &\ge h\bigl((1-\zeta_j)(1-u_j)(1-u_{j+1})(1-E_{j+1})\bigr).
\lb{boundaryzeta}
\ena
The equality uses $k_{j+1}=2k_j$ and $\beta_{j+1}=2\beta_j$.
Equations~\rlb{boundaryE} and \rlb{boundaryzeta} are the two coupled
updates: the new overlap estimate is inserted into the lower bound
for the new Gibbs purity.

Recall from \rlb{errormatching} that $u_j\le x_j$.
The following proposition proves the two additional bounds for the
boundary quantities.

\begin{proposition}[The boundary error bounds]\label{P:quadratic}
Under \rlb{finitecriterion}, for every $j\ge0$,
\eq
e_{k_j}(j)\le x_j,\qquad
\tau_D(2k_j,\beta_j)\ge1-x_j.
\lb{allerrors}
\en
In particular, $c_j>0$ and $\Ux P_j=P_j$ at every temperature $\beta_j$.
\end{proposition}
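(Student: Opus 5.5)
The plan is induction on $j$, with both bounds in \rlb{allerrors} advanced together through the coupled updates \rlb{boundaryE} and \rlb{boundaryzeta}. The order within each step matters: the new overlap bound is proved first and then fed into the purity bound.

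\emph{Base case.} For $j=0$ we have $x_0=\eta$. The inequality \rlb{initialoverlap} gives $e_{k_0}(0)<\eta$. The criterion \rlb{finitecriterion} gives $\tau_D(2k_0,\beta_0)\ge1-\eta$.

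\emph{Inductive step, overlap.} Assume \rlb{allerrors} at level $j$ and set $E_j=\zeta_j=x_j$. Since $x_j\le\eta<1$, the hypothesis $e_{k_j}(j)<1$ of Lemma~\ref{L:boundaryupdate} holds. Lemma~\ref{L:temperatureeigenvalue} supplies $r_j\le1$, and Lemma~\ref{L:spectralconcentration} supplies the spectral hypotheses at both temperatures. The function $F(q',e,t)$ is increasing in $q'$ and $e$ and decreasing in $t$, so \rlb{boundaryE} yields
\[
e_{k_{j+1}}(j+1)\le F\bigl(\sqrt{d(u_{j+1})},x_j,1-x_j\bigr).
\]
To bound the right side, use $u_{j+1}=6u_j^2$ and $d(u)=u/(2-3u)$. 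These give $d(u_{j+1})\le 3u_j^2(1+O(\eta))$, so $q'=\sqrt{d(u_{j+1})}\le\sqrt3\,(1+O(\eta))\,x_j$ by \rlb{errormatching}. The bracket in $F$ is $(1-x_j)^{-3}-1\le3x_j(1+O(\eta))$. Together these give $F\le3\sqrt3\,(1+O(\eta))\,x_j^2$, roughly $5.2\,x_j^2$. This is at most $10x_j^2=x_{j+1}$ once $\eta\le1/100$ is used to make the $O(\eta)$ factors explicit. Hence $E_{j+1}=x_{j+1}$ is admissible.

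\emph{Inductive step, purity.} Insert $E_{j+1}=x_{j+1}$ into \rlb{boundaryzeta}. Together with $u_j\le x_j$ and $u_{j+1}\le x_{j+1}\le x_j/10$, the argument of $h$ is at least $1-v$ with $v\le2x_j+2x_{j+1}\le2.2\,x_j$. Then
\[
1-h(1-v)=\frac{v^2}{2(1-v)^2}\le2.5\,x_j^2(1+O(\eta))\le10x_j^2=x_{j+1},
\]
which closes the induction.

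\emph{Consequences.} The bound $e_{k_j}(j)\le x_j<1$ is equivalent to $c_j>0$, as noted after \rlb{boundaryquantities}. Lemma~\ref{L:character} then gives $\Ux P_j=P_j$ at every $\beta_j$.

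The only delicate point is the explicit numerical verification that the overlap constant $3\sqrt3(1+O(\eta))$ and the purity constant $2.5(1+O(\eta))$ both stay below $10$ for all $\eta\le1/100$. I would do this with crude bounds such as $(1-x)^{-3}\le1+3.1x$ and $(2-3u)^{-1}\le0.51$, checked once. The margins are comfortable, so the check is routine.
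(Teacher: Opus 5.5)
Your proposal is correct and follows the paper's proof step for step. It uses the same induction with $E_j=\zeta_j=x_j$ and the same order, proving the overlap bound via \rlb{boundaryE} first and then feeding it into \rlb{boundaryzeta}, and it derives $\Ux P_j=P_j$ from $c_j>0$ through Lemma~\ref{L:character}. The only difference is the bookkeeping of constants: you bound $\sqrt{d(u_{j+1})}$ by about $\sqrt3\,x_j$ and the purity defect by $2x_j+2x_{j+1}$, whereas the paper uses $2x$ and $2x+16x^2$, and your crude bounds (for instance $(1-x)^{-3}\le1+3.1x$) close with ample margin below $10x_j^2$.
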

\begin{proof}
The case $j=0$ follows from \rlb{finitecriterion} and
\rlb{initialoverlap}. Suppose both bounds hold at level $j$, and write
$x=x_j\le\eta\le1/100$, as follows from \rlb{errorscales}.
Using $u_j\le x_j$ and $u_{j+1}=6u_j^2$, we have
\eq
\sqrt{d(u_{j+1})}
 \le\left(\frac{6x^2}{2-18x^2}\right)^{1/2}\le2x.
\lb{qstep}
\en
The function $F$ increases in its first two variables and decreases in
its third. Moreover,
\eqa
\frac{F(2x,x,1-x)}{x^2}
 &=\frac{2}{1-2x}
   \{(1-x)^{-1}+(1-x)^{-2}+(1-x)^{-3}\}\nl
 &\le\frac{6}{(1-2x)(1-x)^3}<10.
\lb{Fscalar}
\ena
Thus \rlb{boundaryE} gives
$e_{k_{j+1}}(j+1)\le10x^2=x_{j+1}$ and proves $c_{j+1}>0$,
where the equality follows directly from \rlb{errorscales}.
For the second update, take $\zeta_j=x$ and $E_{j+1}=10x^2$
in \rlb{boundaryzeta}. Each factor of the product in
\rlb{taubefore} lies in $[0,1]$. The defect of the product is
therefore at most the sum of their
defects, namely $x+x+6x^2+10x^2=2x+16x^2$.
Using \rlb{boundaryzeta} and the monotonicity of $h$, we obtain
\eqa
\tau_D(2k_{j+1},\beta_{j+1})
 &\ge h(1-2x-16x^2)\nl
 &=1-\frac{(2x+16x^2)^2}{2(1-2x-16x^2)^2}\nl
 &\ge1-3x^2\ge1-10x^2=1-x_{j+1}.
\lb{boundaryhstep}
\ena
The final equality follows from \rlb{errorscales}.
For the bound by $1-3x^2$, use the scalar estimate
\eq
\frac{(2+16x)^2}{2(1-2x-16x^2)^2}<3,
\qquad 0\le x\le1/100.
\lb{tauscalar}
\en
Its numerator increases and its positive denominator decreases on this
interval, and substitution of $x=1/100$ proves the bound.
This proves the second bound in \rlb{allerrors} at level $j+1$,
and completes the induction.
Lemma~\ref{L:character} now gives $\Ux P_j=P_j$ for every $j$.
\end{proof}

The endpoint vector now has a nonzero leading overlap at every
$\beta_j$. Its relative error and the opposite-field Gibbs-purity
defect are both at most $x_j$. These conclusions do not require a
positive lower bound on $c_j$ uniform in $j$ or $D$.

\subsection{The contribution of the other eigenspaces}

Under \rlb{finitecriterion}, the periodic estimate
\rlb{operatorconcentration} and Proposition~\ref{P:quadratic} give
one further estimate needed for local expectations.
The weighted estimate is
\eq
\sum_{\xi\ne\lambda_j}|\xi|^s
          \norm{P_{\xi,j}v_{\beta_j}}^2
 \le\frac{10}{99}\,c_j\lambda_j^s q^s,
\qquad s\ge k_j.
\lb{weightedall}
\en
To prove it, bound $|\xi|^s$ by
$(\lambda_j q)^{s-k_j}|\xi|^{k_j}$ and use
\eq
\frac{Y_{k_j,D}(\beta_j)-\lambda_j^{k_j}c_j}
     {\lambda_j^{k_j}c_j}
 =\frac{e_{k_j}(j)}{1-e_{k_j}(j)}
 \le\frac{x_j}{1-x_j}.
\lb{weightedrelative}
\en
Substitution of $x_j=q^{k_j}/10$ and $x_j\le\eta\le1/100$, again
from the definitions, gives \rlb{weightedall}.
The scalar absolute values in that sum allow $s$ to be odd or even;
for even $s$ its left side is
$Y_{s,D}(\beta_j)-\lambda_j^s c_j$.
The estimate is relative to the leading weight $c_j$, not to the norm
of the whole endpoint vector. It therefore remains useful even when
$c_j$ is small.

The periodic trace estimate \rlb{residualall} and the weighted
estimate \rlb{weightedall} now give the two forms of spatial control
needed for local expectations. Both use the same $q<1$, independent
of $j$ and $D$. The next section applies them to the exact insertion
formulas and then compares the Gibbs and ground-state expectations.

\section{Local expectations and low-temperature boundary estimates}
\label{S:longchains}\label{S:insertions}

Throughout this section assume \rlb{finitecriterion}. The periodic
Gibbs estimate \rlb{exactpererror} is already available. We derive its
endpoint counterpart, extend physical purity to longer chains, and
complete Proposition~\ref{P:finiteapproximation}.

\subsection{The endpoint fields disappear in the bulk}

For $I=[s,t]$, use the distances $d_{\rm L},d_{\rm R}$ in
\rlb{distances}, so that $N=d_{\rm L}+m+d_{\rm R}$.
Move the left transfer power in \rlb{fieldGibbsraw} to the first
slot using self-adjointness. The exact expectation is
\eq
\rho_{N,\beta_j,D}^{\fld}(A)
 =\frac{\langle K_{\beta_j,D}^{\,d_{\rm L}}\Ux v_{\beta_j},
       \mathcal B_{A,\beta_j,D}
       K_{\beta_j,D}^{\,d_{\rm R}}v_{\beta_j}\rangle}
        {Y_{N,D}^{\fld}(\beta_j)}.
\lb{fieldinsertion}
\en
We compare each propagated endpoint vector with its leading component.
For $2d\ge k_j$, \rlb{weightedall} gives
\eq
\frac{\norm{K_{\beta_j,D}^{\,d}v_{\beta_j}
                  -\lambda_j^dP_jv_{\beta_j}}}
     {\lambda_j^d\sqrt{c_j}}
 \le\sqrt{\frac{10}{99}}\,q^d.
\lb{filteredvector}
\en
To prove this, square the numerator and use the orthogonal spectral
decomposition. The result is
$\sum_{\xi\ne\lambda_j}|\xi|^{2d}
\norm{P_{\xi,j}v_{\beta_j}}^2$, to which \rlb{weightedall} applies.
The same bound holds with $\Ux v_{\beta_j}$ in place of
$v_{\beta_j}$, since $\Ux$ is unitary, commutes with the transfer
operator, and fixes the leading eigenspace.

The denominator satisfies, for every integer $N\ge k_j$,
\eq
\abs{Y_{N,D}^{\fld}(\beta_j)-\lambda_j^Nc_j}
 \le\frac{10}{99}\,c_j\lambda_j^Nq^N.
\lb{fielddenominator}
\en
To obtain this estimate, expand the inner product in
\rlb{Yfieldtransfer} using the spectral projections of the transfer operator.
The leading term is $\lambda_j^Nc_j$. For each other eigenvalue,
Cauchy--Schwarz and commutation with $\Ux$ give
$|\langle\Ux P_{\xi,j}v_{\beta_j},P_{\xi,j}v_{\beta_j}\rangle|
\le\norm{P_{\xi,j}v_{\beta_j}}^2$.
Now use \rlb{weightedall} with $s=N$.

Since $P_j$ has rank one,
$\langle P_jv_{\beta_j},\mathcal B_{A,\beta_j,D}P_jv_{\beta_j}\rangle
=c_j\Tr(P_j\mathcal B_{A,\beta_j,D})$.
The leading term in the numerator of \rlb{fieldinsertion} is therefore
\eq
\lambda_j^{d_{\rm L}+d_{\rm R}}c_j
       \Tr(P_j\mathcal B_{A,\beta_j,D})
 =\lambda_j^Nc_j\nu_{\beta_j,D}(A).
\lb{fieldleadingnumerator}
\en
Expand both propagated vectors as this leading part plus their
remainders. The other three terms contain one or two remainders.
Applying \rlb{filteredvector} and Cauchy--Schwarz bounds their sum,
in absolute value, by
\eq
\begin{split}
 c_j\lambda_j^{N-m}\norm{\mathcal B_{A,\beta_j,D}}
 \left\{\sqrt{\frac{10}{99}}(q^{d_{\rm L}}+q^{d_{\rm R}})
               +\frac{10}{99}q^{d_{\rm L}+d_{\rm R}}\right\}.
\end{split}
\lb{fieldnumeratorerror}
\en
Combining \rlb{fielddenominator}--\rlb{fieldnumeratorerror} with
\rlb{Bnormalizednorm}, we obtain, for $2d_{\rm L},2d_{\rm R}\ge k_j$,
\eq
\begin{split}
&\abs{\rho_{N,\beta_j,D}^{\fld}(A)-\nu_{\beta_j,D}(A)}\\
&\quad\le
 \frac{\norm A}{1-(10/99)q^N}
 \left[
 e^{(3+D_0)m\beta_j}
 \left\{\sqrt{\frac{10}{99}}(q^{d_{\rm L}}+q^{d_{\rm R}})
             +\frac{10}{99}q^{d_{\rm L}+d_{\rm R}}\right\}
       +\frac{10}{99}q^N
 \right].
\end{split}
\lb{exactfielderror}
\en
All factors $\lambda_j^Nc_j$ have canceled.
This is why we needed an estimate relative to the leading boundary
contribution, rather than an absolute bound on the endpoint-vector norm.

Recall $d_N(I)=\min(d_{\rm L},d_{\rm R})$ from
Proposition~\ref{P:finiteapproximation}.
The periodic and field estimates imply, with a constant $C_I$
independent of $j,D,N$,
\eq
\abs{\rho_{N,\beta_j,D}^{b}(A)-\nu_{\beta_j,D}(A)}
 \le C_I\norm A\,e^{(3+D_0)m\beta_j}q^{d_N(I)},
\qquad d_N(I)\ge k_j,
\lb{thermalcomparison}
\en
for even periodic lengths and all field-chain lengths.
In particular, for even $N_{\per}$ and odd $N_{\fld}$ satisfying these
distance conditions,
\eq
\begin{split}
&\abs{\rho_{N_{\per},\beta_j,D}^{\per}(A)
          -\rho_{N_{\fld},\beta_j,D}^{\fld}(A)}\\
&\qquad\le C_I\norm A\,e^{(3+D_0)m\beta_j}
       \{q^{d_{N_{\per}}(I)}+q^{d_{N_{\fld}}(I)}\}.
\end{split}
\lb{directcomparison}
\en
This is the finite-size, finite-inverse-temperature comparison.
The two lengths may, for example, be $2L$ and $2L+1$.
Both Gibbs expectations are now close to the same state
$\nu_{\beta_j,D}$, with a quantitative error. To reach the
finite-chain approximation in Proposition~\ref{P:finiteapproximation},
we still have to compare these Gibbs states with the ground states at
the same lengths and temperatures.

\subsection{Gibbs purity at longer chain lengths}

The iteration initially controls lengths proportional to $\beta_j$.
We first extend physical purity to much longer chains, allowing
$N$ of order $\beta_j^2$ in the final comparison.

For the physical-purity argument we also need the Gibbs density
matrix itself. We denote it by
\eq
\hat\rho_{N,\beta,D}^{b}
 =\frac{e^{-\beta H_N^b(D)}}{\Tr_{\mathcal H_N}e^{-\beta H_N^b(D)}},
\qquad
\rho_{N,\beta,D}^{b}(A)
 =\Tr_{\mathcal H_N}(\hat\rho_{N,\beta,D}^{b}A).
\lb{Gibbsdensitymatrix}
\en
Thus $\rho_{N,\beta,D}^{b}(\cdot)$ always denotes an expectation
functional, whereas $\hat\rho_{N,\beta,D}^{b}$ is an operator.

\begin{proposition}\label{P:longpurity}
Under \rlb{finitecriterion}, for every $j\ge0$ and every $N\ge2k_j$,
\eq
\Tr\bigl[(\hat\rho_{N,\beta_j,D}^{b})^2\bigr]
 \ge1-5\frac{N}{2k_j}x_j,
\lb{longpurity}
\en
where $b=\per$ at even $N$ and $b=\fld$ at every integer $N$.
There is no upper restriction on $N$; the estimate is useful when its
right-hand side is close to one. For $b=\per$, only
\rlb{periodiccriterion} is required.
\end{proposition}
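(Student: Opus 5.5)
The plan is to write the purity as a ratio of partition functions at $\beta_{j+1}=2\beta_j$ and $\beta_j$. Then compare numerator and denominator separately with their leading transfer contributions at the two temperatures, and control the ratio of leading eigenvalues by Lemma~\ref{L:temperatureeigenvalue}. For a Gibbs density matrix, $\Tr[(\hat\rho_{N,\beta_j,D}^{b})^2]$ equals the partition function at $2\beta_j$ divided by the square of that at $\beta_j$. The hypothesis $N\ge2k_j=k_{j+1}$ is exactly what allows the concentration bounds at \emph{both} levels $j$ and $j+1$. The length enters only through $r_j^N$, and this produces the linear factor $N/(2k_j)$.

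\emph{Periodic case, $N$ even.} The quantity is $T_D(N,\beta_j)=Z_{N,D}(\beta_{j+1})/Z_{N,D}(\beta_j)^2$.
\begin{itemize}
\item For the numerator, keep only the leading eigenvalue in the sum of nonnegative even powers: $Z_{N,D}(\beta_{j+1})\ge\lambda_{j+1}^N$.
\item For the denominator, use \rlb{residualall} with $s=N\ge k_j$: $Z_{N,D}(\beta_j)\le\lambda_j^N(1+q^N/10)$.
\item Since $q^N/10\le q^{k_j}q^{N-k_j}/10\le x_j$, this gives $T_D(N,\beta_j)\ge r_j^N(1+x_j)^{-2}$.
\item By \rlb{rbound}, $r_j^N\ge[(1-u_j)(1-u_{j+1})]^{N/k_{j+1}}\ge1-\frac{N}{2k_j}(u_j+u_{j+1})$.
\end{itemize}
With $u_{j+1}\le u_j\le x_j$ and $N/(2k_j)\ge1$, the total defect is at most $\frac{N}{2k_j}\,2x_j+2x_j\le4\frac{N}{2k_j}x_j$. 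Only \rlb{periodiccriterion} has been used here.

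\emph{Field case, any integer $N\ge2k_j$.}
\begin{itemize}
\item Apply \rlb{fielddenominator} at level $j+1$, which needs $N\ge k_{j+1}$. This gives $Y_{N,D}^{\fld}(\beta_{j+1})\ge\lambda_{j+1}^Nc_{j+1}(1-\tfrac{10}{99}q^N)$.
\item Apply it at level $j$: $Y_{N,D}^{\fld}(\beta_j)\le\lambda_j^Nc_j(1+\tfrac{10}{99}q^N)$.
\item The purity is then at least $r_j^N\,(c_{j+1}/c_j^2)$ times factors $1-O(x_j)$.
\end{itemize}
The remaining ingredient is a lower bound on $c_{j+1}/c_j^2$, which I take from the opposite-field identity \rlb{Ytemperature} at $n=2k_j$. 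On one side,
\[
Y_{2k_j,D}(\beta_{j+1})=\tau_D(2k_j,\beta_j)\,Y_{2k_j,D}(\beta_j)^2\ge(1-x_j)\lambda_j^{4k_j}c_j^2,
\]
using Proposition~\ref{P:quadratic} and \rlb{boundaryremainder}. On the other side, by definition of $e_{k_{j+1}}(j+1)$ and Proposition~\ref{P:quadratic},
\[
Y_{2k_j,D}(\beta_{j+1})=\frac{\lambda_{j+1}^{2k_j}c_{j+1}}{1-e_{k_{j+1}}(j+1)}\le\frac{\lambda_{j+1}^{2k_j}c_{j+1}}{1-x_{j+1}}.
\]
Combining the two and using $r_j\le1$ gives $c_{j+1}/c_j^2\ge(1-x_j)(1-x_{j+1})r_j^{-2k_j}\ge(1-x_j)(1-x_{j+1})$.

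Collecting the defects, the bound is $\frac{N}{2k_j}(u_j+u_{j+1})+x_j+x_{j+1}$ plus $O(q^N)\le O(x_j)$ terms. This is at most $5\frac{N}{2k_j}x_j$, again using $N/(2k_j)\ge1$ and $x_j\le1/100$ to absorb constants.

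The main obstacle is the overlap ratio $c_{j+1}/c_j^2$: the endpoint vector and the history space both change with temperature, so the two overlaps cannot be compared directly. The step above handles this through the physical identity \rlb{Ytemperature} at the single length $2k_j$, where both the opposite-field purity and the relative error $e$ are already controlled. The rest is bookkeeping of constants to fit under $5$.
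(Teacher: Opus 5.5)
Your proof is correct and follows the paper's proof. The periodic case matches \rlb{longpartitionbounds}--\rlb{longperiodicerror}, and the field case uses the same inputs: the identity \rlb{Ytemperature} at length $2k_j$, the bounds of Proposition~\ref{P:quadratic}, and \rlb{rbound}.

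The only difference is bookkeeping. You pass through the leading terms $\lambda^Nc$ at both temperatures via \rlb{fielddenominator} and isolate the overlap ratio $c_{j+1}/c_j^2$. The paper instead compares $Y^{\fld}_{N,D}$ directly with the opposite-field $Y_{2k_j,D}$ through \rlb{signedcomparison}, which needs no $O(q^N)$ factors and retains $r_j^{N-2k_j}$ in place of $r_j^N$.
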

\begin{proof}
For even $N\ge2k_j$, the nonnegative even powers and
\rlb{residualall} imply
\eq
\lambda_j^N\le Z_{N,D}(\beta_j)\le\lambda_j^N(1+x_j),
\qquad
Z_{N,D}(\beta_{j+1})\ge\lambda_{j+1}^N.
\lb{longpartitionbounds}
\en
To see the upper bound, use $q^N/10\le q^{k_j}/10=x_j$,
where the final equality follows directly from \rlb{gamma0},
\rlb{scales}, and \rlb{errorscales}.
The definition of $T_D$ and \rlb{rbound} therefore give
\eq
T_D(N,\beta_j)
 \ge\frac{r_j^N}{(1+x_j)^2}
 \ge\frac{\{(1-u_j)(1-u_{j+1})\}^{N/(2k_j)}}{(1+x_j)^2}.
\lb{longperiodic}
\en
Bernoulli's inequality, $(1+x)^{-2}\ge1-2x$, and
$u_{j+1}\le u_j\le x_j$ yield
\eq
1-T_D(N,\beta_j)
 \le2x_j+\frac{N}{2k_j}(u_j+u_{j+1})
 \le4\frac{N}{2k_j}x_j.
\lb{longperiodicerror}
\en

For the equal-field chain we use the comparison of partition functions in
\cite[Lemma~6.1 and the proof of Proposition~6.2]{OpenAIboundary},
retaining its length dependence beyond the doubling interval.
We first compare with an even reference length $n$. For any integer $N\ge n\ge2$ with $n$ even,
\eq
\abs{Y_{N,D}^{\fld}(\beta_j)-\lambda_j^Nc_j}
 \le\lambda_j^{N-n}
        \{Y_{n,D}(\beta_j)-\lambda_j^n c_j\}.
\lb{signedcomparison}
\en
To prove this inequality, expand in the eigenspaces of the transfer
operator, as for \rlb{fielddenominator}. An eigenvalue $\xi\ne\lambda_j$
contributes at most $|\xi|^N\norm{P_{\xi,j}v_{\beta_j}}^2$ in
absolute value. Use $|\xi|^N\le\lambda_j^{N-n}|\xi|^n$ and sum.
The evenness of $n$ identifies the resulting sum with the remainder
on the right. This argument includes negative eigenvalues and odd $N$.

Now use the same reference length $n=k_{j+1}=2k_j$ at both
temperatures. At the old temperature, \rlb{signedcomparison} gives
\eq
Y_{N,D}^{\fld}(\beta_j)
 \le\lambda_j^{N-k_{j+1}}Y_{k_{j+1},D}(\beta_j).
\lb{oldfieldpartition}
\en
At the new temperature it gives
\eq
\begin{split}
Y_{N,D}^{\fld}(\beta_{j+1})
 &\ge\lambda_{j+1}^{N-k_{j+1}}
       \{2\lambda_{j+1}^{k_{j+1}}c_{j+1}
                              -Y_{k_{j+1},D}(\beta_{j+1})\}\\
 &\ge(1-2x_{j+1})\lambda_{j+1}^{N-k_{j+1}}
                            Y_{k_{j+1},D}(\beta_{j+1}).
\end{split}
\lb{newfieldpartition}
\en
The last step uses $e_{k_{j+1}}(j+1)\le x_{j+1}$.
The old partition function and its upper bound are both positive,
so we may square that upper bound in the denominator of the physical
Gibbs purity. For every integer $N\ge2k_j$, we conclude
\eqa
\Tr\bigl[(\hat\rho_{N,\beta_j,D}^{\fld})^2\bigr]
 &=\frac{Y_{N,D}^{\fld}(\beta_{j+1})}
         {Y_{N,D}^{\fld}(\beta_j)^2}\nl
 &\ge(1-2x_{j+1})\tau_D(2k_j,\beta_j)r_j^{N-2k_j}\nl
 &\ge(1-2x_{j+1})(1-x_j)
       \{(1-u_j)(1-u_{j+1})\}^{N/(2k_j)}.
\lb{longfield}
\ena
For the final inequality, use $r_j\le1$ to obtain
$r_j^{N-2k_j}\ge r_j^N$, and then apply the lower bound in
\rlb{rbound}.
Bernoulli's inequality gives
\eq
1-\Tr\bigl[(\hat\rho_{N,\beta_j,D}^{\fld})^2\bigr]
 \le2x_{j+1}+x_j+\frac{N}{2k_j}(u_j+u_{j+1})
 \le5\frac{N}{2k_j}x_j.
\lb{longfielderror}
\en
Together with \rlb{longperiodicerror}, this proves the proposition.
\end{proof}

The physical-purity estimate now reaches every longer chain, with a
loss proportional only to $N/(2k_j)$. In particular, at lengths of
order $k_j^2$ its defect is of order $k_jx_j$ and still tends to zero.
This is the length range needed for the final comparison; it is not
obtained from a finite-chain gap alone.

\subsection{Finite-chain gaps}

Write $\gamma_N^b(D)$ for the difference between the two lowest
eigenvalues of $H_N^b(D)$, counted with multiplicity.
The implication \rlb{puritygapbasic} converts a physical-purity
bound into simplicity and a gap. The periodic assertion was already
proved in Lemma~\ref{L:periodicgap}, under the periodic criterion
alone. We prove the equal-field assertion here.

For the equal-field chain and $2k_j\le N\le4k_j$, retain the first
lower bound in \rlb{longfield} and use
$r_j^{N-2k_j}\ge r_j^{2k_j}\ge(1-u_j)(1-u_{j+1})$.
The resulting purity defect is at most
\eq
2x_{j+1}+x_j+u_j+u_{j+1}
 \le2x_j+26x_j^2\le4x_j.
\lb{fieldinterval}
\en
It follows that the ground state is unique and
\eq
e^{-\beta_j\gamma_N^{\fld}(D)}
 \le\frac{4x_j}{1-4x_j}<(10\eta)^{2^j}
 =e^{-\gamma_0\beta_j}.
\lb{fieldgap}
\en
For the strict inequality, use $x_j\le1/100$ and
$10x_j=(10\eta)^{2^j}$; the final equality in \rlb{fieldgap}
follows directly from \rlb{gamma0} and \rlb{scales}.
The intervals $[2k_j,4k_j]$ cover every integer $N\ge2k_0$,
including every odd $N\ge2k_0+1$. This proves the finite-gap
assertions of Proposition~\ref{P:finiteapproximation}.
Both selected families therefore have unique ground states and a
common gap bound, uniformly in $D$. We next use the stronger purity
estimates to compare these ground states with the Gibbs states.

\subsection{From Gibbs states to ground states}

If $p_0$ is the largest eigenvalue of a Gibbs density matrix
$\hat\rho$, then $\Tr\hat\rho^2\le p_0$.
For a normalized unique ground vector $\psi_0$,
\eq
\norm{\hat\rho-|\psi_0\rangle\langle\psi_0|}_1
 =2(1-p_0)\le2(1-\Tr\hat\rho^2).
\lb{puritytostate}
\en
Applying Proposition~\ref{P:longpurity} gives
\eq
\abs{\omega_{N,D}^{b}(A)-\rho_{N,\beta_j,D}^{b}(A)}
 \le10\norm A\,\frac{N}{2k_j}x_j,
\lb{groundthermal}
\en
for the allowed lengths $N\ge2k_j$.
In particular, when $N$ is of order $\beta_j^2$, the right side is of
order $\beta_j e^{-\gamma_0\beta_j}$.
The purity estimate supplies this bound without a factor equal to the
full Hilbert-space dimension $3^N$.

\medskip\noindent
{\em Completion of the proof of Proposition~\ref{P:finiteapproximation}.}
The state $\nu_{\beta_j,D}$ was constructed from periodic chains
in Lemma~\ref{L:fixedtemperature}. The bound \rlb{thermalcomparison}
identifies it also as the endpoint-field Gibbs limit at fixed
$\beta_j$. The finite-chain gaps have now been proved.
For the remaining estimate, use
\eq
\begin{split}
\abs{\omega_{N,D}^{b}(A)-\nu_{\beta_j,D}(A)}
 &\le\abs{\omega_{N,D}^{b}(A)-\rho_{N,\beta_j,D}^{b}(A)}\\
 &\quad+\abs{\rho_{N,\beta_j,D}^{b}(A)-\nu_{\beta_j,D}(A)}.
\end{split}
\lb{approximationtriangle}
\en
Substituting \rlb{groundthermal} and \rlb{thermalcomparison}, and
using $10x_j=e^{-\gamma_0\beta_j}$, which follows directly from
\rlb{errorscales}, \rlb{gamma0}, and \rlb{scales}, proves
\rlb{finiteapproximation}. The condition $d_N(I)\ge k_j$ implies
$N\ge2k_j+m\ge2k_j$, so both estimates apply. This completes the
proof of the proposition.\hfill\qedm

We have proved the approximation stated in
Proposition~\ref{P:finiteapproximation}: the ground-state
expectations for both boundary conditions are close to
$\nu_{\beta_j,D}$, with two explicit errors. The state
$\nu_{\beta_j,D}$ is exactly translation invariant by
Lemma~\ref{L:fixedtemperature}. The next section uses this
invariance, together with the approximation, to identify the
limits of ground states of chains with endpoint fields.

\section{Limits of ground states with endpoint fields}
\label{S:limit}

For the chain with endpoint fields, assume \rlb{finitecriterion}.
Theorem~\ref{T:main} has already been proved using only the periodic
criterion. We now consider limits of subsequences of the expectations
$\omega_{2L+1,D}^{\fld}$. We show that every such limit is
translation invariant. Since it is also a ground state of the
infinite chain, Theorem~\ref{T:main} then identifies it with
$\omega_D$.

\subsection{Asymptotic translation invariance}

Let $A$ be local, and choose an interval $I$ containing the supports
of both $A$ and $\tau_1(A)$. Put $m=|I|$ and
$d_N(I)=\min\{s-\ell_N,r_N-t\}$ for $I=[s,t]$.
The periodic Gibbs limit is exactly translation invariant:
\eq
\nu_{\beta_j,D}(\tau_1(A))=\nu_{\beta_j,D}(A).
\lb{nutranslation}
\en
Apply Proposition~\ref{P:finiteapproximation} to the two observables,
subtract, and use \rlb{nutranslation}. For sufficiently large odd
$N$ with $d_N(I)\ge k_j$,
\eq
\begin{split}
&|\omega_{N,D}^{\fld}(\tau_1(A))-\omega_{N,D}^{\fld}(A)|\\
&\quad\le2\norm A\left\{
\frac{N}{2k_j}e^{-\gamma_0\beta_j}
+C_Ie^{(3+D_0)m\beta_j}q^{d_N(I)}\right\}.
\end{split}
\lb{asymptoticTIestimate}
\en
The two occurrences of $\nu_{\beta_j,D}(A)$ cancel by
\rlb{nutranslation}. Thus this step uses the translation invariance
of $\nu_{\beta_j,D}$, not its convergence as $j\uparrow\infty$.

For every sufficiently large $N$, choose the unique $j=j(N)$ with
\eq
k_j^2\le N<k_{j+1}^2=4k_j^2.
\lb{quadraticlengths}
\en
For a fixed central support, $d_N(I)\ge N/3$ for large $N$.
Also $N/(2k_j)\le2k_j$, and
$\beta_j=(\beta_0/k_0)k_j$. Thus both terms on the right of
\rlb{asymptoticTIestimate} tend to zero, uniformly in $D$.
In particular,
\eq
\lim_{\substack{N\uparrow\infty\\N\ {\rm odd}}}
 \sup_{|D|\le D_0}
 |\omega_{N,D}^{\fld}(\tau_1(A))-\omega_{N,D}^{\fld}(A)|=0.
\lb{asymptoticTI}
\en
It follows that every local limit of a subsequence of
$\omega_{2L+1,D}^{\fld}$ is translation invariant. It is also a
ground state: for fixed local $A$, the finite-chain
ground-state inequality involves $[H(D),A]$ once the endpoints have
receded beyond its support and adjacent sites. Theorem~\ref{T:main}
identifies every such limit with $\omega_D$. Compactness then gives
the limit \rlb{fieldlimit} along the full sequence of odd lengths.
The periodic limit \rlb{PBClimit} was proved in
Section~\ref{S:bulkconsequences}.

The argument identifies every subsequential limit, not just a
limit along one specially chosen subsequence. This is why the
whole sequence converges.

\subsection{The quantitative rate and uniformity in the parameter}
\label{S:quantitativerate}

We prove the bounds \rlb{localratePBC} and \rlb{localrateField}
together. For the periodic case, only \rlb{periodiccriterion} is
needed: the periodic part of Proposition~\ref{P:finiteapproximation}
follows from \rlb{exactpererror}, \rlb{longperiodicerror},
and the Gibbs-to-ground-state bound \rlb{puritytostate}.
None of these uses the boundary-field criterion.

Combine Proposition~\ref{P:finiteapproximation} and \rlb{nurate}.
For $j=j(N)$ in \rlb{quadraticlengths}, $b\in\{\per,\fld\}$, and the corresponding lengths (even for
$b=\per$ and odd for $b=\fld$), we obtain
\eq
\begin{split}
|\omega_{N,D}^{b}(A)-\omega_D(A)|
 \le C_I\norm A\bigl\{&k_j e^{-\gamma_0\beta_j}
       +\sqrt{k_ju_j}\\
       &+e^{(3+D_0)m\beta_j}q^{N/3}
       +e^{(3+D_0)m\beta_j/2}q^{k_j^2/2}\bigr\}.
\end{split}
\lb{combinedlocalrate}
\en
The support and length conditions hold for all sufficiently large $N$.
The first two terms decay exponentially in $k_j$, apart from the
displayed powers. The last two have a negative exponent quadratic
in $k_j$ and a positive exponent only linear in $k_j$.
Since $\sqrt N/2<k_j\le\sqrt N$, the right side is bounded by
$C_I\norm A\sqrt N\,e^{-c\sqrt N}$ after increasing $C_I$ and
choosing a sufficiently small $c>0$, independent of $I$.
For $N=2L$ or $N=2L+1$ and $L\ge1$, the inequalities
$\sqrt L\le\sqrt N\le\sqrt3\sqrt L$ then give
\rlb{localratePBC} and \rlb{localrateField}, after a further
increase of $C_I$. The length threshold may depend on $I$.
Every constant is uniform for $|D|\le D_0$ and on the unit ball
of $\calA_I$.

This completes Theorem~\ref{T:periodic} under the periodic criterion
and Theorem~\ref{T:boundary} under the combined criterion.
Both criteria are verified in the appendix. The result includes restoration
of the bulk symmetries broken by the equal endpoint fields. The limiting states are identified by the uniqueness theorem;
the convergence rate follows from its quantitative estimate
\rlb{nurate}.
\section{Open chains without endpoint magnetic fields}
\label{S:free}

In this section assume only the periodic criterion
\rlb{periodiccriterion}. We prove Theorem~\ref{T:free} without
assuming a four-dimensional low-energy space or any finite-chain
free-end gap. The distinction from the endpoint-field proof is that
physical Gibbs purity need not approach one. A strictly positive
lower bound will suffice because the limiting bulk state is pure.

\subsection{The empty-history vector and its leading overlap}

For the free-end Hamiltonian \rlb{Hfree}, write
\eq
W_{N,D}(\beta)=\Tr e^{-\beta H_N^{\free}(D)},\qquad
v_\beta^{\free}(x)=\mathbf1_{\{x=\varnothing\}},\qquad\norm{v_\beta^{\free}}=1.
\lb{freepartition}
\en
The boundary expansion with no field events gives
\eq
W_{N,D}(\beta)=\langle v_\beta^{\free},K_{\beta,D}^Nv_\beta^{\free}\rangle.
\lb{freetransfer}
\en
Define $c_j^{\free}=\norm{P_jv_{\beta_j}^{\free}}^2$. This overlap has an
immediate lower bound. The closing bond is bounded below by
$-2\one$, so $H_N^{\per}(D)\ge H_N^{\free}(D)-2\one$.
Eigenvalue comparison gives
$W_{N,D}(\beta_j)\ge e^{-2\beta_j}Z_{N,D}(\beta_j)$.
Divide by $\lambda_j^N$ and let even $N\uparrow\infty$.
The spectral estimate \rlb{residualall} gives
$Z_{N,D}(\beta_j)/\lambda_j^N\to1$, while
\rlb{freetransfer} and \rlb{operatorconcentration} give
$W_{N,D}(\beta_j)/\lambda_j^N\to c_j^{\free}$. Therefore
\eq
c_j^{\free}\ge e^{-2\beta_j}>0.
\lb{freeoverlap}
\en
In particular, \rlb{freetransfer} and \rlb{operatorconcentration}
yield, at either parity,
\eq
\left|\frac{W_{N,D}(\beta_j)}{\lambda_j^Nc_j^{\free}}-1\right|
 \le e^{2\beta_j}q^N.
\lb{freerelative}
\en
For even $N$, the expression inside the absolute value is
nonnegative, since all the even spectral contributions are
nonnegative. This exponentially crude overlap bound is sufficient
at the long lengths used below.

\subsection{A modified length--temperature iteration}

Define the physical free-end purity by
\eq
\tau_D^{\free}(N,\beta)=\frac{W_{N,D}(2\beta)}{W_{N,D}(\beta)^2}.
\lb{freepurity}
\en
For any finite probability list $(p_a)$, H\"older's inequality gives
$(\sum_a p_a^2)^3\le(\sum_a p_a)^2\sum_a p_a^4$.
Since $\sum_a p_a=1$, normalizing the squares cannot decrease purity.
Consequently,
\eq
\tau_D^{\free}(N,2\beta)\ge\tau_D^{\free}(N,\beta).
\lb{puritymonotone}
\en
Unlike the quadratic purification estimate near one, this inequality
is useful at any positive purity.

Choose a fixed even integer $R$ so large that
$Rk_0\log(1/q)>4\beta_0$, and put
\eq
n_j=Rk_j,\qquad
\delta_j=e^{4\beta_j}q^{n_j}
 =\exp\{-\bigl(Rk_0\log(1/q)-4\beta_0\bigr)2^j\}.
\lb{freescales}
\en
Then $0<\delta_j<1$ and $\sum_j\delta_j<\infty$.
Comparison of even spectral powers at the old temperature gives
\eq
W_{2n_j,D}(\beta_j)\le\lambda_j^{n_j}W_{n_j,D}(\beta_j).
\lb{freeolddouble}
\en
At the new temperature, retain the leading contribution and use
\rlb{freerelative}, with $\beta_{j+1}=2\beta_j$, to get
\eq
\begin{split}
W_{2n_j,D}(\beta_{j+1})
 &\ge\lambda_{j+1}^{2n_j}c_{j+1}^{\free}\\
 &\ge\frac{\lambda_{j+1}^{n_j}}{1+\delta_j}
                                      W_{n_j,D}(\beta_{j+1}).
\end{split}
\lb{freenewdouble}
\en
Dividing \rlb{freenewdouble} by the square of
\rlb{freeolddouble}, and then applying \rlb{puritymonotone}, yields
\eq
\begin{split}
\tau_D^{\free}(n_{j+1},\beta_{j+1})
 &\ge\tau_D^{\free}(2n_j,\beta_j)\\
 &\ge\tau_D^{\free}(n_j,\beta_j)
           \frac{r_j^{n_j}}{1+\delta_j}\\
 &\ge\tau_D^{\free}(n_j,\beta_j)
 \frac{\{(1-u_j)(1-u_{j+1})\}^{R/2}}{1+\delta_j}.
\end{split}
\lb{freeiteration}
\en
The last line uses \rlb{rbound} and $n_j=Rk_j$.
Every initial density matrix on $3^{n_0}$ dimensions has purity at
least $3^{-n_0}$. Therefore
\eq
\tau_D^{\free}(n_j,\beta_j)\ge \tau_*:=3^{-n_0}
 \prod_{\ell=0}^{\infty}
 \frac{\{(1-u_\ell)(1-u_{\ell+1})\}^{R/2}}{1+\delta_\ell}>0.
\lb{freepositivepurity}
\en
The product is positive because both $\sum_\ell u_\ell$ and
$\sum_\ell\delta_\ell$ converge, by \rlb{errorscales} and
\rlb{freescales}. Thus $\tau_*$ is a strictly positive
constant independent of $D$. Each factor in the product lies in
$(0,1)$, so $\tau_*\le3^{-n_0}<1$. Its numerical size is
not needed, and no additional finite-chain numerical input is required.

We next extend this lower bound to lengths of order $k_j^2$,
including odd lengths. If $N\ge n_j$, comparison of the spectral
powers at lengths $N$ and $n_j$ in \rlb{freetransfer} gives
\eq
W_{N,D}(\beta_j)\le\lambda_j^{N-n_j}W_{n_j,D}(\beta_j).
\lb{freeoldlong}
\en
For odd $N$, bound each signed spectral contribution by its absolute
value before comparing powers. At the new temperature,
\rlb{freerelative} at $N$ and $n_j$ gives
\eq
W_{N,D}(\beta_{j+1})\ge
 \lambda_{j+1}^{N-n_j}W_{n_j,D}(\beta_{j+1})
                       \frac{1-\delta_j}{1+\delta_j}.
\lb{freenewlong}
\en
It follows from \rlb{freepositivepurity}, \rlb{freeoldlong},
and \rlb{freenewlong} that
\eq
\tau_D^{\free}(N,\beta_j)\ge
 \tau_* r_j^{N-n_j}\frac{1-\delta_j}{1+\delta_j}.
\lb{freelongpurity}
\en
Choose $j=j(N)$ by \rlb{quadraticlengths}. For large $j$,
$N\ge n_j$. Moreover,
\eq
1\ge r_j^{N-n_j}\ge r_j^N
 \ge\{(1-u_j)(1-u_{j+1})\}^{N/(2k_j)}\longrightarrow1,
\lb{freerlong}
\en
uniformly for $k_j^2\le N<4k_j^2$: the exponent is at most
$2k_j$, while $k_ju_j\longrightarrow0$. Together with $\delta_j\to0$ from \rlb{freescales}, this
shows that, at all sufficiently large lengths of either parity,
\eq
\tau_D^{\free}(N,\beta_{j(N)})\ge \tau_*/2.
\lb{freepurityfinal}
\en
The iteration has preserved positive physical purity, not forced it
toward one. This is exactly the information needed in the final step.

\subsection{Local Gibbs convergence and arbitrary ground vectors}

Let $\rho_{N,\beta,D}^{\free}$ denote the free-end Gibbs expectation.
Its insertion formula is \rlb{fieldGibbsraw} with both endpoint
vectors replaced by $v_\beta^{\free}$. For a local operator $A$,
we use $m,d_{\rm L},d_{\rm R}$ as defined in \rlb{distances}. Equations
\rlb{operatorconcentration} and \rlb{freeoverlap} imply
\eq
\frac{\norm{K_{\beta_j,D}^{d}v_{\beta_j}^{\free}
                   -\lambda_j^dP_jv_{\beta_j}^{\free}}}
     {\lambda_j^d\sqrt{c_j^{\free}}}
 \le e^{\beta_j}q^d.
\lb{freevectorerror}
\en
Expanding both propagated endpoint vectors around their leading
components, and using \rlb{freerelative} for the denominator and
\rlb{Bnormalizednorm} for the insertion, gives, when
$e^{2\beta_j}q^N<1$,
\eq
\begin{split}
&|\rho_{N,\beta_j,D}^{\free}(A)-\nu_{\beta_j,D}(A)|\\
&\quad\le\frac{\norm A}{1-e^{2\beta_j}q^N}
 \left[
 e^{(3+D_0)m\beta_j}
 \{e^{\beta_j}(q^{d_{\rm L}}+q^{d_{\rm R}})
          +e^{2\beta_j}q^{d_{\rm L}+d_{\rm R}}\}
       +e^{2\beta_j}q^N\right].
\end{split}
\lb{freeGibbserror}
\en
The derivation is the same as that of \rlb{exactfielderror}, now
using the overlap bound \rlb{freeoverlap}. For $j=j(N)$ chosen
in \rlb{quadraticlengths} and a fixed support of $A$, the right
side tends to zero as $N\uparrow\infty$. Together with \rlb{nulimit},
this proves
\eq
\rho_{N,\beta_{j(N)},D}^{\free}(A)\longrightarrow\omega_D(A).
\lb{freeGibbslimit}
\en

Write $\hat\rho_{N,\beta,D}^{\free}$ for the corresponding density matrix,
using the same hat convention as in \rlb{Gibbsdensitymatrix}.
Its maximal eigenvalue $p_0$ is the Gibbs probability of each
individual ground-state eigenvector. Since
$\Tr(\hat\rho^2)\le p_0$, \rlb{freepurityfinal} gives
\eq
\hat\rho_{N,\beta_{j(N)},D}^{\free}
 \ge \frac{\tau_*}2\,|\psi\rangle\langle\psi|,
\lb{freegrounddomination}
\en
for every normalized ground vector $\psi$ and all sufficiently
large $N$. The inequality remains
valid when the ground eigenspace is degenerate: the Gibbs density
matrix acts there as $p_0$ times the identity.
For any choice of such ground vectors, write their expectations as
$\omega_{N,D}^{\free}$. Since $0<\tau_*/2<1$, equation
\rlb{freegrounddomination} gives a state $\sigma_{N,D}$ with
\eq
\rho_{N,\beta_{j(N)},D}^{\free}
 =\frac{\tau_*}2\,\omega_{N,D}^{\free}
  +\left(1-\frac{\tau_*}2\right)\sigma_{N,D}.
\lb{freeconvexdecomposition}
\en
Extend these finite-chain states outside $\Lambda_N$ in any way,
and take simultaneous local subsequential limits. By
\rlb{freeGibbslimit}, the limiting relation is
$\omega_D=(\tau_*/2)\omega'+(1-\tau_*/2)\sigma$.
The state $\omega_D$ is pure, hence extremal, so
$\omega'=\omega_D$; see also \cite[Theorem~2.3.19]{BR1}.
Every sequence of ground-vector states therefore has the same
local subsequential limit. If the supremum in \rlb{freelimit} failed to vanish, a violating
sequence of ground vectors would contradict this conclusion.
This proves Theorem~\ref{T:free}.

The argument neither counts the low-lying free-end states nor bounds
a gap above them. It proves a local limit even when the finite-chain Gibbs density matrix need not approach a
rank-one projection. The use of extremality in
the last step is qualitative and does not supply the rate proved for
periodic and endpoint-field chains.

\section{Discussion}
\label{S:scope}

We have proved uniqueness of the translation-invariant ground state
of the infinite $S=1$ Heisenberg antiferromagnet and its stability
under sufficiently small single-ion anisotropy. The state is pure,
locally unique and gapped, and its local expectations are continuous
in $D$. Ground states of the even periodic chains, odd open chains
with equal endpoint fields, and open chains without endpoint fields
all converge locally to this state. In the last case the conclusion
holds for every choice of ground vector, without a gap assumption
for the finite chains.

The development of these results illustrates several uses of the
concentration estimates developed in the OpenAI papers
\cite{OpenAIperiodic,OpenAIboundary} beyond the finite-chain gap
problem.
The local-insertion representation first allows a comparison of
expectation values under different boundary conditions. For open
chains without endpoint fields, a positive lower bound on Gibbs
purity is sufficient: extremality of the infinite-chain limiting
state replaces convergence of the finite-chain density matrix to
a single ground-state projection. Finally, the comparison of
leading transfer eigenvalues yields the free-energy estimate of
Lemma~\ref{L:freeenergy}. Combined with separated insertions and
the minimum-energy characterization, this estimate identifies an
arbitrary translation-invariant ground state, rather than only a
state obtained from a specified sequence of finite chains.

Theorem~\ref{T:main} identifies all translation-invariant algebraic
ground states, not only states produced by a selected finite-chain
construction. Purity gives a further immediate consequence. If a
ground state $\theta$ is invariant under translation by a positive
integer $p$, then
$p^{-1}\sum_{r=0}^{p-1}\theta\circ\tau_r$ is a one-site
translation-invariant ground state and is therefore $\omega_D$.
Since $\omega_D$ is pure, every term in this convex combination
must equal it. Thus no distinct ground state with any finite
translation period exists.

The periodic, equal-field, and free-end constructions all select
$\omega_D$. These assertions do not prove that every infinite-chain
ground state of $H(D)$, in the standard operator-algebraic sense,
is obtained from one of these constructions or is translation
invariant. Although the present method seems likely to extend to a
much broader class of boundary conditions, establishing uniqueness
among all infinite-chain ground states requires an additional
argument. No claim about arbitrary
nontranslation-invariant ground states is made here.

It is useful to return to the AKLT precedent discussed in the
introduction. The original argument identifies the states with
zero expectation of every positive local interaction, and hence
all translation-invariant ground states \cite{AKLT1988}.
The additional difficulty is to exclude states that satisfy the
operator-algebraic local stability condition without minimizing
each local interaction separately. Matsui resolved this problem
for the AKLT chain and proved uniqueness without assuming
translation invariance \cite{Matsui1997}. His argument uses the
structure of zero-energy states on the two half-infinite chains,
including quasi-equivalence of the zero-energy states on each
half-chain. Thus even in the frustration-free AKLT model,
uniqueness among all algebraic ground states required more than
the exact finite-chain ground-state description. For the Heisenberg
and $J$-$D$ chains, the local interactions cannot be minimized
simultaneously, and the corresponding exclusion of all
nontranslation-invariant ground states remains beyond the present
proof.

\appendix
\section{Verification of the finite criteria}
\label{S:numerics}

This appendix verifies the periodic criterion \rlb{periodiccriterion}
and the additional boundary criterion \rlb{boundarycriterion}
for one simultaneous choice of constants. All terminating decimals below denote exact
rational numbers. The original thermal and variational certificates are
used as inputs; their underlying matrix computations are not repeated.

\subsection{Input at the Heisenberg point}

Choose
\eq
h_0=\frac35,\qquad \beta_0=392,\qquad k_0=960,
\qquad \eta=\frac1{100}.
\lb{numericalchoices}
\en
Thus $2k_0=1920$. In this subsection $\lambda_0$, $c_0$, and
$e_{k_0}(0)$ are evaluated at $D=0$. The estimates imported from
\cite{OpenAIboundary} are
\eqa
S_0(k_0,\beta_0)&\ge1-0.000015,&
T_0(2k_0,\beta_0)&\ge1-0.000015,
\lb{seedperiodic0}\\
e_{k_0}(0)&\le0.001,&
\tau_0(2k_0,\beta_0)&\ge1-0.001.
\lb{seedboundary0}
\ena
Here $e_{k_0}(0)$ is the boundary defect at $D=0$ and
$\beta=\beta_0$, with the definition in \rlb{boundaryquantities}.
The first line follows from Proposition~3.4 of
\cite{OpenAIboundary}, at its level $j=5$, where its periodic defect
is strictly less than $0.000015$.
The derivation of the first line in that proposition uses only the
finite periodic inputs of \cite{OpenAIperiodic}; no endpoint-field
thermal or variational certificate is involved. Thus Theorems~\ref{T:main}, \ref{T:periodic}, and
\ref{T:free} require only periodic numerical inputs,
even though this convenient improved periodic scale is recorded in
\cite{OpenAIboundary}. The second line, which is needed only for
Theorem~\ref{T:boundary}, is the level-$5$ conclusion in its
table~(5.9).
That manuscript obtains the estimates from its finite thermal and
variational certificates and the periodic input of
\cite{OpenAIperiodic}.
Our iteration starts at this already improved scale.
All ratios and normalized spectral quantities used here are unchanged
by the scalar energy normalization of those manuscripts.

To convert the imported overlap estimate into the fourth physical ratio,
write
\eq
\mathcal C_0(k_0,\beta_0)
 =\frac{\lambda_0^{k_0}}{Z_{k_0,0}(\beta_0)}
      \frac{Y_{2k_0,0}(\beta_0)}
           {\lambda_0^{k_0}Y_{k_0,0}(\beta_0)}
 \ge(1-0.000015)(1-0.001)=0.998985015.
\lb{C0bound}
\en
For the first factor, the comparison of even eigenvalue powers used in
\rlb{maximumeigenvalue}, now at the initial temperature, gives
\eq
\frac{\lambda_0^{k_0}}{Z_{k_0,0}(\beta_0)}
 \ge\frac{Z_{2k_0,0}(\beta_0)}{Z_{k_0,0}(\beta_0)^2}
 =S_0(k_0,\beta_0).
\lb{seedleadingratio}
\en
For the second factor, retain the leading contribution to obtain
\eq
\frac{Y_{2k_0,0}(\beta_0)}{\lambda_0^{k_0}Y_{k_0,0}(\beta_0)}
 \ge\frac{\lambda_0^{k_0}c_0}{Y_{k_0,0}(\beta_0)}
 =1-e_{k_0}(0).
\lb{seedoverlapratio}
\en
Thus all four inequalities of \rlb{finitecriterion} have a strict
margin at $D=0$.

\subsection{Stability of the finite criterion in \texorpdfstring{$D$}{D}}

All ratios in \rlb{finitecriterion} are continuous functions of $D$ at
fixed $k_0,\beta_0,h_0$. Strict inequalities at $D=0$ therefore give the
criterion on some interval $[-D_0,D_0]$.
For an explicit interval we use the following simple bounds, which also
explain why no new transfer-spectrum calculation at $D\ne0$ is required.

For $b\in\{\per,\fld,\opp\}$,
$H_N^b(D)=H_N^b(0)+D\mathcal Q_N$, where
$0\le\mathcal Q_N:=\sum_{j\in\Lambda_N}Q_j\le N\one$.
If $\widetilde Z_{N,D}(\beta)$ denotes its partition function, min--max comparison
of the eigenvalues gives
\eq
\begin{array}{ll}
e^{-\beta ND}\widetilde Z_{N,0}(\beta)\le \widetilde Z_{N,D}(\beta)\le \widetilde Z_{N,0}(\beta),
 & D\ge0,\\[2mm]
\widetilde Z_{N,0}(\beta)\le \widetilde Z_{N,D}(\beta)
 \le e^{\beta N|D|}\widetilde Z_{N,0}(\beta),
 & D\le0.
\end{array}
\lb{traceperturbation}
\en
These are trace inequalities obtained from eigenvalue bounds; they do not
use operator monotonicity of exponentiation.
At length $2k$ and for either sign of $D$, they imply
\eqa
S_D(k,\beta)&\ge e^{-2\beta k|D|}S_0(k,\beta),&
\mathcal C_D(k,\beta)&\ge e^{-2\beta k|D|}\mathcal C_0(k,\beta),
 \lb{perturbS}\\
T_D(2k,\beta)&\ge e^{-4\beta k|D|}T_0(2k,\beta),&
\tau_D(2k,\beta)&\ge e^{-4\beta k|D|}\tau_0(2k,\beta).
 \lb{perturbT}
\ena
These estimates reduce the perturbation in $D$ to a comparison at
fixed finite lengths and inverse temperatures. No new computation of
the transfer spectrum at $D\ne0$ is required.
We now apply these bounds to the strict margins just established.

\subsection{An explicit interval and gap}

Take
\eq
D_0=\frac1{200000000}=5\times10^{-9}.
\lb{D0value}
\en
The trace bounds \rlb{perturbS}--\rlb{perturbT} apply for both signs
of $D$. With
\eq
t_0=2\beta_0k_0D_0=\frac{294}{78125}=0.0037632,
\lb{tseed}
\en
use $e^{-t_0}\ge1-t_0$ and $e^{-2t_0}\ge1-2t_0$.
The resulting lower bounds, valid uniformly for $|D|\le D_0$, are
\begin{center}
\begin{tabular}{c|c|c}
Quantity & Lower bound used & Exact result\\ \hline
$S_D(k_0,\beta_0)$
 &$(1-t_0)(1-0.000015)$&$0.996221856448$\\
$T_D(2k_0,\beta_0)$
 &$(1-2t_0)(1-0.000015)$&$0.992458712896$\\
$\tau_D(2k_0,\beta_0)$
 &$(1-2t_0)(1-0.001)$&$0.9914811264$\\
$\mathcal C_D(k_0,\beta_0)$
 &$(1-t_0)(0.998985015)$&$0.995225634591552$
\end{tabular}
\end{center}
The first three entries exceed $1-\eta=0.99$, and the last exceeds
$1-\eta/2=0.995$.
This verifies \rlb{finitecriterion}. The periodic inequalities complete
Theorem~\ref{T:main} by Section~\ref{S:uniqueness},
Theorem~\ref{T:periodic} by Sections~\ref{S:bulkconsequences}
and \ref{S:quantitativerate}, and Theorem~\ref{T:free} by
Section~\ref{S:free}. The additional two boundary inequalities
complete Theorem~\ref{T:boundary} by
Proposition~\ref{P:finiteapproximation} and
Section~\ref{S:limit}.
The common gap lower bound is
\eq
\gamma_0=\frac{\log10}{392}\simeq0.005874.
\lb{numericalgap}
\en
The finite thresholds supplied by Proposition~\ref{P:finiteapproximation}
are even
periodic $N\ge960$ and odd field-chain $N\ge1921$, that is,
$L\ge480$ for the periodic gap in Theorem~\ref{T:periodic}
and $L\ge960$ for the gap in Theorem~\ref{T:boundary}.
The value of $D_0$ is a conservative consequence of the finite trace
bounds; it is not intended to describe the maximal parameter interval
on which translation-invariant uniqueness and the boundary conclusions hold.

\section*{Acknowledgment}
\addcontentsline{toc}{section}{Acknowledgment}
I thank OpenAI for making the papers \cite{OpenAIperiodic,OpenAIboundary}
available. Their proofs of the periodic and boundary-field Haldane-gap
statements also open new possibilities for the mathematical study of
quantum spin chains.
This work is supported by JSPS Grants-in-Aid for Scientific Research
No.\ 25K07171.

\end{document}